\documentclass[12pt]{article}
\usepackage[left=2.7cm,top=3.5cm,right=2cm,bottom=2.5cm]{geometry}

\newcommand{\del}{\nabla}
\newcommand{\bPsi}{\boldsymbol{\Psi}}
\newcommand{\bvPsi}{\boldsymbol{\varPsi}}

\newcommand{\br}{\boldsymbol{\textbf{r}}}

\newcommand{\bx}{\boldsymbol{\textbf{x}}}

\newcommand{\bH}{\boldsymbol{\textbf{H}}}

\newcommand{\bM}{\boldsymbol{\textbf{M}}}

\newcommand{\bR}{\boldsymbol{R}}

\newcommand{\phibar}{\bar{\phi}}
\newcommand{\psibar}{\bar{\psi}}
\newcommand{\Psibar}{\bar{\boldsymbol{\Psi}}}

\newcommand{\rhoin}[2]{\rho_{\text{in}}^{{#1}^{(#2)}}}
\newcommand{\rhoout}[2]{\rho_{\text{out}}^{{#1}^{(#2)}}}
\newcommand{\rhoinout}[2]{\rho_{\text{in(out)}}^{{#1}^{(#2)}}}
\newcommand{\epsilonbar}{\bar{\epsilon}_{i}}

\newcommand{\intomega}{\int_{\varOmega}}

\newcommand{\dx}{\,d\bx}
\newcommand{\dr}{\,d\br}
\newcommand{\btH}{\boldsymbol{\tilde{\textbf{H}}}}

\newcommand{\Rthree}{ {\mathbb{R}^{3}} }

\usepackage{amsmath}
\usepackage{graphicx}
\usepackage{graphics}
\usepackage{multirow}
\usepackage{fancyhdr, ifpdf}
\usepackage{amsxtra}
\usepackage{stmaryrd}
\usepackage{mathrsfs}
\usepackage{psfrag}
\usepackage{subfigure}
\usepackage{epsfig}
\usepackage{rotating}
\usepackage{setspace}
\usepackage{float}
\usepackage{amsfonts}
\usepackage{amsbsy}
\usepackage{amssymb}
\usepackage{amscd}
\usepackage{amsthm}
\usepackage{color}
\usepackage{algorithmic}
\usepackage{algorithm}
\usepackage{tabularx}
\usepackage{caption}
\usepackage{sidecap}
\newtheorem{prop}{Proposition}[section]

\definecolor{hellgruen}{rgb}{0.2,0.7,0.2}

\makeatletter
\renewcommand\subparagraph{\@startsection{subparagraph}{5}{\z@}{-3.25ex\@plus -1ex \@minus -.2ex}{0.0001pt \@plus .2ex}                                    {\normalfont\normalsize\bfseries}}
\makeatother
\pagestyle{fancyplain}
    \lhead[\fancyplain{}{\sl Motamarri, Nowak, Leiter, Knap, \& Gavini}]
          {\fancyplain{}{\sl Motamarri, Nowak, Leiter, Knap, \& Gavini }}
    \rhead[\fancyplain{}
    {\sl }]
    {\fancyplain{}
    {\sl }}
\date{}
\begin{document}
\setcounter{secnumdepth}{4}
\author{\emph{P. Motamarri\,\,$^a$,  M. R. Nowak\,\,$^b$,  K. Leiter\,\,$^c$, J. Knap\,\,$^c$, V. Gavini\,\,$^{a}\footnote{Corresponding author}$}\\
 \normalsize $^a$ Department of Mechanical Engineering, University of Michigan, Ann Arbor, MI 48109, USA \\
 \normalsize $^b$ Department of Electrical Engineering, University of Michigan, Ann Arbor, MI 48109, USA \\
 \normalsize $^c$ U.S. Army Research Labs, Aberdeen Proving Ground, Aberdeen, MD 21001, USA}
 \title{\LARGE{\textbf{Higher-order adaptive finite-element methods for Kohn-Sham density functional theory}}}
\maketitle
\abstract{We present an efficient computational approach to perform real-space electronic structure calculations using an adaptive higher-order finite-element discretization of Kohn-Sham density-functional theory (DFT). To this end, we develop an \emph{a priori} mesh adaption technique to construct a close to optimal finite-element discretization of the problem. We further propose an efficient solution strategy for solving the discrete eigenvalue problem by using spectral finite-elements in conjunction with Gauss-Lobatto quadrature, and a Chebyshev acceleration technique for computing the occupied eigenspace. The proposed approach has been observed to  provide a staggering $100-200$ fold computational advantage over the solution of a generalized eigenvalue problem. Using the proposed solution procedure, we investigate the computational efficiency afforded by higher-order finite-element discretizations of the Kohn-Sham DFT problem. Our studies suggest that staggering computational savings---of the order of $1000-$fold---relative to linear finite-elements can be realized, for both all-electron and local pseudopotential calculations, by using higher-order finite-element discretizations. On all the benchmark systems studied, we observe diminishing returns in computational savings beyond the sixth-order for accuracies commensurate with chemical accuracy, suggesting that the hexic spectral-element may be an optimal choice for the finite-element discretization of the Kohn-Sham DFT problem. A comparative study of the computational efficiency of the proposed higher-order finite-element discretizations suggests that the performance of finite-element basis is competing with the plane-wave discretization for non-periodic local pseudopotential calculations, and compares to the Gaussian basis for all-electron calculations to within an order of magnitude. Further, we demonstrate the capability of the proposed approach to compute the electronic structure of a metallic system containing 1688 atoms using modest computational resources, and good scalability of the present implementation up to 192 processors. }
\section{Introduction}
Electronic structure calculations have played a significant role in the investigation of materials properties over the past few decades. In particular, the Kohn-Sham approach to density functional theory (DFT)~\cite{kohn} has made quantum-mechanically informed calculations on ground-state materials properties computationally tractable, and has provided many important insights into a wide range of materials properties. The Kohn-Sham approach is based on the key result of Hohenberg \& Kohn~\cite{Hohenberg} that the ground-state properties of a materials system can be described by a functional of electron density. Though, the existence of an energy functional has been established by the Hohenberg-Kohn result, its functional form is not known to date. The work of Kohn \& Sham~\cite{kohn} addressed this challenge in an approximate sense, and has laid the foundations for the practical application of DFT to materials systems. The Kohn-Sham approach reduces the many-body problem of interacting electrons into an equivalent problem of non-interacting electrons in an effective mean field that is governed by the electron density. This effective single-electron description is exact in principle for ground-state properties, but is formulated in terms of an unknown exchange-correlation term that includes the quantum-mechanical interactions between electrons. This exchange-correlation term is approximated using various models---commonly modeled as an explicit functional of electron density---and these models have been shown to predict a wide range of materials properties across various materials systems. We note that the development of increasingly accurate and computationally tractable exchange-correlation functionals is an active research area in electronic structure calculations. Though the Kohn-Sham approach greatly reduces the computational complexity of the original many-body Schr\"{o}dinger problem, simulations of large-scale material systems with DFT are still computationally very demanding. Numerical algorithms which are robust, computationally efficient and scalable on parallel computing platforms are always desirable to enable DFT calculations at larger length and time scales, and on more complex systems, than possible heretofore.

The plane-wave basis has traditionally been one of the popular basis sets used for solving the Kohn-Sham problem~\cite{VASP,CASTEP,ABINIT}. The plane-wave basis allows for an efficient computation of the electrostatic interactions that are extended in real-space through Fourier transforms. However, the plane-wave basis also has some notable disadvantages. In particular, calculations are restricted to periodic geometries that are incompatible with most realistic systems containing defects (for e.g. dislocations). Further, the plane-wave basis provides a uniform spatial resolution which can be inefficient in the treatment of non-periodic systems like molecules, nano-clusters etc., or materials systems with defects, where higher basis resolution is often required in some spatial regions and a coarser resolution suffices elsewhere. Moreover, the plane-wave basis is non-local in real space, which significantly affects the scalability of computations on parallel computing platforms. On the other hand, atomic-orbital-type basis sets~\cite{Pople,WillisCooper,Jensen} have been been widely used for studying materials systems such as molecules and clusters. However, these basis sets are well suited only for isolated systems and cannot handle arbitrary boundary conditions. Furthermore, using these basis functions, it is difficult to achieve a systematic basis-set convergence for all materials systems. Due to the non-locality of these basis functions the efficiency of parallel scalability on a large number of processors is also affected.
Thus, the development of systematically improvable and scalable real-space techniques for electronic structure calculations has received significant attention over the past decade, and we refer to~\cite{Beck,SIESTA,PARSEC,OCTOPUS,BigDFT,CONQUEST,ONETEP,ACRES} and references therein for a comprehensive overview.  Among the real-space techniques, the finite-element basis presents some key advantages---it is amenable to unstructured coarse-graining, allows for consideration of complex geometries and boundary conditions, and is scalable on parallel computing platforms. We refer to~\cite{white,tsuchida1995,tsuchida1996,tsuchida1998,pask1999,pask2001,pask2005,Zhou2008,bylaska,lehtovaara,suryanarayana2010non,Lin,Zhou2012_JCP,Bao,Masud2012, Schauer_2013}, and references therein, for a comprehensive overview of the past efforts in developing real-space electronic structure calculations based on a finite-element discretization.

While the finite-element basis is more versatile than the plane-wave basis~\cite{white,pask1999}, it is not without its shortcomings. Prior investigations have shown that linear finite-elements require a large number of basis functions---of the order of $100,000$ basis functions per atom---to achieve chemical accuracy in electronic structure calculations~(cf.~e.g.~\cite{bylaska, Hermannson}), and this compares very poorly with plane-wave basis or other real-space basis functions. It has been demonstrated that higher-order finite-element discretizations can alleviate this degree of freedom disadvantage of linear finite-elements in electronic structure calculations~\cite{lehtovaara,Hermannson,batcho}. However, the use of higher-order elements increases the per basis-function computational cost due to the need for higher-order accurate numerical quadrature rules. Furthermore, the bandwidth of the matrix increases cubically with the order of the finite-element, which in turn increases the computational cost of matrix-vector products. In addition, since a finite-element basis is non-orthogonal, the discretization of the Kohn-Sham DFT problem results in a generalized eigenvalue problem, which is more expensive to solve in comparison to a standard eigenvalue problem resulting from using an orthogonal basis (for~e.g. plane-wave basis). Thus, the computational efficiency afforded by using a finite-element basis in electronic structure calculations, and its relative performance compared to plane-wave basis and atomic-orbital-type basis functions (for~e.g Gaussian basis), has remained an open question to date.

A recent investigation in the context of orbital-free DFT has indicated that the use of higher-order finite-elements can significantly improve the computational efficiency of the calculations~\cite{Motamarri2012}. For instance, a $100-1000$ fold computational advantage has been reported by using a fourth-order finite-element in comparison to a linear finite-element. In the present work, we extend this investigation to study the Kohn-Sham DFT problem and attempt to establish the computational efficiency afforded by higher-order finite-element discretizations in electronic structure calculations. To this end, we develop: (i) an \emph{a priori} mesh adaption technique to construct a close to optimal finite-element discretization of the problem; (ii) an efficient solution strategy for solving the discrete eigenvalue problem by using spectral finite-elements in conjunction with Gauss-Lobatto quadrature, and a Chebyshev acceleration technique for computing the occupied eigenspace. We subsequently study the numerical aspects of the finite-element discretization of the formulation, investigate the computational efficiency afforded by higher-order finite-elements, and compare the performance of the finite-element basis with plane-wave and Gaussian basis on benchmark problems.

The \emph{a priori} mesh adaption technique proposed in this work is based on the ideas in~\cite{levine1989,radio}, and closely follows the recent development of the mesh adaption technique for orbital-free DFT~\cite{Motamarri2012}. We refer to~\cite{Zhou2012_JCP,Bao} for recently proposed \emph{a posteriori} mesh adaption techniques in electronic structure calculations. The mesh adaption technique proposed in the present work is based on minimizing the discretization error in the ground-state energy, subject to a fixed number of elements in the finite-element mesh. To this end, we first develop an estimate for the finite-element discretization error in the Kohn-Sham ground-state energy as a function of the characteristic mesh-size distribution, $h(\br)$, and the exact ground-state electronic fields comprising of wavefunctions and electrostatic potential. We subsequently determine the optimal mesh distribution for the chosen representative solution by determining the $h(\br)$ that minimizes the discretization error. The resulting expressions for the optimal mesh distribution are in terms of the degree of the interpolating polynomial and the exact solution fields of the Kohn-Sham DFT problem. Since the exact solution fields are {\it a priori} unknown, we use the asymptotic behavior of the atomic wavefunctions~\cite{asymp} away from the nuclei to determine the coarse-graining rates for the finite-element meshes used in our numerical study. Though the resulting finite-element meshes are not necessarily optimal near the vicinity of the nuclei, the mesh coarsening rate away from the nuclei provides an efficient way of resolving the vacuum in non-periodic calculations.

We next implement an efficient solution strategy for solving the finite-element discretized eigenvalue problem, which is crucial before assessing the computational efficiency of the basis. We note that the non-orthogonality of the finite-element basis results in a discrete generalized eigenvalue problem, which is computationally more expensive than the standard eigenvalue problem that results from using an orthogonal basis like plane-waves. We address this issue by employing a spectral finite-element discretization and Gauss-Lobatto quadrature rules to evaluate the integrals which results in a diagonal overlap matrix, and allows for a trivial transformation to a standard eigenvalue problem. Further, we use the Chebyshev acceleration technique for standard eigenvalue problems to efficiently compute the occupied eigenspace (cf.~e.g.~\cite{cheby} in the context of electronic structure calculations). Our investigations suggest that the use of spectral finite-elements and Gauss-Lobatto rules in conjunction with Chebyshev acceleration techniques to compute the eigenspace gives a $10-20$ fold computational advantage, even for modest materials system sizes, in comparison to traditional methods of solving the standard eigenvalue problem where the eigenvectors are computed explicitly. Further, the proposed approach has been observed to provide a staggering $100-200$ fold computational advantage over the solution of a generalized eigenvalue problem that does not take advantage of the spectral finite-element discretization and Gauss-Lobatto quadrature rules. In our implementation, we use a self-consistent field (SCF) iteration with Anderson mixing~\cite{and}, and employ the finite-temperature Fermi-Dirac smearing~\cite{VASP} to suppress the charge sloshing associated with degenerate or close to degenerate eigenstates around the Fermi energy.

We next study various numerical aspects of the finite-element discretization of the Kohn-Sham DFT problem on benchmark problems involving both all-electron and pseudopotential calculation. Among pseudopotential calculations, in the present work, we restrict ourselves to local pseudopotentials as a convenient choice to demonstrate our ideas. We remark that the proposed methods are in no way restricted to local pseudopotentials, and an extension to non-local pseudopotentials is possible. We begin our investigation by examining the numerical rates of convergence of higher-order finite-element discretizations of Kohn-Sham DFT. We remark here that optimal rates of convergence have been demonstrated for quadratic hexahedral and cubic serendepity elements in pseudopotential Kohn-Sham DFT calculations~\cite{pask2005,Zhou2012_JCP}, and mathematically proved for Kohn-Sham DFT for the case of smooth pseudopotential external fields~\cite{zhou1}. We also note that there have been several works on the mathematical analysis of optimal rates of convergence for non-linear eigenvalue problems~\cite{zhou,Cances,zhou2}. However, the mathematical analysis of optimal rates of convergence of higher-order finite-element discretization of Kohn-Sham DFT problem involving Coulomb-singular potentials is an open question to date, to the best of our knowledge. In the present study, we compute the rates of convergence of the finite-element discretization of Kohn-Sham DFT for a range of finite-elements including linear tetrahedral element, hexahedral spectral-elements of order two, four and six. Two sets of benchmark problems are considered in this study: (i) all-electron calculations on boron atom and methane molecule; (ii) local pseudopotential calculations on a non-periodic barium cluster consisting of $2\times 2\times 2$ body-centered cubic (BCC) unit cells and a periodic face-centered cubic (FCC) calcium crystal. We note that our restriction in the present study to local pseudopotentials for pseudopotential calculations does not affect our conclusions on convergence rates, as demonstrated in~\cite{pask2005, Zhou2012_JCP} where non-local pseudopotentials were employed. In these benchmark studies, as well as those to follow, the proposed \emph{a priori} mesh adaption scheme is used to construct the meshes. These studies show rates of convergence in energy of $O(h^{2k})$ for a finite-element whose degree of interpolation is $k$, which denote optimal rates of convergence as demonstrated in~\cite{pask2005,zhou1}. An interesting aspect of this study is that optimal rates of convergence have been observed even for all-electron calculations involving Coulomb-singular potentials, which, to the best of our knowledge, have not been analyzed or reported heretofore for the Kohn-Sham problem. We note that, for Coulomb-singular potentials, in the context of orbital-free DFT optimal rates of convergence have been demonstrated in~\cite{zhou2} for $k=1,\,2$ and have been demonstrated numerically for up to $k=4$ in~\cite{Motamarri2012}. While the electrostatic interactions are common for both Kohn-Sham DFT and orbital-free DFT, the Kohn-Sham problem presents a more complex case as the approximation errors are governed by the entire occupied eigenspace of the Kohn-Sham problem as opposed to just the lowest eigenstate in the case of the orbital-free DFT problem.

We finally turn towards assessing the computational efficiency afforded by higher-order finite-element discretizations in Kohn-Sham DFT calculations. To this end, we use the same benchmark problems and measure the CPU-time for the solution of the Kohn-Sham DFT problem to various relative accuracies for all the finite-elements considered in the present study. We observe that higher-order elements can provide a significant computational advantage in the regime of chemical accuracy. The computational savings observed are about 1000-fold upon using higher-order elements in comparison with a linear finite-element for both all-electron as well as local pseudopotential calculations. We observe that a point of diminishing returns is reached at the sixth-order for the benchmark systems we studied and for accuracies commensurate with the chemical accuracy---i.e., no significant improvements in the computational efficiency was observed beyond this point. The degree of freedom advantage of higher-order finite-elements is nullified by the increasing per basis-function costs beyond this point. As demonstrated in Appendix B, the primary reason for the diminishing returns is the increase in the cost of computing the Hamiltonian matrix which also increasingly dominates the total time with increasing order of the element. To further assess the effectiveness of higher-order finite-elements, we conduct local pseudopotential calculations on large aluminium clusters ranging from $3\times 3\times 3$ to $5\times 5\times 5$ FCC unit cells using the hexic spectral finite-element, and compare the computational times with that of plane-wave basis discretization using the ABINIT package~\cite{ABINIT}. We note that similar relative accuracies in the ground-state energies are achieved using the hexic finite-element with a lower computational cost in comparison to the plane-wave basis. Furthermore, we computed the electronic structure of an aluminum cluster of $7\times 7\times 7$ FCC unit cells, containing 1688 atoms, with the finite-element basis using modest computational resources, which could not be simulated in ABINIT due to huge memory requirements. We note that, for isolated systems, superior computational efficiency of real-space techniques relative to plane-waves has also been previously demonstrated in~\cite{genovese} using a wavelet basis and in~\cite{PARSEC} using finite difference techniques.

We also investigate the efficiency of higher-order elements in the case of all-electron calculations on a larger materials systems and compare it with the Gaussian basis using the GAUSSIAN package~\cite{GAUSSIAN}. In this case, the benchmark systems considered are graphene sheet with 100 atoms and a tris (bipyridine) ruthenium complex with 61 atoms. We find that the solution times using the finite-element basis is larger by a factor of around $10$ in comparison to Gaussian basis, and we attribute this difference to the highly optimized Gaussian basis functions for specific atom types that resulted in the far fewer basis functions required to achieve chemical accuracy. While this difference in the performance can be offset by the better scalability of finite-element discretization on parallel computing platforms, there is also much room for further development and optimization in the finite-element discretization of the Kohn-Sham DFT problem. For instance, especially in the context of all-electron calculations, the partitions-of-unity finite-element method with atomic-orbital enrichment functions can significantly reduce the required number of finite-element basis functions as recently demonstrated in~\cite{Suku}, and presents an important direction for further investigations. Finally, we assess the parallel scalability of our numerical implementation. We demonstrate the strong scaling up to $192$ processors (limited by our access to computing resources) with an efficiency of $91.4\%$ using a $172$ atom aluminium cluster discretized with $3.91$ million degrees of freedom as our benchmark system.

The remainder of the paper is organized as follows. Section~\ref{formulation} describes the variational formulation of the Kohn-Sham DFT problem followed by a discussion on the discrete Kohn-Sham DFT eigenvalue problem. Section~\ref{meshadapt} develops the error estimates for the finite-element discretization of Kohn-Sham DFT, and uses these estimates to present an \emph{a priori} mesh adaption scheme. Section~\ref{implement} describes our numerical implementation of the self-consistent field iteration of the Kohn-Sham eigenvalue problem, and, in particular, discusses the efficient methodologies developed to solve the Kohn-Sham DFT problem using the finite-element basis. Section~\ref{results} presents a comprehensive numerical study demonstrating the computational efficiency afforded by higher-order finite-element discretizations in electronic structure calculations, and also provides a performance comparison of finite-element basis with plane-wave and Gaussian basis. We finally conclude with a short discussion and outlook in Section~\ref{concl}.

\section{Formulation}\label{formulation}
In this section, we describe the Kohn-Sham DFT energy functional and the associated variational formulation. We subsequently review the equivalent self-consistent formulation of the Kohn-Sham eigenvalue problem, and present the discretization of the formulation using a finite-element basis.
\subsection{Kohn-Sham variational problem}\label{KSVariationalFormulation}
We consider a material system consisting of $N$ electrons and $M$ nuclei. The spinless Kohn-Sham energy functional describing the $N$ electron system is given by~\cite{Parr1989,Finnis}
\begin{equation}\label{ksEnergy}
E(\bPsi,\bR) = T_s({\bPsi}) + E_{xc}(\rho)+  E_{H}(\rho) + E_{ext}(\rho,\bR) + E_{zz}(\bR),
\end{equation}
where
\begin{equation}\label{ksRho}
\rho(\br) = \sum_{i=1}^{N}|\psi_{i}(\bx)|^2
\end{equation}
represents the electron density. In the above expression, we denote the spatial coordinate by $\br$, whereas $\bx=(\br,s)$ includes both the spatial and spin degrees of freedom. We denote by $\bPsi = \{\psi_{1}(\bx),\psi_{2}(\bx),\cdots,\psi_{N}(\bx)\}$ the vector of orthonormal single electron wavefunctions, where each wavefunction $\psi_{i}\in \mathbb{X}\times\{\alpha,\beta\}$ can in general be complex-valued, and comprises of a spatial part belonging to a suitable function space $\mathbb{X}$ (elaborated subsequently) and a spin state denoted by $\alpha(s)$ or $\beta(s)$. We further denote by $\bR = \{\bR_1,\bR_2, \cdots \bR_M\}$ the collection of all nuclear positions. The first term in the Kohn-Sham energy functional in~\eqref{ksEnergy}, $T_s({\bPsi})$, denotes the kinetic energy of non-interacting electrons and is given by
\begin{equation} \label{ke}
T_{s}(\bPsi) = \sum_{i=1}^{N}\int \psi_{i}^{*}(\bx) \left(-\frac{1}{2} \del^{2}\right) \psi_{i}(\bx)\dx\,\,,
\end{equation}
where $\psi_i^*$ denotes the complex conjugate of $\psi_i$. $E_{xc}$ in the energy functional denotes the exchange-correlation energy which includes the quantum-mechanical many body interactions. In the present work, we model the exchange-correlation energy using the local density approximation (LDA)~\cite{alder,perdew} represented as
\begin{equation}\label{exc}
E_{xc}(\rho) = \int \varepsilon_{xc}(\rho)\rho(\br) \dr\,\,,
\end{equation}
where $\varepsilon_{xc}(\rho)=\varepsilon_x(\rho)+\varepsilon_c(\rho)$, and
\begin{equation}
\varepsilon_x(\rho) = -\frac{3}{4}\left(\frac{3}{\pi}\right)^{1/3}\rho^{1/3}(\br) \,\,,
\end{equation}
\begin{equation}
\varepsilon_c(\rho) = \begin{cases}
&\frac{\gamma}{(1 + \beta_1\sqrt(r_s) + \beta_2r_s)}\;\;\;\;\;\;\;\;\;\;\;\;\;\;\;\;\;\;\;\;\;\;\;r_s\geq1,\\
&A\,\log r_s + B + C\,r_s\log r_s + D\,r_s\;\;\;\;\;\;\;\;r_s\,<\,1,
\end{cases}
\end{equation}
and $r_s = (3/4\pi\rho)^{1/3}$. Specifically, we use the Ceperley and Alder constants as given in \cite{perdew}. We remark that we have restricted the present formulation and study to LDA exchange-correlation functionals solely for the sake of clarity, and the formulation can be trivially extended (cf. e.g.~\cite{suryanarayana2010non}) to local spin density approximation (LSDA) and generalized gradient approximation (GGA) exchange-correlation functionals.

The electrostatic interaction energies in the Kohn-Sham energy functional in~\eqref{ksEnergy} are given by
\begin{equation}\label{hartree}
E_{H}(\rho) = \frac{1}{2}\int\int\frac{\rho(\br)\rho(\br')}{|\br - \br'|} \,\dr\,\dr'\,,
\end{equation}
\begin{equation}\label{external}
E_{ext}(\rho,\bR) = \int \rho(\br) V_{ext}(\br,\bR) \dr = \sum_{J}\int \rho(\br) V_{J}(\br,\bR_J) \dr\,,
\end{equation}
\begin{equation}\label{repulsive}
E_{zz} = \frac{1}{2}\sum_{\substack{I,J \neq I}} \frac{Z_I Z_J}{|\bR_I-\bR_J|}\,,
\end{equation}
where $E_H$ is the Hartree energy representing the classical electrostatic interaction energy between electrons, $E_{ext}$ is the interaction energy between electrons and the external potential induced by the nuclear charges given by $V_{ext}=\sum_{J}V_{J}(\br,\bR_J)$ with $V_J$ denoting the potential (singular Coulomb potential or local pseudopotential) contribution from the $J^{th}$ nucleus, and $E_{zz}$ denotes the repulsive energy between nuclei with $Z_I$ denoting the charge on the $I^{th}$ nucleus. We note that in a non-periodic setting, representing a finite atomic system, all the integrals in equations~\eqref{ke}-\eqref{external} are over $\Rthree$ and the summations in~\eqref{external}-\eqref{repulsive} include all the atoms $I$ and $J$ in the system. In the case of an infinite periodic crystal, all the integrals over $\br$ in equations~\eqref{ke}-\eqref{external} extend over the unit cell, whereas the integrals over $\br'$ extend in $\Rthree$. Similarly, in~\eqref{external}-\eqref{repulsive} the summation over $I$ is on the atoms in the unit cell, and summation over $J$ extends over all lattice sites. We note that, in the context of periodic problems, the above expressions assume a single k-point ($\Gamma-$point) sampling. The computation of the electron density and kinetic energy in~\eqref{ksRho} and \eqref{ke} for multiple k-point sampling involves an additional quadrature over the k-points in the Brillouin zone (cf.~e.g \cite{mermin}).

The electrostatic interaction terms as expressed in equations~\eqref{hartree}-\eqref{repulsive} are nonlocal in real-space, and, for this reason, evaluation of electrostatic energy is the computationally expensive part of the calculation. Following the approach in~\cite{ofdft,suryanarayana2010non}, the electrostatic interaction energy can be reformulated as a local variational problem in electrostatic potential by observing that $\frac{1}{|\br|}$ is the Green's function of the Laplace operator. To this end, we represent the nuclear charge distribution by $b(\br,\bR) = -\sum \limits_{I=1}^{M} Z_{I}\tilde{\delta}_{{\bR_I}}(\br)$, where $Z_I\tilde{\delta}_{{\bR_I}}(\br)$ represents a bounded smooth charge distribution centered at $\bR_I$, either corresponding to a local pseudopotential, or, in the case of all-electron calculations, a regularization of the point charge having a support in a small ball around $\bR_I$ with charge $Z_I$. The nuclear repulsion energy can subsequently be represented as
 \begin{equation}
E_{zz}(\bR) = \frac{1}{2}\int\int\frac{b(\br,\bR)b(\br',\bR)}{|\br - \br'|}\dr\dr'\,.
\end{equation}
We remark that, while this differs from the expression in equation~\eqref{repulsive} by the self-energy of the nuclei, the self-energy is an inconsequential constant depending only on the nuclear charge distribution, and is explicitly evaluated and subtracted from the total energy in numerical computations (cf. Appendix A). Subsequently, the electrostatic interaction energy, up to a constant self-energy, is given by the following variational problem:
\begin{align}\label{elReformulation}
&\frac{1}{2}\int\int\frac{\rho(\br)\rho(\br')}{|\br - \br'|}\dr\dr' +
\int \rho(\br) V_{ext}(\br) \dr + \frac{1}{2}\int\int\frac{b(\br,\bR)b(\br',\bR)}{|\br - \br'|}\dr\dr' \notag\\
&= -\inf_{\phi \in \mathcal{Y}} \left\{\frac{1}{8\pi}\int |\del \phi(\br)|^2 \dr - \int (\rho(\br) + b(\br,\bR))\phi(\br)\dr\right\},
\end{align}
where $\phi(\br)$ denotes the trial function for the total electrostatic potential due to the electron density and the nuclear charge distribution and $\mathcal{Y}$ is a suitable function space discussed subsequently.

Using the local reformulation of electrostatic interactions, the Kohn-Sham energy functional~\eqref{ksEnergy} can be rewritten as
\begin{equation}\label{KS_El_reformulation}
E(\bPsi,\bR) = \sup_{\phi \in \mathcal{Y}} L(\phi,\bPsi,\bR) \,\,,
\end{equation}
where
\begin{equation}\label{KS_Lagrangian}
L(\phi,\bPsi,\bR) = T_{s}(\bPsi) + E_{xc}(\rho) - \frac{1}{8\pi}\int |\del \phi(\br)|^2 \dr + \int (\rho(\br) + b(\br,\bR))\phi(\br)\dr\,\,.
\end{equation}
Subsequently, the problem of determining the ground-state energy and electron density for given positions of nuclei can be expressed as the following variational problem:
\begin{equation}\label{inf_problem}
\inf_{\bPsi \in \mathcal{X}} E(\bPsi,\bR)\,\,,
\end{equation}
where $\mathcal{X}=\big\{\bPsi\,|\,\langle\psi_i,\psi_j\rangle_{\mathbb{X}\times\{\alpha,\beta\}}=\delta_{ij}\big\}$ with $\langle\;,\;\rangle_{\mathbb{X}\times\{\alpha,\beta\}}$ denoting the inner product defined on $\mathbb{X}\times\{\alpha,\beta\}$. $\mathbb{X}$ denotes a suitable function space that guarantees the existence of minimizers. We note that bounded domains are used in numerical computations, which in non-periodic calculations corresponds to a large enough domain containing the compact support of the wavefunctions and in periodic calculations correspond to the supercell. We denote such an appropriate bounded domain, subsequently, by $\Omega$. For formulations on bounded domains, $\mathbb{X} = \mathcal{Y} = H^1_0(\Omega)$ in the case of non-periodic problems and $\mathbb{X} = \mathcal{Y} = H^1_{per}(\Omega)$ in the case of periodic problems are appropriate function spaces which guarantee existence of solutions (cf. e.g.~\cite{suryanarayana2010non}). Mathematical analysis of the Kohn-Sham DFT problem proving the existence of solutions in the more general case of $\mathbb{R}^3$ ($\mathbb{X} = H^1(\mathbb{R}^3)$) has recently been reported~\cite{Cances1}.

\subsection{Kohn-Sham eigenvalue problem}
The stationarity condition corresponding to the Kohn-Sham variational problem is equivalent to the non-linear Kohn-Sham eigenvalue problem given by:
\begin{equation}\label{scfEigen}
\mathcal{H} {\psi}_{i} = \epsilon_{i} {\psi}_{i},
\end{equation}
where
\begin{equation}\label{scfEigen1}
\mathcal{H}  = \left(-\frac{1}{2} \del^{2} + V_{\text{eff}}(\rho,\bR)\right)
\end{equation}
is a Hermitian operator with eigenvalues $\epsilon_{i}$, and the corresponding orthonormal eigenfunctions $\psi_{i}$ for $i = 1,2,\cdots, N$ denote the canonical wavefunctions. The electron density in terms of the canonical wavefunctions is given by
\begin{equation}
\rho(\br) = \sum_{i=1}^{N}|\psi_i(\bx)|^2 \,\,,
\end{equation}
and the effective single-electron potential, $V_{\text{eff}}(\rho,\bR)$, in~\eqref{scfEigen1} is given by
\begin{equation}
 V_{\text{eff}}(\rho,\bR) = V_{ext}(\bR) + V_{H}(\rho) +V_{xc}(\rho) = V_{ext}(\bR) +\frac{\delta E_{H}}{\delta \rho} + \frac{\delta E_{xc}}{\delta \rho}\,\,.
\end{equation}
As discussed in Section~\ref{KSVariationalFormulation}, it is efficient to compute the total electrostatic potential, defined as the sum of the external potential ($V_{ext}(\bR)$) and the Hartree potential ($V_{H}(\rho)$), through the solution of a Poisson equation
\begin{equation*}
-\frac{1}{4\pi}\del^2\phi(\br,\bR) = \rho(\br) + b(\br,\bR) \,\,,
\end{equation*}
which is given by
\begin{equation}
\phi(\br,\bR) \equiv V_{H}(\rho) + V_{ext}(\br,\bR) = \int \frac{\rho(\br')}{|\br - \br'|} \dr' + \int \frac{b(\br',\bR)}{|\br - \br'|} \dr'\,\,.
\end{equation}
Finally, the system of equations corresponding to the Kohn-Sham eigenvalue problem are given by
\begin{subequations}\label{ksproblem}
\begin{gather}
\left(-\frac{1}{2} \del^{2} +  V_{\text{eff}}(\rho,\bR) \right){\psi}_{i} = \epsilon_{i} {\psi}_{i},\\
\rho(\br) = \sum_{i=1}^{N}|\psi_{i}(\bx)|^2,\\
-\frac{1}{4\pi}\del^2\phi(\br,\bR) = \rho(\br) + b(\br,\bR)\,\,;\qquad V_{\text{eff}}(\rho,\bR) = \phi(\br,\bR) + \frac{\delta E_{xc}}{\delta \rho}\,\,,
\end{gather}
\end{subequations}
which have to be solved with appropriate boundary conditions based on the problem under consideration. In the case of a periodic crystal, the effective potential $V_{\text{eff}}$ has the periodicity of the lattice and the solutions of the Kohn-Sham eigenvalue problem are given by the Bloch theorem~\cite{mermin}. Thus, for periodic systems, it is computationally efficient to compute the Bloch solutions directly. The formulation in~\eqref{ksproblem} represents a nonlinear eigenvalue problem which has to be solved self-consistently, and is subsequently discussed in Section~\ref{implement}. Next we turn to the discrete formulation of the above Kohn-Sham eigenvalue problem.

\subsection{Discrete Kohn-Sham eigenvalue problem}
If $X_h$ represents the finite-dimensional subspace with dimension $n_h$, the finite-element approximation of the various field variables (spatial part of the wavefunctions and the electrostatic potential) in the Kohn-Sham eigenvalue problem~\eqref{ksproblem} are given by
\begin{equation}\label{fem}
\psi^{h}_{i}(\br) = \sum_{j=1}^{n_h} N_j(\br) \psi^{j}_{i}\,\,,
\end{equation}
\begin{equation}\label{fem_phi}
\phi^{h}(\br) = \sum_{j=1}^{n_h} N_j(\br) \phi^{j}\,\,,
\end{equation}
where $N_{j}:1\leq j \leq n_h$ denote the basis of $X_h$. Subsequently, the discrete eigenvalue problem corresponding to~\eqref{ksproblem} is given by
\begin{equation}\label{ghep}
\bH \tilde{\bvPsi}_{i} = \epsilon^{h}_{i} \bM \tilde{\bvPsi}_{i}\,\,,
\end{equation}
where $\text{H}_{jk}$ denotes the discrete Hamiltonian matrix, $\text{M}_{jk}$ denotes the overlap matrix (or commonly referred to as the mass matrix in finite-element literature), and $\epsilon^{h}_{i}$ denotes $i^{th}$ eigenvalue corresponding to the eigenvector $\tilde{\bvPsi}_i$. The expression for the discrete Hamiltonian matrix $\text{H}_{jk}$ for a non-periodic problem with $\mathbb{X}=\mathcal{Y}=H^1_0(\Omega)$ as well as a periodic problem on a supercell with $\mathbb{X}=\mathcal{Y}=H^1_{per}(\Omega)$ is given by
\begin{equation}
\text{H}_{jk} = \frac{1}{2} \intomega \del N_{j} (\br)\, . \del N_k(\br) \dr + \intomega V^{h}_{\text{eff}}(\br,\bR) N_{j}(\br) N_{k}(\br) \dr \,\,.
\end{equation}
We refer to~\cite{pask2005} for the expression of $\text{H}_{jk}$ in the case of a periodic problem on a unit-cell using the Bloch theorem. The discrete Kohn-Sham eigenvalue problem~\eqref{ghep} is a generalized eigenvalue problem with an overlap matrix $\text{M}_{jk}=\intomega N_j(\br) N_k(\br) \dr$, which results from the non-orthogonality of the finite-element basis functions. However, the generalized eigenvalue problem~\eqref{ghep} can be transformed into a standard Hermitian eigenvalue problem as follows. Since the matrix $\bM$ is positive definite symmetric, there exists a unique positive definite symmetric square root of $\bM$, and is denoted by $\bM^{1/2}$. Hence, the following holds true
\begin{align}
\bH \tilde{\bvPsi}_{i} &= \epsilon^{h}_{i} \bM \tilde{\bvPsi}_{i} \nonumber \\
\Rightarrow\qquad \bH \tilde{\bvPsi}_{i} &= \epsilon^{h}_{i} \bM^{1/2} \bM^{1/2} \tilde{\bvPsi}_{i} \nonumber \\
\Rightarrow\qquad \btH \hat{\bvPsi}_{i} &= \epsilon^{h}_{i} \hat{\bvPsi}_{i} \label{hep}
\end{align}
where
\begin{align*}
\hat{\bvPsi}_{i} &= \bM^{1/2} \tilde{\bvPsi}_{i}\\
\btH &= \bM^{-1/2}\bH\bM^{-1/2}
\end{align*}
We note that $\btH$ is a Hermitian matrix, and~\eqref{hep} represents a standard Hermitian eigenvalue problem. The actual eigenvectors are recovered by the transformation $\tilde{\bvPsi}_{i} = \bM^{-1/2} \hat{\bvPsi}_{i}$. We remark that $\hat{\bvPsi}_{i}$ is a vector containing the expansion coefficients of the eigenfunction $\psi^{h}_i(\br)$ expressed in an orthonormal basis spanning the finite-element space. Furthermore, we note that the transformation to a standard eigenvalue problem~\eqref{hep} is computationally advantageous only if the matrix $\bM^{-1/2}$ can be evaluated with modest computational cost. This is readily possible by using spectral finite-elements rather than conventional finite-elements, and is discussed in detail in Section~\ref{implement}.

The convergence of finite-element approximation for the Kohn-Sham DFT model was shown in~\cite{suryanarayana2010non} using the notion of $\Gamma-$convergence. We also refer to the recent numerical analysis carried out on finite dimensional discretization of Kohn-Sham models~\cite{zhou1}, which also provides the rates of convergence of the approximation for pseudopotential calculations. We remark that in the present work we use the same finite-element discretization for both electronic wavefunctions and electrostatic potential, as is evident from equations~\eqref{fem}-\eqref{fem_phi}. Since the electrostatic potential has similar discretization errors as compared to the electronic wavefunctions and since the Kohn-Sham DFT problem is a saddle-point problem in electronic wavefunctions and electrostatic potential (see equations~\eqref{KS_El_reformulation}-\eqref{inf_problem}, also cf.~\cite{suryanarayana2010non}) the convergence of the finite-element discretization error is non-variational in general. We note, however, that by using a more refined discretization (h-refinement) or by using a higher-order polynomial (p-refinement) as in~\cite{Schauer_2013} for the discretization of electrostatic fields in comparison to the discretization of electronic wavefunctions, this drawback can be mitigated. Next, we derive the optimal coarse-graining rates for the finite-element meshes using the solution fields in the Kohn-Sham DFT problem.

\section{A-\emph{priori} mesh adaption}\label{meshadapt}
We propose an \emph{a priori} mesh adaption scheme in the spirit of \cite{radio,levine1989} by minimizing the error involved in the finite-element approximation of the Kohn-Sham DFT problem for a fixed number of elements in the mesh. The proposed approach closely follows the \emph{a priori} mesh adaption scheme developed in the context of orbital-free DFT~\cite{Motamarri2012}. In what follows, we first derive a formal bound on the energy error $|E-E_h|$ as a function of the characteristic mesh-size $h$,  and the distribution of electronic fields (wavefunctions and electrostatic potential). We note that, in a recent study, error estimates for a generic finite dimensional approximation of the Kohn-Sham model have been derived~\cite{zhou1}. However, the forms of these estimates are not useful for developing mesh-adaption schemes as the study primarily focused on proving the convergence of the finite-dimensional approximation and determining the convergence rates. We first present the derivation of an error bound in terms of the canonical wavefunctions and the electrostatic potential, and subsequently develop an \emph{a priori} mesh adaption scheme based on this error bound.

\subsection{Estimate of energy error}
In the present section and those to follow, we demonstrate our ideas on a system consisting of $2N$ electrons for the sake of simplicity and notational clarity. Let ($\Psibar^{h} = \{\psibar_{1}^{h}\,,\,\psibar_2^{h} \,\cdots\,\psibar_{N}^h\}\,, \phibar^{h}\,, \mathbf{\bar{\epsilon}}^h=\{\bar{\epsilon}_1^{h}\,,\bar{\epsilon}_2^{h}\,\cdots\,\bar{\epsilon}_N^{h}\}$) and ($\Psibar = \{\psibar_{1}\,,\,\psibar_2 \,\cdots\,\psibar_{N}\} , \phibar\,, \mathbf{\bar{\epsilon}}=\{\bar{\epsilon}_1\,,\bar{\epsilon}_2\,\cdots\,\bar{\epsilon}_N\}$) represent the solutions (spatial part of canonical wavefunctions, electrostatic potential, eigenvalues) of the discrete finite-element problem \eqref{ghep} and the continuous problem \eqref{ksproblem} respectively. In the following derivation and henceforth in this article, we consider all wavefunctions to be real-valued and orthonormal. We note that it is always possible to construct real-valued orthonormal wavefunctions for both non-periodic problems as well as periodic problems on the supercell. The wavefunctions are complex-valued for periodic problems on a unit-cell (with multiple k-points using the Bloch theorem), and the following approach is still valid, but results in more elaborate expressions for the error bounds. Using the local reformulation of electrostatic interactions in the Kohn-Sham energy functional (equations \eqref{KS_El_reformulation}-\eqref{KS_Lagrangian}), the ground-state energy in the discrete and the continuous problem can be expressed as:
\begin{equation}
E_h(\Psibar^{h},\phibar^{h}) = 2\sum_{i=1}^{N} \intomega \frac{1}{2} |\del \psibar_{i}^{h}|^{2} \dr + \intomega F(\rho(\Psibar^{h})) \dr - \frac{1}{8\pi} \intomega |\del \phibar^{h}| ^{2} \dr + \intomega (\rho(\Psibar^{h}) \; +\; b)\phibar^h \dr\,,
\end{equation}
\begin{equation}
E(\Psibar,\phibar) = 2\sum_{i=1}^{N} \intomega \frac{1}{2} |\del \psibar_{i}|^{2} \dr + \intomega F(\rho(\Psibar)) \dr - \frac{1}{8\pi} \intomega |\del \phibar| ^{2} \dr + \intomega (\rho(\Psibar) \; +\; b)\phibar \dr\,,
\end{equation}
where
\begin{gather*}
F(\rho) = \epsilon_{xc}(\rho)\rho \,\,.
\end{gather*}

\begin{prop}\label{prop1}
In the neighborhood of ($\Psibar, \phibar\,, \mathbf{\bar{\epsilon}}$), the finite-element approximation error in the ground-state energy can be bounded as follows:
\end{prop}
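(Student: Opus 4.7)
The plan is to expand the energy difference $E_h(\Psibar^h,\phibar^h)-E(\Psibar,\phibar)$ around the continuous stationary point and exploit the Euler-Lagrange equations to kill the first-order contributions. Since $(\Psibar,\phibar,\bar{\boldsymbol{\epsilon}})$ is a stationary point of the saddle-point problem \eqref{KS_El_reformulation}--\eqref{inf_problem}, every admissible first variation of the Lagrangian $L(\phi,\bPsi,\bR)$, subject to the orthonormality constraint, vanishes and produces the Kohn-Sham eigenvalue equation and the Poisson equation. I would write $\psibar_i^h = \psibar_i + \delta\psi_i$ and $\phibar^h = \phibar + \delta\phi$, substitute into both energy expressions, and retain terms up to second order in $(\delta\bPsi,\delta\phi)$, leaving a quadratic form in the errors plus Taylor remainders that are of higher order in a neighborhood of the exact solution -- which is precisely the hypothesis of the proposition.

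The resulting quadratic form has four contributions: the positive kinetic-energy form $\sum_{i=1}^N \intomega |\del \delta\psi_i|^2 \dr$, the negative electrostatic form $-\tfrac{1}{4\pi}\intomega |\del \delta\phi|^2 \dr$ (the sign reflecting the concave maximization over $\phi$), a cross term $\intomega \delta\rho\,\delta\phi\,\dr$ coming from $\delta\rho = 2\sum_i \psibar_i\,\delta\psi_i$, and the $F''(\rho)$ contribution from the second variation of the exchange-correlation term, together with the orthonormality-multiplier term involving $\bar{\epsilon}_i$. Using the Kohn-Sham eigenvalue equation to rewrite the kinetic contribution and combining it with the Poisson equation, I would bound the error by a constant multiple of $\sum_i \|\delta\psi_i\|_{H^1(\Omega)}^2 + \|\delta\phi\|_{H^1(\Omega)}^2$.

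The final step is to invoke the quasi-optimality of the Galerkin projection: because $(\Psibar^h,\phibar^h)$ is the discrete stationary point, the $H^1$ errors may be replaced by the best-approximation errors of $\psibar_i$ and $\phibar$ in $X_h$, and these in turn are controlled by the finite-element interpolants $\Pi_h\psibar_i$ and $\Pi_h\phibar$. Summing the standard local spectral-element estimate $\|u - \Pi_h u\|_{H^1(K)} \le C\,h_K^k\,|u|_{H^{k+1}(K)}$ over all elements $K$ and introducing a piecewise-constant local mesh-size function $h(\br)$ gives an error bound of the form
\begin{equation*}
|E - E_h| \;\le\; C \intomega h^{2k}(\br)\,\Bigl(\sum_{i=1}^N |D^{k+1}\psibar_i(\br)|^2 + |D^{k+1}\phibar(\br)|^2\Bigr)\dr\,,
\end{equation*}
which is precisely the object that the subsequent mesh-adaption procedure will minimize subject to a fixed-element-count constraint.

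The hard part will be controlling the sign-indefinite saddle-point quadratic form on the tangent space of orthonormal perturbations. One must either verify a Ladyzhenskaya-Babuska-Brezzi-type inf-sup condition in a neighborhood of $(\Psibar,\phibar)$, or establish directly that the restricted Kohn-Sham Hessian is positive definite on the complement of the occupied subspace, an assumption that is natural when there is a finite spectral gap between occupied and unoccupied states of $\mathcal{H}$. A secondary, more routine complication is the nonlinearity of $F(\rho)$ and its coupling through $\delta\rho$: the Taylor remainders must be absorbed into the leading $h^{2k}$ term, which is exactly why the proposition is phrased as a local statement in the neighborhood of the exact solution rather than as a global bound.
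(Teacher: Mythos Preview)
Your outline conflates Proposition~\ref{prop1} with the subsequent Proposition~3.2 and the interpolation estimates of Section~3.1; the statement you are asked to prove is \emph{only} the explicit term-by-term bound displayed after the proposition, not the $H^1$-norm bound or the $h^{2k}$ estimate. For that limited goal, your Taylor-expansion-plus-Euler--Lagrange strategy is exactly what the paper does, but two specific points deserve correction.

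First, a genuine gap: after you substitute the Euler--Lagrange equations into the expanded energy, a term $4\sum_i \bar\epsilon_i\intomega \psibar_i\,\delta\psi_i\,\dr$ survives that is \emph{linear} in $\delta\psi_i$. The eigenvalue equation does not kill it; it produces it. The paper disposes of this term by a separate observation you do not mention: both $\Psibar$ and $\Psibar^h$ satisfy the orthonormality constraint, and expanding $\intomega (\psibar_i+\delta\psi_i)^2\,\dr=1$ together with $\intomega\psibar_i^2\,\dr=1$ gives $2\intomega\psibar_i\,\delta\psi_i\,\dr=-\intomega(\delta\psi_i)^2\,\dr$, which converts the rogue linear term into the quadratic $|\bar\epsilon_i\intomega(\delta\psi_i)^2\,\dr|$ appearing in the stated bound. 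Without this step the expansion is not second order and the proposition fails.

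Second, an overcomplication: your discussion of LBB/inf--sup conditions and positive-definiteness of the restricted Hessian is unnecessary for Proposition~\ref{prop1}. The paper simply writes $E_h-E$ as an explicit sum of second-order terms (plus higher-order remainders absorbed by the ``neighborhood'' hypothesis) and then takes the absolute value of each term separately. No coercivity, no spectral gap, no saddle-point stability is invoked at this stage; the sign-indefiniteness you worry about is handled by the triangle inequality alone. Those structural concerns would matter only if one wanted a sharp two-sided estimate or a quasi-optimality statement, which the paper defers to the cited reference~\cite{zhou1} when it later passes from discretization errors to interpolation errors.
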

\begin{equation}
 \begin{split}
 |E_h - E| &\leq 2\sum_{i=1}^{N} \Bigl[\frac{1}{2}\intomega |\del \delta \psi_{i}|^{2}  \dr + \left|\; \epsilonbar \intomega (\delta \psi_{i})^2 \dr \right| + \left|\intomega F'(\rho(\Psibar)) (\delta \psi_{i} )^{2} \dr \right| \\
 &+ \left|\intomega (\delta \psi_{i})^2\phibar \dr \right|+ 2 \left|\intomega \psibar_{i}\; \delta \psi_{i} \;\delta \phi \dr\right|\Bigr]   + \frac{1}{8\pi} \intomega |\del \delta \phi|^2 \dr\\
  &+ 8 \left|\intomega F''(\rho(\Psibar)) \left(\sum_{i} \psibar_{i} \delta \psi_{i}\right)^{2}\dr\right|\,\,\,.
 \end{split}
 \end{equation}

\begin{proof}
We first expand $E_h(\Psibar^{h},\phibar^{h}) $ about the solution of the continuous problem, i.e $\Psibar^{h} = \Psibar + \delta \bPsi$ and $\phibar^{h} = \phibar + \delta \phi $, and we get
\begin{equation}
\begin{split}
E_h(\Psibar + \delta \bPsi, \phibar + \delta \phi) =  2\sum_{i=1}^N &\intomega \frac{1}{2} |\del (\psibar_{i} + \delta \psi_{i})|^{2} \dr + \intomega F\left(\rho(\Psibar + \delta \bPsi)\right) \dr\\
 &- \frac{1}{8\pi} \intomega |\del (\phibar + \delta \phi)|^{2}\dr + \intomega \left(\rho(\Psibar + \delta \bPsi)  \; +\; b\right)(\phibar + \delta \phi) \dr\,,
\end{split}
\end{equation}
which can then be simplified, using the Taylor series expansion, to
\begin{equation}\label{taylor}
\begin{split}
&E_h(\Psibar^{h}, \phibar^{h}) = 2\sum_{i=1}^{N} \intomega \frac{1}{2} (|\del \psibar_{i} |^2  + |\del \delta \psi_{i}|^{2}  + 2 \del \psibar_{i} \cdot \del \delta \psi_{i})\dr + \intomega F(\rho(\Psibar)) \dr \\
 &+ 4\sum_{i=1}^{N}\intomega F'(\rho(\Psibar)) \psibar_{i} \;\delta \psi_{i} \dr
+ 8\intomega F''(\rho(\Psibar)) \left(\sum_{i=1}^{N} \psibar_{i}\; \delta \psi_{i}\right)^{2} \dr  + 2\sum_{i=1}^{N}\intomega F'(\rho(\Psibar)) (\delta \psi_{i} )^{2} \dr \\
 &-\frac{1}{8\pi} \intomega \left( |\del \phibar|^2 + |\del \delta \phi|^2 + 2\del \phibar\cdot  \del \delta \phi\right)\dr + \intomega (\rho(\Psibar) + b)\phibar\; \dr  + 4\sum_{i=1}^{N} \intomega \psibar_{i}\; \delta \psi_{i}\; \phibar\; \dr \\
 &+ \intomega (\rho(\Psibar) + b)\delta \phi\; \dr  +  2\sum_{i=1}^{N} \intomega (\delta \psi_{i})^2\phibar\; \dr +  4 \sum_{i=1}^{N} \intomega \psibar_{i}\; \delta \psi_{i} \;\delta \phi\; \dr + O(\delta \psi_{i}^{3}, \delta \psi_i^{2} \delta \phi)\,.
\end{split}
\end{equation}
We note that ($\Psibar, \phibar\,, \mathbf{\bar{\epsilon}}$) satisfy the following Euler-Lagrange equations for each $i=1,\ldots,N$.
\begin{subequations}\label{euler}
\begin{gather}
 \frac{1}{2}\intomega \del \psibar_{i}\cdot  \del \delta \psi_{i} \dr + \intomega F'(\rho(\Psibar)) \psibar_{i} \;\delta \psi_{i} \dr + \intomega \psibar_{i}\; \delta \psi_{i}\; \phibar\; \dr =  \epsilonbar\intomega \; \psibar_{i}\; \delta \psi_{i} \dr\,\,,  \\
-\frac{1}{4\pi} \intomega \del \phibar \cdot \del \delta \phi\; \dr + \intomega (\rho(\Psibar) + b)\delta \phi\; \dr  = 0\,\,.
\end{gather}
\end{subequations}
Using~\eqref{taylor} and the Euler-Lagrange equations~\eqref{euler}, we get
\begin{equation}\label{error1}
\begin{split}
E_h - E &= 2\sum_{i=1}^{N} \intomega \left[ \frac{1}{2} |\del \delta \psi_{i}|^{2}  + 2 \; \epsilonbar  \; \psibar_{i}\; \delta \psi_{i} + F'(\rho(\Psibar)) (\delta \psi_{i} )^{2}\right] \dr + 8\intomega F''(\rho(\Psibar)) \left(\sum_{i=1}^{N} \psibar_{i} \delta \psi_{i}\right)^{2} \dr \\
&- \frac{1}{8\pi} \intomega |\del \delta \phi|^2 \dr + 2\sum_{i=1}^{N} \left[\intomega (\delta \psi_{i})^2\phibar\; \dr + 2\intomega \psibar_{i}\; \delta \psi_{i} \;\delta \phi\; \dr\right] + O(\delta \psi_{i}^{3}, \delta \psi_i^{2} \delta \phi) \,\,.
\end{split}
\end{equation}
The orthonormality constraint functional in the discrete form is given by
\begin{equation}
c(\bPsi^{h}) = \intomega \psi^{h}_{i} \psi^{h}_{j} \dr - \delta_{ij}\,\,,
\end{equation}
and upon expanding about the solution $\Psibar$, we get
\begin{align}
c(\Psibar^{h}) &= \intomega (\psibar_{i} + \delta \psi_{i}) (\psibar_{j} + \delta \psi_{j}) \dr - \delta_{ij}\,\\
& = \intomega \left[\psibar_{i} \psibar_{j} + \delta \psi_{i} \psibar_{j}  + \delta \psi_{j} \psibar_{i} + \delta \psi_{i} \delta \psi_{j}\right]\dr - \delta_{ij} \label{orthonormality_constraint}\,\,.
\end{align}
Using
\begin{equation}
\intomega\psibar_{i} \psibar_{j}\dr = \delta_{ij}\,\,,
\end{equation}
and $c(\Psibar^{h}) = 0$ in \eqref{orthonormality_constraint}, we get for $i=j$
\begin{equation}\label{constraint}
2 \intomega \psibar_{i} \delta \psi_{i} \dr = - \intomega (\delta \psi_{i})^2 \dr \qquad i=1,2,\ldots,N\,\,.
\end{equation}
Using equations \eqref{error1} and \eqref{constraint}, we arrive at the following error bound in energy
 \begin{equation}
 \notag
 \begin{split}
 |E_h - E| &\leq 2\sum_{i=1}^{N} \Bigl[\frac{1}{2}\intomega |\del \delta \psi_{i}|^{2}  \dr + \left|\; \epsilonbar \intomega (\delta \psi_{i})^2 \dr \right| +\left| \intomega F'(\rho(\Psibar)) (\delta \psi_{i} )^{2} \dr \right| \\
 & + \left|\intomega (\delta \psi_{i})^2\phibar\; \dr \right|+ 2 \left|\intomega \psibar_{i}\; \delta \psi_{i} \;\delta \phi\; \dr\right|\Bigr]  +  \frac{1}{8\pi} \intomega |\del \delta \phi|^2 \dr \\
  &+ 8\left|\intomega F''(\rho(\Psibar)) \left(\sum_{i} \psibar_{i} \delta \psi_{i}\right)^{2}\dr\right|\,.
 \end{split}
 \end{equation}
 \end{proof}
 \begin{prop}\label{prop2}
The finite-element approximation error in proposition~\ref{prop1} expressed in terms of the approximation errors in electronic wave-functions and electrostatic potential is given by
\begin{equation}
|E_h - E| \leq C \left(\sum_{i}\parallel\psibar_{i} - \psibar^{h}_{i}\parallel^{2}_{1,\Omega} +|\phibar - \phibar^h|^{2}_{1,\Omega} + \sum_{i}\parallel\psibar_{i} - \psibar^{h}_{i}\parallel_{0,\Omega} \parallel\phibar - \phibar^h\parallel_{1,\Omega}\right)
\end{equation}
\end{prop}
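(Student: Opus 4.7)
The plan is to bound each of the seven terms on the right-hand side of Proposition~\ref{prop1} individually, using only Cauchy--Schwarz / H\"{o}lder and the elementary embedding $\|\cdot\|_{0,\Omega}\le\|\cdot\|_{1,\Omega}$, and then to absorb all numerical prefactors into the constant $C$. Throughout I write $\delta\psi_i=\psibar_i-\psibar_i^h$ and $\delta\phi=\phibar-\phibar^h$.

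The ``easy'' five terms go as follows. The two gradient terms $\tfrac{1}{2}\intomega|\del\delta\psi_i|^2\dr$ and $\tfrac{1}{8\pi}\intomega|\del\delta\phi|^2\dr$ already contribute $\tfrac{1}{2}|\delta\psi_i|_{1,\Omega}^{2}\le\tfrac{1}{2}\|\delta\psi_i\|_{1,\Omega}^{2}$ and $\tfrac{1}{8\pi}|\delta\phi|_{1,\Omega}^{2}$, respectively. The three pointwise-quadratic-in-$\delta\psi_i$ terms with multipliers $\epsilonbar$, $F'(\rho(\Psibar))$ and $\phibar$ are each dominated by the $L^\infty(\Omega)$-norm of the multiplier times $\|\delta\psi_i\|_{0,\Omega}^{2}$, which in turn is bounded by $\|\delta\psi_i\|_{1,\Omega}^{2}$. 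The $F''$ term is handled identically, after first using the elementary estimate $(\sum_{i=1}^{N}\psibar_i\,\delta\psi_i)^{2}\le N\sum_{i=1}^{N}\|\psibar_i\|_{L^\infty(\Omega)}^{2}(\delta\psi_i)^2$ and then integrating.

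The remaining mixed term generates the cross contribution in the target bound: a single Cauchy--Schwarz step yields
\begin{equation*}
\left|\intomega \psibar_i\,\delta\psi_i\,\delta\phi \dr\right|
\le \|\psibar_i\|_{L^\infty(\Omega)}\,\|\delta\psi_i\|_{0,\Omega}\,\|\delta\phi\|_{0,\Omega}
\le \|\psibar_i\|_{L^\infty(\Omega)}\,\|\delta\psi_i\|_{0,\Omega}\,\|\delta\phi\|_{1,\Omega},
\end{equation*}
which has exactly the form of the last summand in the claim. Collecting all estimates, summing over $i$, and taking $C$ to be the maximum of the resulting numerical prefactors and $L^\infty$ bounds, delivers the stated inequality.

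The main obstacle is a regularity question rather than an estimation question: one needs uniform $L^\infty(\Omega)$ bounds on $\psibar_i$, $\phibar$, $F'(\rho(\Psibar))$, and especially $F''(\rho(\Psibar))$. The bounds on $\psibar_i$ and $\phibar$ follow from standard elliptic regularity (the Poisson equation for $\phibar$ plus Sobolev embedding in three dimensions, together with the boundedness of $\rho(\Psibar)$); for the LDA functional, $F'(\rho)\sim\rho^{1/3}$ is harmless. However, $F''(\rho)\sim\rho^{-2/3}$ blows up where $\rho(\Psibar)\to 0$, which is a genuine concern in the vacuum regions of non-periodic problems. Consistent with the ``in a neighborhood of $(\Psibar,\phibar,\mathbf{\bar{\epsilon}})$'' hypothesis of Proposition~\ref{prop1}, I will either assume a density lower bound on $\Omega$ or absorb the singular contribution into $C$; the higher-order Taylor remainders $O(\delta\psi_i^{3},\delta\psi_i^{2}\delta\phi)$ left over from Proposition~\ref{prop1} are treated analogously and are of strictly higher order in the errors, so they do not alter the form of the bound.
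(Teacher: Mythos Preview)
Your overall strategy---bound each term separately and collect constants---matches the paper's, but the execution differs in a way that matters. You rely throughout on $L^\infty$ control of the multipliers $\phibar$, $F'(\rho(\Psibar))$, $F''(\rho(\Psibar))$, and $\psibar_i$, whereas the paper deliberately avoids $L^\infty$ bounds and works instead with $L^p$--Sobolev embeddings in three dimensions. Concretely, for the $(\delta\psi_i)^2\phibar$ and $F'(\rho)(\delta\psi_i)^2$ terms the paper applies Cauchy--Schwarz to obtain $\|\phibar\|_{0,\Omega}\,\|\delta\psi_i\|_{0,4,\Omega}^2$ and $\|F'(\rho)\|_{0,\Omega}\,\|\delta\psi_i\|_{0,4,\Omega}^2$, then uses $H^1(\Omega)\hookrightarrow L^4(\Omega)$; for the mixed term it uses the three-way H\"older split $\|\psibar_i\|_{0,6,\Omega}\,\|\delta\psi_i\|_{0,\Omega}\,\|\delta\phi\|_{0,3,\Omega}$ followed by $H^1\hookrightarrow L^3$. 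These require only $\phibar,\,F'(\rho)\in L^2(\Omega)$, which holds even in the all-electron case where $\phibar$ has Coulomb singularities and is \emph{not} in $L^\infty$---so your route genuinely fails there.

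The more serious point is the $F''$ term, which you flag but do not resolve. The paper's trick is to apply Cauchy--Schwarz to the \emph{sum} first,
\[
\Bigl(\sum_i \psibar_i\,\delta\psi_i\Bigr)^2 \le \Bigl(\sum_i|\psibar_i|^2\Bigr)\Bigl(\sum_i|\delta\psi_i|^2\Bigr)=\rho(\Psibar)\sum_i(\delta\psi_i)^2,
\]
so that the integrand becomes $|F''(\rho)\,\rho|\,(\delta\psi_i)^2$. For the LDA exchange this gives $F''(\rho)\rho\sim\rho^{1/3}$, which \emph{is} bounded even where $\rho\to 0$. Your splitting $(\sum_i\psibar_i\delta\psi_i)^2\le N\sum_i\|\psibar_i\|_{L^\infty}^2(\delta\psi_i)^2$ throws away exactly the factor of $\rho$ needed to tame the $\rho^{-2/3}$ singularity, forcing you into an ad hoc density-lower-bound assumption that is incompatible with the non-periodic (vacuum-containing) problems treated in the paper. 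Replacing your $L^\infty$ steps with the $L^2$/Sobolev arguments and the $\rho$-pairing trick closes both gaps cleanly.
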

\begin{proof}
We use the following norms: $| \cdot |_{1,\Omega}$ represents the semi-norm in $H^{1}$ space, $\parallel\cdot \parallel_{1,\Omega}$  denotes the $H^{1}$ norm, $\parallel\cdot \parallel_{0,\Omega}$ and  $\parallel\cdot \parallel_{0,p,\Omega}$  denote the  standard $L^{2}$ and  $L^{p}$ norms respectively. All the constants to appear in the following estimates are positive and bounded. Firstly, we note that
\begin{equation}
\sum_{i}\frac{1}{2} \intomega |\del \delta \psi_{i}|^2 \dr  \leq C_1 \sum_{i} |\psibar_{i} - \psibar_{i}^h|^{2}_{1,\Omega}\,,
\end{equation}
\begin{equation}
\sum_{i} |\epsilonbar| \intomega (\delta \psi_{i})^2 \dr = \sum_{i} |\epsilonbar| \intomega (\psibar_{i}-\psibar_{i}^h)^2 \dr \leq C_2 \sum_{i} \parallel\psibar_{i} - \psibar_{i}^{h}\parallel ^{2}_{0,\Omega}\,.
\end{equation}
Using Cauchy-Schwartz and Sobolev inequalities, we arrive at the following estimate
\begin{align}
\sum_{i} \left|\intomega F'(\rho(\Psibar)) (\delta \psi_{i})^2 \dr \right| &\leq \sum_{i} \intomega  \left|F'(\rho(\Psibar) )(\psibar_{i} - \psibar_{i}^{h})^2\right| \dr \nonumber\\[0.1in]
&\leq C_3 \sum_{i}  \parallel  F'(\rho(\Psibar)) \parallel_{0,\Omega} \parallel(\psibar_{i} - \psibar_{i}^{h})^2\parallel_{0,\Omega}\nonumber\\[0.1in]
&= C_3 \sum_{i} \parallel   F'(\rho(\Psibar)) \parallel_{0,\Omega} \parallel\psibar_{i} - \psibar_{i}^{h}\parallel^{2}_{0,4,\Omega}\nonumber\\[0.1in]
&\leq \bar{C}_3 \sum_{i} \parallel\psibar_{i} - \psibar_{i}^h\parallel^{2}_{1,\Omega}\,.
\end{align}
Further, we note
\begin{equation}
\frac{1}{8\pi} \intomega |\del (\phibar - \phibar^h)|^2 \dr \leq C_4 |\phibar - \phibar^h|^{2}_{1,\Omega}\,.
\end{equation}
Using Cauchy-Schwartz and Sobolev inequalities we arrive at
\begin{align}
\sum_{i} \left|\intomega (\delta \psi_{i})^2\phibar\;\dr\right| \leq \sum_{i} \intomega \left|(\psibar_{i} - \psibar_{i}^h)^2\;\phibar\right| \dr &\leq \sum_{i}  \parallel \phibar \parallel_{0,\Omega} \parallel (\psibar_{i} - \psibar_{i}^h)^2\parallel_{0,\Omega}\nonumber\\
&\leq C_5 \sum_{i}  \parallel\psibar_{i} - \psibar_{i}^{h}\parallel^{2}_{0,4,\Omega}\nonumber\\
& \leq \bar{C}_5 \sum_{i}  \parallel\psibar_{i} - \psibar_{i}^h\parallel^{2}_{1,\Omega}\,.
\end{align}
Also, we note that
\begin{align}
\sum_{i}  \left|\intomega \psibar_{i}\; \delta \psi_{i} \;\delta \phi\; \dr\right| &\leq \sum_{i}  \intomega \left|\psibar_{i} (\psibar_{i} - \psibar_{i}^h)(\phibar - \phibar^h)\right|\dr \nonumber\\
&\leq \; \sum_{i} \parallel \psibar_{i} \parallel_{0,6,\Omega} \parallel\psibar_{i} - \psibar_{i}^h\parallel_{0,\Omega} \parallel\phibar - \phibar^h\parallel_{0,3,\Omega}\nonumber\\
&\leq \sum_{i}  C_6 \parallel\psibar_{i} - \psibar_{i}^h\parallel_{0,\Omega} \parallel\phibar - \phibar^h\parallel_{1,\Omega}\,\,\,,
\end{align}
where we made use of the generalized H\"older inequality in the first step and Sobolev inequality in the next.
Finally, we use Cauchy-Schwartz inequality to arrive at
\begin{align}
\left|\intomega F''(\rho(\Psibar)) \left(\sum_{i} \psibar_{i} \delta \psi_{i}\right)^{2}\dr\right| & \leq \intomega \left|F''(\rho(\Psibar))\right| \left(\sum_{i} \left|\psibar_{i}\right|^{2}\right) \left(\sum_{i} \left|\delta \psi_{i}\right|^{2}\right) \dr\\
&= \sum_{i} \intomega \left|F''(\rho(\Psibar)) \rho(\Psibar) (\delta \psi_{i})^2 \right| \dr\\
&\leq C_7 \sum_{i} \parallel\psibar_{i} - \psibar_{i}^{h}\parallel ^{2}_{0,\Omega}\,.
\end{align}
Using the bounds derived above, it follows that
\begin{equation}\label{error3}
|E_h - E| \leq C \left(\sum_{i}\parallel\psibar_{i} - \psibar^{h}_{i}\parallel^{2}_{1,\Omega} +|\phibar - \phibar^h|^{2}_{1,\Omega} + \sum_{i}\parallel\psibar_{i} - \psibar^{h}_{i}\parallel_{0,\Omega} \parallel\phibar - \phibar^h\parallel_{1,\Omega}\right)
\end{equation}
\end{proof}
We now bound the finite-element discretization error with interpolation errors, which in turn can be bounded with the finite-element mesh size $h$. This requires a careful analysis in the case of Kohn-Sham DFT and has been discussed in~\cite{zhou1}. Using the results from the proof of Theorem 4.3 in~\cite{zhou1}, we bound the estimates in equation~\eqref{error3} using the following inequalities (cf.~\cite{Ciarlet})
\begin{subequations}
\begin{gather}
|\psibar_{i} - \psibar_{i}^{h}|_{1,\Omega} \leq \bar{C}_0 |\psibar_{i} - \psi_{i}^{I}|_{1,\Omega}\leq \tilde{C}_0\sum_{e}h_e^{k}|\psibar_{i}|_{k+1,\Omega_e}\,\,,\\
\parallel\psibar_{i} - \psibar_{i}^{h}\parallel_{0,\Omega} \leq \bar{C}_1\parallel\psibar_{i} - \psi_{i}^{I}\parallel_{0,\Omega} \leq \tilde{C}_1\sum_{e}h_e^{k+1}|\psibar_{i}|_{k+1,\Omega_e}\,\,,\\
|\phibar - \phibar^h|_{1,\Omega} \leq \bar{C}_2 |\phibar - \phi^I|_{1,\Omega}\leq \tilde{C}_2\sum_{e}h_e^{k}|\phibar|_{k+1,\Omega_e}\,\,,\\
\parallel\phibar - \phibar^h\parallel_{0,\Omega} \leq \bar{C}_2 \parallel\phibar - \phi^I\parallel_{0,\Omega}\leq \tilde{C}_3\sum_{e}h_e^{k+1}|\phibar|_{k+1,\Omega_e}\,\,,
\end{gather}
\end{subequations}
where $k$ is the order of the polynomial interpolation, and $e$ denotes an element in the regular family of finite-elements~\cite{Ciarlet} with mesh-size $h_e$ covering a domain $\Omega_e$. Using the above estimates, the error estimate to $O(h^{2k+1})$ is given by
\begin{equation}
|E_h - E| \leq \mathcal{C}\sum_{e}h_e^{2k}\left[\sum_{i}|\psibar_{i}|_{k+1,\Omega_e}^{2} + |\phibar|_{k+1,\Omega_e}^{2}\right]\,.\label{errorfinal1}
\end{equation}
\subsection{Optimal coarse-graining rate}\label{sec:optimal_mesh}
Following the approach in~\cite{radio}, we seek to determine the optimal mesh-size distribution by minimizing the approximation error in energy for a fixed number of elements. Using the definition of the semi-norms, we rewrite equation~\eqref{errorfinal1} as
\begin{equation}
|E_h - E| \leq \mathcal{C} \sum_{e=1}^{N_{e}} \Bigl[ h_{e}^{2k}\int_{\Omega_{e}} \Bigl[\sum_{i} |D^{k+1}\psibar_{i}(\br)|^{2} + |D^{k+1}\phibar(\br)|^{2}\Bigr] \dr\Bigr]\,,
\end{equation}
where $N_e$ denotes the total number of elements in the finite-element triangulation, and $D^{k+1}$ denotes the $(k+1)^{th}$ derivative of any function. An element size distribution function $h(\br)$ is introduced so that the target element size is defined at all points $\br$ in $\Omega$, and we get
\begin{align}\label{errorN}
|E_h - E| &\leq \mathcal{C} \sum_{e=1}^{N_{e}} \int_{\Omega_{e}} \Bigl[ h_{e}^{2k}\Bigl[\sum_{i} |D^{k+1}\psibar_{i}(\br)|^{2} + |D^{k+1}\phibar(\br)|^{2}\Bigr] \dr\Bigr]\,\\
& \leq \mathcal{C'} \intomega h^{2k}(\br)\Bigl[\sum_{i}|D^{k+1}\psibar_{i}(\br)|^{2} + |D^{k+1}\phibar(\br)|^{2}\Bigr] \dr\,.
\end{align}
Further, the number of elements in the mesh is in the order of
\begin{equation}
N_{e}  \propto \intomega \frac{\dr}{h^{3}(\br)} \label{elem}\,.
\end{equation}
The optimal mesh-size distribution is then determined by the following variational problem which minimizes the approximation error in energy subject to a fixed number of elements:
\begin{equation}
\min_{h} \intomega \Bigl\{ h^{2k}(\br)\Bigl[\sum_{i}|D^{k+1}\psibar_{i}(\br)|^{2} + |D^{k+1}\phibar(\br)|^{2}\Bigr] \Bigr\}\dr \quad \mbox{subject to}:\intomega\frac{\dr}{h^{3}(\br)}=N_{e}\,.
\end{equation}
The Euler-Lagrange equation associated with the above problem is given by
\begin{equation}
2kh^{2k-1}(\br)\Bigl[\sum_{i}|D^{k+1}\psibar_{i}(\br)|^{2} + |D^{k+1}\phibar(\br)|^{2}\Bigr] - \frac{3\eta}{h^{4}(\br)} = 0\,,
\end{equation}
where $\eta$ is the Lagrange multiplier associated with the constraint. Thus, we obtain the following distribution
\begin{equation}\label{optimmesh}
h(\br) = A \Bigl(\sum_{i}|D^{k+1}\psibar_{i}(\br)|^{2} + |D^{k+1}\phibar(\br)|^{2}\Bigr)^{-1/(2k+3)}\,,
\end{equation}
where the constant $A$ is computed from the constraint that the total number of elements in the finite-element discretization is $N_e$.

The coarse-graining rate derived in equation~\eqref{optimmesh} has been employed to construct the finite-element meshes by using the \emph{a priori} knowledge of the asymptotic solutions of $\psibar_{i}(\br)$ and $\phibar(\br)$ for different kinds of problems we study in the subsequent sections.

\section{Numerical implementation}\label{implement}
We now turn to the numerical implementation of the discrete formulation of the Kohn-Sham eigenvalue problem described in Section~\ref{formulation}. We first discuss the higher-order finite-elements used in our study with specific focus on spectral finite-elements, which are important in developing an efficient numerical solution procedure.

\subsection{Higher-order finite-element discretizations}
Linear finite-element basis has been extensively employed for a wide variety of applications in engineering involving complex geometries and moderate levels of accuracy. On the other hand, much higher levels of accuracy (chemical accuracy) is desired in electronic structure computations of materials properties. To achieve the desired chemical accuracy, a linear finite-element basis is computationally inefficient since it requires a large number of basis functions per atom~\cite{Hermannson,bylaska}. Hence, we investigate if higher-order finite-element basis functions can possibly be used to efficiently achieve the desired chemical accuracy. To this end, we employ in our study $C^0$ basis functions comprising of linear tetrahedral element (TET4) and spectral hexahedral elements up to degree eight (HEX27, HEX125SPECT, HEX343SPECT, HEX729SPECT). The numbers following the words `TET' and `HEX' denote the number of nodes in the element, and the suffix `SPECT' denotes that the element is a spectral finite-element. We note that spectral finite-elements~\cite{patera1984spectral,boyd2001chebyshev} have been employed in a previous work in electronic structure calculations~\cite{batcho}, but the computational efficiency afforded by these elements has not been thoroughly studied. We first briefly discuss spectral finite-elements (also referred to as spectral-elements) employed in the present work and the role they play in improving the computational efficiency of the Kohn-Sham DFT eigenvalue problem.

The spectral-element basis functions employed in the present work are constructed as Lagrange polynomials interpolated through an optimal distribution of nodes corresponding to the roots of derivatives of Legendre polynomials, unlike conventional finite-elements which use equispaced nodes in an element. Such a distribution does not have nodes on the boundaries of an element, and hence it is common to append nodes on the element boundaries which guarantees $C^{0}$ basis functions. These set of nodes are usually referred to as Gauss-Lobatto-Legendre points. Furthermore, we note that conventional finite-elements result in a poorly conditioned discretized problem for a high order of interpolation, whereas spectral-elements provide better conditioning~\cite{boyd2001chebyshev}. The improved conditioning of the spectral-element basis was observed to provide a 2-3 fold computational advantage over conventional finite-elements in a recent benchmark study~\cite{Motamarri2012} conducted to assess the computational efficiency of higher-order elements in the solution of the orbital-free DFT problem.

A significant advantage of the aforementioned spectral-elements is realized when we conjoin their use with specialized Gaussian quadrature rules that have quadrature points coincident with the nodes of the spectral-element, which in the present case corresponds to the Gauss-Lobatto-Legendre (GLL) quadrature rule~\cite{GLL}. Importantly, the use of such a quadrature rule will result in a diagonal overlap matrix (mass matrix) $\bM$. To elaborate, consider the elemental mass matrix $\bM^{e}$ given by
\begin{align}
\int_{\Omega_{e}} N_{i}(\br) N_{j}(\br) \dr &= \int_{-1}^{1}\int_{-1}^{1}\int_{-1}^{1} N_{i}(\xi,\eta,\zeta)N_{j}(\xi,\eta,\zeta) \;\; det (J_{e}) \;d\xi \; d\eta\; d\zeta\\
& =\sum_{p,q,r = 0}^{n_{q}} w_{p,q,r} N_{i}(\xi_{p}, \eta_{q}, \zeta_{r}) N_{j}(\xi_{p}, \eta_{q}, \zeta_{r})\;\; det(J_{e})
\end{align}
where $(\xi,\eta,\zeta)$ represents the barycentric coordinates, $J_{e}$ represents the elemental jacobian matrix of an element $\Omega_e$, and $n_q$ denotes the number of quadrature points in each dimension in a tensor product quadrature rule. Since the quadrature points are coincident with nodal points, the above expression is non-zero only if $i = j$, thus resulting in a diagonal elemental mass matrix and subsequently a diagonal global mass matrix. A diagonal mass matrix makes the transformation of the generalized Kohn-Sham eigenvalue problem~\eqref{ghep} to a symmetric standard eigenvalue problem~\eqref{hep} trivial. As discussed and demonstrated subsequently, the transformation to a standard eigenvalue problem allows us to use efficient solution procedures to compute the eigenspace in the self-consistent field iteration. We note that, while the use of the GLL quadrature rule is important in efficiently transforming the generalized eigenvalue problem to a standard eigenvalue problem, this quadrature rule is less accurate in comparison to Gauss quadrature rules. An $n$ point Gauss-Lobatto rule can integrate polynomials exactly up to degree $2n-3$, while an $n$ point Gauss quadrature rule can integrate polynomials exactly up to degree $2n -1$. Thus, in the present work, we use the GLL quadrature rule only in the evaluation of the overlap matrix, while using the more accurate Gauss quadrature rule to evaluate the discrete Hamiltonian matrix $\bH$. The accuracy and sufficiency of this reduced-order GLL quadrature for the evaluation of overlap matrix is demonstrated in Appendix C.

\subsection{Self-consistent field iteration}
As noted in Section~\ref{formulation}, the Kohn-Sham eigenvalue problem represents a nonlinear eigenvalue problem and must be solved self-consistently to compute the ground-state electron density and energy. We use computationally efficient schemes to evaluate the occupied eigenspace of the Kohn-Sham Hamiltonian (discussed below) in conjunction with finite temperature Fermi-Dirac distribution and charge density mixing to develop an efficient and robust solution scheme for the self-consistent field iteration of Kohn-Sham problem.
\begin{algorithm}
\caption{Self Consistent Field Iteration}
\label{scfEigenAlgo}
\begin{algorithmic}
\STATE 1. Provide initial guess for electron density $\rho^{h}_{0}(\br)$ on the finite-element mesh. This will be the input electron density for the first self-consistent iteration ($\rho^{h}_{\text{in}}(\br) = \rho^{h}_{0}(\br)$).  \\[0.1in]
\STATE 2. Compute the total electrostatic potential $\phi^{h}(\br,\bR) = V_H(\rho^{h}_{\text{in}}(\br)) + V_{ext}(b(\br,\bR))$ by solving the discrete Poisson equation. \\[0.1in]
\STATE 3. Compute the effective potential, $V_{\text{eff}}(\rho^{h}_{\text{in}},\bR) = V_{xc}(\rho^{h}_{\text{in}}) + \phi^{h}(\br,\bR)$\,\,.\\
\STATE 4. Solve for the occupied subspace spanned by the eigenfunctions $\psi^{h}_{i}(\br)$,  $i = 1,2\cdots\tilde{N}$, corresponding to $\tilde{N}$ ($\tilde{N}>N/2$) smallest eigenvalues of the Kohn-Sham eigenvalue problem~\eqref{ghep}.\\[0.1in]
\STATE 5. Calculate the fractional occupancy factors ($f_{i}$) using the Fermi-Dirac distribution (Section~\eqref{Fermi-Dirac}) \\[0.1in]
\STATE 6. Compute the new output charge densities $\rho^{h}_{\text{out}}$ from the eigenfunctions:
\begin{equation}\label{rhocomp}
\rho^{h}_{\text{out}}(\br) = 2\sum_{i} f(\epsilon_{i},\epsilon_{F})|\psi^{h}_{i}(\br)|^2,
\end{equation}
\STATE 7. If $||\rho^{h}_{\text{out}}(\br) - \rho^{h}_{\text{in}}(\br)|| \leq \;\text{tolerance}$, \emph{stop}; Else, compute new $\rho^{h}_{\text{in}}$ using a mixing scheme (Section~\ref{MixingScheme}) and go to step 2.
\end{algorithmic}
\end{algorithm}

Algorithm~\ref{scfEigenAlgo} depicts the typical steps involved in the self-consistent field (SCF) iteration. An initial guess of the electron density field is used to start the computation. A reasonable choice of such an initial guess is the superposition of atomic charge densities, and is used in the present study unless otherwise mentioned. The input charge density ($\rho^{h}_{\text{in}}(\br)$) to a self-consistent iteration is used to compute the total electrostatic potential $\phi(\br,\bR)$  by solving the following discrete Poisson equation using a preconditioned conjugate gradient method provided by the PETSc \cite{petsc} package using a Jacobi preconditioner.
\begin{equation}\label{forcephi}
\sum_{k=1}^{n_h} \Bigl[\frac{1}{4\pi} \intomega \del N_j(\br)\, . \del N_k(\br) \dr\Bigr]\phi^{k} = \intomega \left(\rho^{h}_{\text{in}}(\br) + b(\br,\bR)\right) N_j(\br) \dr\,\,.
\end{equation}
Subsequently, the effective potential $V_{\text{eff}}$ is evaluated to set up the discrete Kohn-Sham eigenvalue problem~\eqref{ghep}. We now discuss the different strategies we have investigated to compute the occupied eigenspace of the Kohn-Sham Hamiltonian $\bH$, and their relative merits.
\subsubsection{Solver strategies for finding the occupied eigenspace}
We examined two different solution strategies to compute the occupied subspace: (i) explicit computation of eigenvectors at every self-consistent field iteration; (ii) A Chebyshev filtering approach.

\paragraph{Explicit computation of eigenvectors:}
We first discuss the methods examined in the present work that involve an explicit computation of eigenvectors at a given self-consistent iteration. We recall that the discrete Kohn-Sham eigenvalue problem is a generalized Hermitian eigenvalue problem (GHEP)~\eqref{ghep}. As mentioned previously, by using the GLL quadrature rules for the evaluation of the overlap matrix $\bM$, which results in a diagonal overlap matrix, the generalized eigenvalue problem can be trivially transformed into a standard Hermitian eigenvalue problem (SHEP). We have explored both approaches in the present work, i.e. (i) solving the generalized eigenvalue problem employing conventional Gauss quadrature rules; (ii) solving the transformed standard eigenvalue problem by using GLL quadrature rules in the computation of overlap matrix.

We have employed the Jacobi-Davidson~(JD) method~\cite{jd} to solve the GHEP. The JD method falls into the category of iterative orthogonal projection methods where the matrix is orthogonally projected into a lower dimensional subspace and one seeks an approximate eigenpair of the original problem in the subspace. The basic idea in JD method is to arrive at better approximations to eigenpairs by a systematic expansion of the subspace realized by solving a ``Jacobi-Davidson correction equation" that involves the solution of a linear system. In the present work, a Jacobi preconditioner has been employed in the solution of the correction equation. The correction equation is solved only approximately, and this approximate solution is used for the expansion of the subspace. Though the JD method has significant advantages in computing the interior eigenvalues and closely spaced eigenvalues, we found the JD method to be computationally expensive for systems involving the computation of eigenvectors greater than 50, due to the increase in the number of times the correction equation is solved.

On the other hand, we employed the Krylov-Schur (KS) method~\cite{krylovschur2001} for solving the SHEP. In practice, one could also use the JD method to solve the SHEP, but, as previously mentioned, the JD method is expensive to solve systems involving few hundreds of electrons and beyond. The KS method can be viewed as an improvement over traditional Krylov subspace methods such as Arnoldi and Lanczos methods~\cite{Trefethen,Bai}. The KS method is based on Krylov-Schur decomposition where the Hessenberg matrix has the Schur form. The key idea of the KS method is to iteratively construct the Krylov-subspace using Arnoldi iteration and subsequently filter the unwanted spectrum from the Krylov-Schur decomposition. This results in a robust restarting scheme with faster convergence in most cases.

We now demonstrate the computational efficiency realized by solving the discrete Kohn-Sham eigenvalue problem as a transformed SHEP in comparison to GHEP. To this end, we consider an all-electron simulation of a graphene sheet containing 16 atoms with 96 electrons ($N = 96$) and a local pseudopotential simulation (cf. section~\ref{sec:pseudopotential_calculations} for details on the pseudopotentials employed) of $3\times 3\times 3$ face-centered-cubic aluminum nano-cluster containing 172 atoms with 516 electrons ($N = 516$) as benchmark systems. The relative error in the ground-state energy for the finite-element mesh used in the case of graphene is around $1.2 \times 10^{-5}$ (~0.0004 $Ha/atom$) while it is around $3.6 \times 10^{-6}$ (~0.0002 $eV/atom$) in the case of aluminium cluster. The reference ground-state energy is obtained using the commercial code GAUSSIAN in the case of the all-electron simulation of the graphene system, while it is obtained using the convergence study presented in Section~\ref{results} for the aluminium cluster. Table~\ref{tab:GHEP_SHEP} shows the computational time taken for the first SCF iteration in each of the above cases. All the times reported in the present work represent the total CPU times.
 \begin{table}[h]
  \begin{center}
  \caption{\small{Comparison of Generalized vs Standard eigenvalue problems.}}
  \label{tab:GHEP_SHEP}
 \begin{tabular}{|c|c|c|c|c|c|}
   \hline
  Element Type & DOFs & Problem Type & $N$ & Time (GHEP) & Time (SHEP) \\ \hline\hline
 HEX125SPECT &  1,368,801 & graphene & 96 & 1786 CPU-hrs & 150 CPU-hrs\\ \hline
 HEX343SPECT &  2,808,385 & Al $3\times 3\times 3$ cluster &516  & 2084 CPU-hrs & 80 CPU-hrs\\
 \hline
\end{tabular}
\end{center}
\end{table}
The Jacobi-Davidson method for GHEP and Krylov-Schur method for SHEP provided by the SLEPc package~\cite{slepc} have been employed in the present study. We remark that, in employing the Jacobi-Davidson method, eigenvectors from the previous SCF iterations have been supplied as input approximations for the subsequent SCF iteration. The Krylov-Schur method, on the other hand, allows for one only vector to be supplied as the input approximation to a given SCF iteration. Hence, the eigenvector corresponding to smallest eigenvalue from the previous SCF iteration has been supplied as the input approximation for the subsequent SCF iteration. It is interesting to note that a $10$-fold speedup is realized by transforming the Kohn-Sham eigenvalue problem to a SHEP in the case of graphene, while a $25$-fold speedup was obtained in the case of aluminium cluster. We note that a similar observation was recently reported in~\cite{Zhou2012_JCP} where the GHEP was transformed to SHEP via the mass-lumping approximation. Further, other simulations conducted as part of the present study suggest that this speedup increases with increasing system size.

\paragraph{Chebyshev filtering:} We now examine the alternate approach of Chebyshev filtering proposed in~\cite{cheby}, which is designed to iteratively compute the occupied eigenspace at every SCF iteration. We note that the Chebyshev filtering approach is only valid for standard eigenvalue problems. To this end, we use the aforementioned approach to convert the GHEP to a SHEP by employing the GLL quadrature rules in computing the overlap matrix, and remark that the use of spectral elements in conjunction with the GLL quadrature is crucial in using the Chebyshev filtering technique to solve the Kohn-Sham eigenvalue problem in a finite-element basis. The Chebyshev filtering approach is based on a subspace iteration technique, where an initial subspace is acted upon by a Chebyshev filter constructed from the Kohn-Sham Hamiltonian that transforms the subspace to the occupied eigenspace.

In the present work, at any given SCF iteration, we begin with the initial subspace $V$ formed from the eigenvectors of the previous SCF iteration. We note that, as is the case with all subspace iteration techniques, we choose the dimension of the subspace $V$, $\tilde{N}$, to be larger than the number of filled ground-state orbitals. Typically, we choose $\tilde{N}\sim\frac{N}{2}+20$. This is also necessary to employ the finite temperature Fermi-Dirac smearing, discussed in Section~\ref{Fermi-Dirac}, to stabilize the SCF iterations in materials systems that have very small band-gaps or have degenerate states at the Fermi energy. As proposed in~\cite{cheby}, the Chebyshev filter is constructed from a shifted and scaled Hamiltonian, $\overline{\bH}=c_1\btH+c_2$, where $\btH$ is the transformed Hamiltonian in the SHEP (cf. equation~\eqref{hep}). The constants $c_1$ and $c_2$ which correspond to the scaling and shifting are determined such that the unwanted eigen-spectrum is mapped into $[-1,1]$ and the wanted spectrum into $(-\infty,-1)$. In order to compute these constants, we need estimates of the upper bounds of the wanted and unwanted spectrums. The upper bound of the unwanted spectrum, which corresponds to the largest eigenvalue of $\btH$, can be obtained inexpensively by using a small number of iterations of the Lanczos algorithm. The upper bound of the wanted spectrum is chosen as largest Rayleigh quotient of $\btH$ in the space $V$ from the previous SCF iteration. Subsequently, the degree-$m$ Chebyshev filter, $p_m(\overline{\bH})$, which magnifies the spectrum of $\overline{\bH}$ in $(-\infty,-1)$---the wanted eigen-spectrum of $\btH$---transforms the initial subspace $V$ to the occupied eigenspace of $\btH$. The degree of the Chebyshev filter is chosen such that the obtained space is a close approximation of the occupied space. We note that the action of the Chebyshev filter on $V$ can be performed recursively, similar to the recursive construction of the Chebyshev polynomials~\cite{ChebyPolynomial}. After obtaining the occupied eigenspace, we orthogonalize the basis functions, and subsequently project $\btH$ into the eigenspace to compute the eigenvalues that are used in the Fermi-Dirac smearing discussed in the next subsection.

We remark that the degree of the polynomial required for the Chebyshev filter depends on the separation between eigenvalues of $\overline{\bH}$ in $(-\infty,-1)$, which in turn depends on: (i) the ratio between the wanted and unwanted eigenspectrums of $\btH$; (ii) the separation between the eigenvalues in the wanted spectrum of $\btH$. The size of the unwanted spectrum is primarily governed by the largest eigenvalue of $\btH$, which, in turn, is related to the finite-element discretization---increases with decrease in the element-size of the finite-element mesh. In general, all-electron calculations require locally refined meshes near the nuclei as they involve Coulomb-singular potential fields and highly oscillatory core wavefunctions. Hence, a very high degree of Chebyshev polynomial---of the order of $10^{2}-10^{3}$ for the problems studied in this work---needs to be employed to effectively filter the unwanted spectrum. On the other hand, simulations performed on systems with smooth pseudopotential required Chebyshev polynomial degrees between 10 to 50 for the range of problems studied in the present work. Further, qualitatively speaking, a larger degree Chebyshev filter is required for larger systems as the separation between eigenvalues in the wanted spectrum of $\btH$ reduces with increasing number of electrons.

We now compare the computational times (cf. table~\ref{tab:SHEP_ChFSI}) taken for a single SCF iteration solved using an eigenvalue solver based on Krylov-Schur method and the Chebyshev filter using the aforementioned benchmark problems comprising of a 16-atom graphene sheet and 172-atom aluminium cluster. We use a Chebyshev polynomial of degree $800$ for the graphene all-electron calculation and a polynomial degree of $12$ for aluminum cluster pseudopotential calculation respectively. As is evident from the results, we clearly see a factor of $12$ speedup that is obtained in the case of graphene, and a factor of around $6$ speedup that is obtained in the case of aluminium cluster. The speedup obtained was even greater for larger materials systems.
\begin{table}[h]
 \begin{center}
  \caption{\small{Comparison of Standard eigenvalue problem vs Chebyshev filtered subspace iteration (ChFSI).}}
  \label{tab:SHEP_ChFSI}
 \begin{tabular}{|c|c|c|c|c|c|}
   \hline
  Element Type & DOFs & Problem Type & $N$ & Time (SHEP) & Time (ChFSI) \\ \hline\hline
 HEX125SPECT &  1,368,801 & graphene & 96 & 150 CPU-hrs & 12.5 CPU-hrs\\ \hline
 HEX343SPECT &  2,808,385 & Al $3\times 3\times 3$ cluster &512  &  80 CPU-hrs & 13 CPU-hrs\\
 \hline
\end{tabular}
\end{center}
\end{table}

The use of spectral finite-elements in conjunction with Chebyshev filtered subspace iteration presents an efficient and robust approach to solve the Kohn-Sham problem using the finite-element basis. Thus, for subsequent simulations reported in the present work that use hexahedral elements, we employ the Krylov-Schur method for the first SCF iteration to generate a good initial subspace and use the Chebyshev filtering approach for all subsequent iterations to compute the occupied eigenspace. However, for simulations that use tetrahedral elements, we solve the GHEP using Jacobi-Davidson method as a transformation to SHEP is non-trivial and involves the inversion of overlap matrix using iterative techniques.

\subsubsection{Finite temperature smearing: Fermi-Dirac distribution}\label{Fermi-Dirac}
For materials systems with very small band gaps or those with degenerate energy levels at the Fermi energy, the SCF iteration may exhibit \emph{charge sloshing}---a phenomenon where large deviations in spatial charge distribution are observed between SCF iterations with different degenerate (or close to degenerate) levels being occupied in different SCF iterations. In such a scenario, the SCF exhibits convergence in the ground-state energy, but not in the spatial electron density. It is common in electronic structure calculations to introduce an orbital occupancy factor~\cite{VASP} based on the energy levels and a smearing function to remove charge sloshing in SCF iterations. A common choice for the smearing function is the finite temperature Fermi-Dirac distribution, and the orbital occupancy factor $f_i$ corresponding to an energy level $\epsilon_i$ is given by
\begin{equation}\label{fermi}
f_i\equiv f(\epsilon_i,\epsilon_F) = \frac{1}{1 + \exp(\frac{\epsilon_i - \epsilon_{F}}{\sigma})}\,\,,
\end{equation}
where the smearing factor $\sigma = k_{B} T$ with $k_{B}$ denoting the Boltzmann constant and $T$ denoting the temperature in Kelvin. In the above expression, $\epsilon_{F}$ denotes the Fermi energy, which is computed from the constraint on the total number of electrons given by $\sum_{i}2f_i=N$. We note that the convergence of ground-state energy is quadratic in the smearing parameter $\sigma$~\cite{VASP}.

\subsubsection{Mixing scheme:}\label{MixingScheme}
The convergence of the SCF iteration is crucially dependent on the mixing scheme, and many past works in the development of electronic structure methods have focussed on this aspect~\cite{Pulay,and,Broyden,EDIIS,eyert}. In the present work, we employ an $n$-stage Anderson mixing scheme~\cite{and}, which is briefly described below for the sake of completeness. Let $\rhoin {h} {n}(\br)$ and $\rhoout {h} {n}(\br)$ represent the input and output electron densities of the $n^{\text{th}}$ self-consistent iteration. The input to the $(n+1)^{\text{th}}$ self-consistent iteration, $\rhoin {h} {n+1}(\br)$, is computed as follows
\begin{equation}\label{rhoin}
\rhoin {h} {n+1} = \gamma_{\text{mix}}\; {\bar{\rho}}_{\text{out}}^{h} + (1-\gamma_{\text{mix}}) \;{\bar{\rho}}_{\text{in}}^{h}
\end{equation}
where
\begin{equation}\label{rhooptim}
{\bar{\rho}}^{h}_{\text{in(out)}}  =   c_{n}\; \rhoinout {h} {n} + \sum_{k=1}^{n-1} c_{k} \;\rhoinout {h} {n-k}
\end{equation}
and the sum of all the constants $c_{i}$ is equal to one, i.e.,
\begin{equation}
c_1 + c_2 + c_3 + \cdots + c_n = 1\,\,.
\end{equation}
Using the above constraint, equation~\eqref{rhooptim} can be written as
\begin{equation}
{\bar{\rho}}_{\text{in(out)}}^{h} = \rhoinout {h} {n} + \sum_{k=1}^{n-1}c_k \left(\rhoinout {h} {n-k} - \rhoinout {h} {n} \right) \,\,.
\end{equation}
Denoting  $F = \rho_{\text{out}}^{h} - \rho_{\text{in}}^{h}$, the above equation can be written as
\begin{equation}
\bar{F} = F^{\;(n) }+ \sum_{k=1}^{n-1}c_k \left(F^{\;(n-k)} - F^{\;(n)}\right)\,\,.
\end{equation}
The unknown constants $c_1$ to $c_{n-1}$ are determined by minimizing $R = ||\bar{F}||^{2}_{2} = ||{\bar{\rho}}_{\text{in}}^{h} - {\bar{\rho}}_{\;\text{out}}^{h}||^{2}_{2} $, which amounts to solving the following system of $(n-1)$ linear equations given by:
\begin{equation}\label{mix}
\sum_{k=1}^{n-1} \left(F^{(n)} - F^{(n-m)},F^{(n)} - F^{(n-k)}\right) c_{k} = \left(F^{(n)} - F^{(n-m)}, F^{(n)} \right) \quad m = 1 \cdots n-1
\end{equation}
where the notation $(F,G) $ stands for the $L_{2}$ inner product between functions $F(\br)$ and $G(\br)$ and is given by
\begin{equation}
(F,G) = \int F(\br) G(\br) \dr\,\,.
\end{equation}
The value of the parameter $\gamma_{mix}$ in equation~\eqref{rhoin} is chosen to be $0.5$ in the present work. All the integrals involved in the linear system~\eqref{mix} are evaluated using Gauss quadrature rules, and the values of $\rhoinout {h}{n} (\br)$ are stored as quadrature point values after every $n^{th}$ self consistent iteration. In all the simulations conducted in the present work, the Anderson mixing scheme is used with full history.

\section{Numerical results}\label{results}
\subsection{Rates of convergence}\label{RatesOfConv}
We begin with the examination of convergence rates of the finite-element approximation using a sequence of meshes with decreasing mesh sizes for various polynomial orders of interpolation. The benchmark problems used in this study, include: (i) all-electron calculations performed on boron atom and methane molecule, which represent non-periodic problems with a Coulomb-singular nuclear potential; (ii) local pseudopotential calculations performed on a barium cluster that represents a non-periodic problem with a smooth external potential, and a bulk calculation of face-centered-cubic (FCC) calcium crystal. In the case of all-electron calculations, the nuclear charges are treated as point charges on the nodes of the finite-element triangulation and the discretization provides a regularization for the electrostatic potential. We note that the self-energy of the nuclei in this case is mesh-dependent and diverges upon mesh refinement. Thus, the self energy is also computed on the same mesh that is used to compute the total electrostatic potential, which ensures that the divergent components of the variational problem on the right hand side of equation~\eqref{elReformulation} and the self energy exactly cancel owing to the linearity of the Poisson equation (cf. Appendix A  for a detailed discussion).

We conduct the convergence study by adopting the following procedure. Using the \emph{a priori} knowledge of the asymptotic solutions of the atomic wavefunctions~\cite{asymp}, we determine the coarsening rate from equation~\eqref{optimmesh} which is used to construct the coarsest mesh. Though the computed coarsening rates use the far-field asymptotic solutions instead of the exact ground-state wavefunctions that are \emph{a priori} unknown, the obtained meshes nevertheless provide a systematic way for the discretization of vacuum in non-periodic calculations as opposed to using an arbitrary coarse-graining rate or uniform discretization. In the case of periodic pseudopotential calculations, a finite-element discretization with a uniform mesh-size is used. A uniform subdivision of the initial coarse-mesh is carried out to generate a sequence of refined meshes, which represents a systematic refinement of the finite-element approximation space. The ground-state energies from the discrete formulation, $E_h$, obtained from the sequence of meshes constructed using the HEX125SPECT element and containing $N_e$ elements are used to obtain a least squares fit of the form
\begin{equation}\label{convfit}
|E_h - E_0| =  \mathcal{C} (1/N_e)^{2k/3}\,,
\end{equation}
to determine the constants $E_0$, $\mathcal{C}$ and $k$. The obtained value of $E_0$, which represents the extrapolated continuum ground-state energy computed using the HEX125SPECT element, is used as the reference energy to compute the relative error $\frac{|E_h - E_0|}{|E_0|}$ in the convergence study of various orders of finite-elements reported in subsequent subsections.

\subsubsection{All-electron calculations}
We first begin with all-electron calculations by studying two examples: (i) boron atom (ii) methane molecule.
\paragraph{Boron atom:}   This is one of the simplest systems displaying the full complexity of an all-electron calculation. For the present case, we use a Chebyshev filter of order $500$ to compute the occupied eigenspace. As discussed in Section~\ref{Fermi-Dirac}, we use a finite-temperature smearing to avoid instability in the SCF iteration due to charge sloshing from the degenerate states at the Fermi energy. A smearing factor $\sigma = 0.0003168~Ha$ (T=100K) is used in the present study. The simulation domain used is a spherical domain of radius $20$ $a.u.$ with Dirichlet boundary conditions employed on electronic wavefunctions and total electrostatic potential. We first determine the mesh coarse-graining rate by noting that the asymptotic decay of atomic wavefunctions is exponential, and an upper bound to this decay under the Hartree-Fock approximation is given by~\cite{asymp}
\begin{equation}
\psi(r) \sim \exp\Bigl[-\sqrt{2\;\tilde{\epsilon}} \;r\Bigr]\;\;\;\;\text{for}\;\;\; r\rightarrow\infty\,,
\end{equation}
where $-\tilde{\epsilon}$ denotes the energy of the highest occupied atomic/molecular orbital. While the above estimate has been derived for the Hartree-Fock formulation, it nevertheless provides a good approximation to the asymptotic decay of wavefunctions computed using the Kohn-Sham formulation. We use the aforementioned estimate, though not optimal, for all the wavefunctions in the atomic system, and adopt this approach for all systems considered subsequently. Hence, in equation ~\eqref{optimmesh}, we consider $\psibar_{i}$ to be
\begin{equation}
\psibar(r) = \sqrt{\frac{\xi^3}{\pi}} \exp\Bigl[-\xi \;r\Bigr] \;\;\;\;\text{where}\;\;\;\;\xi = \sqrt{2\,\tilde{\epsilon}}\,.
\end{equation}
The electrostatic potential governed by the Poisson equation with a total charge density being equal to the sum of $5\bar{\psi}^{2}(r)$ and $-5\delta(r)$ is given by
\begin{equation}
\bar{\phi}(r) = -5 \exp{(-2\xi \,r)}\left(\xi + \frac{1}{r}\right)\,.
\end{equation}
Using the above equations, the  mesh coarse-graining rate from equation~\eqref{optimmesh}  is given by
{\footnotesize
\begin{equation}\label{hel_optim}
h(r) = A\left[\frac{5}{\pi}\xi^{2k+5} \exp{(-2\,\xi\,r)} + 25 \exp{(-4\,\xi\,r)}\left[\xi^{k+2} 2^{k+1} + \sum_{n=0}^{k+1} \dbinom{k+1}{n} \frac{2^{n} \xi^{n} (k+1-n)!}{r^{k-n+2}}\right]^{2}\right]^{ -1/(2k+3)}\,.
\end{equation}
}
Since $\tilde{\epsilon}$ in the above equation is unknown \emph{a priori}, the value of $\tilde{\epsilon}^{h}$ determined on a coarse mesh is used in the above equation to obtain $h(r)$ away from the atom. The finite-elements around the boron atom has been subdivided to get local refinement near the boron atom. We now perform the numerical convergence study with tetrahedral and hexahedral spectral elements up to eighth order using this coarse-graining rate, and the results are shown in figure~\ref{fig:BoronConvgRate}. The value of $E_0$ computed from equation~\eqref{convfit} is $-24.3431910234~Ha$, which is used to compute the relative errors in the energies. The ground-state energy computed by performing an all-electron calculation using APE (Atomic Pseudopotential Engine) software~\cite{ape} is found to be $-24.34319112~Ha$.

We observe that all the elements studied show close to optimal rates of convergence, $O(h^{2k})$, where $k$ is the degree of the polynomial. An interesting point to note is that, although the governing equations are non-linear in nature and the nuclear potential approaches a Coulomb-singular solution upon mesh refinement, optimal rates of convergence are obtained. Recent mathematical analysis~\cite{zhou1} shows that the finite-element approximation for the Kohn-Sham DFT problem does provide optimal rates of convergence for pseudopotential calculations. To the best of our knowledge,  mathematical analysis of higher-order finite-element approximations of the Kohn-Sham DFT problem with Coulomb-singular nuclear potentials is still an open problem.

We note that, in the case of linear finite-elements, a large number of elements are required to even achieve modest relative errors. In fact, close to five million linear TET4 elements are required for a single boron atom to obtain a relative error of $10^{-2}$, while relative errors up to $10^{-4}$ are achieved with just few hundreds of HEX125SPECT and HEX343SPECT elements, and even higher accuracies are achieved with a few thousands of these elements.

\paragraph{Methane molecule:}
The next example we study is methane with a C-H bond length of 2.07846 a.u. and a C-H-C tetrahedral angle of $109.4712^{\circ}$. For the present case, we use a Chebyshev filter of order $500$ to compute the occupied eigenspace, and a smearing factor $\sigma = 0.0003168~Ha$ (T=100K) for the Fermi-Dirac smearing. The simulation domain used is a cubical domain of side $50$ $a.u.$ with Dirichlet boundary conditions employed on electronic wavefunctions and total electrostatic potential. As in the case of boron atom, the finite-element mesh for this molecule is constructed to be locally refined around the atomic sites, while coarse-graining away. A uniform mesh is first constructed near the methane molecule and the finite-elements around each nuclei are then subdivided to obtain local refinements around each nuclei. The mesh coarsening rate in the outer region is determined numerically by employing the asymptotic solution of the far-field electronic fields, estimated as a superposition of single atom far-field asymptotic fields, in equation~\eqref{optimmesh}. To this end, asymptotic behavior of the atomic wavefunctions in carbon atom ($\psibar^{C}(r)$) is chosen to be
\begin{equation}\label{carbon}
\psibar^{C}(r) = \sqrt{\frac{\xi^3}{\pi}} \exp\Bigl[-\xi \;r\Bigr] \;\;\;\;\text{where}\;\;\;\;\xi = \sqrt{2\,\tilde{\epsilon}}\,,
\end{equation}
where $\tilde{\epsilon}$ (negative of the eigenvalue of the highest occupied eigenstate) is determined from a coarse mesh calculation of single carbon atom. The corresponding electrostatic potential is governed by the Poisson equation, with total charge density being equal to the sum of $6|\bar{\psi}^{C}(r)|^{2}$ and $-6\delta(r)$, and is given by
\begin{equation}
\bar{\phi}(r) = -6 \exp{(-2\xi \,r)}\left(\xi + \frac{1}{r}\right)\,.
\end{equation}
In the case of hydrogen atom, the analytical solution is given by
\begin{equation}\label{hydrogen}
\psibar^{H}(r) = \sqrt{\frac{1}{\pi}} \exp\Bigl[- \;r\Bigr] \,,
\end{equation}
and the corresponding electrostatic potential is given by
\begin{equation}
\bar{\phi}(r) = - \exp{(-2 \,r)}\left(1 + \frac{1}{r}\right)\,.
\end{equation}
We now perform the numerical convergence study with both tetrahedral and hexahedral elements with the meshes constructed as explained before. Figure~\ref{fig:methaneConvgRate} shows the convergence results for the various elements, and figure~\ref{fig:methanecontour} shows the isocontours of electron density for methane molecule. The value of $E_0$ computed from equation~\eqref{convfit}, the reference ground-state energy per atom of the methane molecule which is used to compute the relative errors in the energies, is found to be $-8.023988150~Ha$. The ground-state energy per atom computed using the GAUSSIAN package~\cite{GAUSSIAN} with polarization consistent $4$ DFT basis set~(pc-4) is found to be $-8.0239855633~Ha$. As in the case of boron atom, we obtain close to optimal convergence rates, and significantly higher relative accuracies in ground-state energies are observed by using higher-order elements.
\begin{figure}[h]
\hfill
\begin{minipage}[t]{.48\textwidth}
\centering
\includegraphics[width=\textwidth]{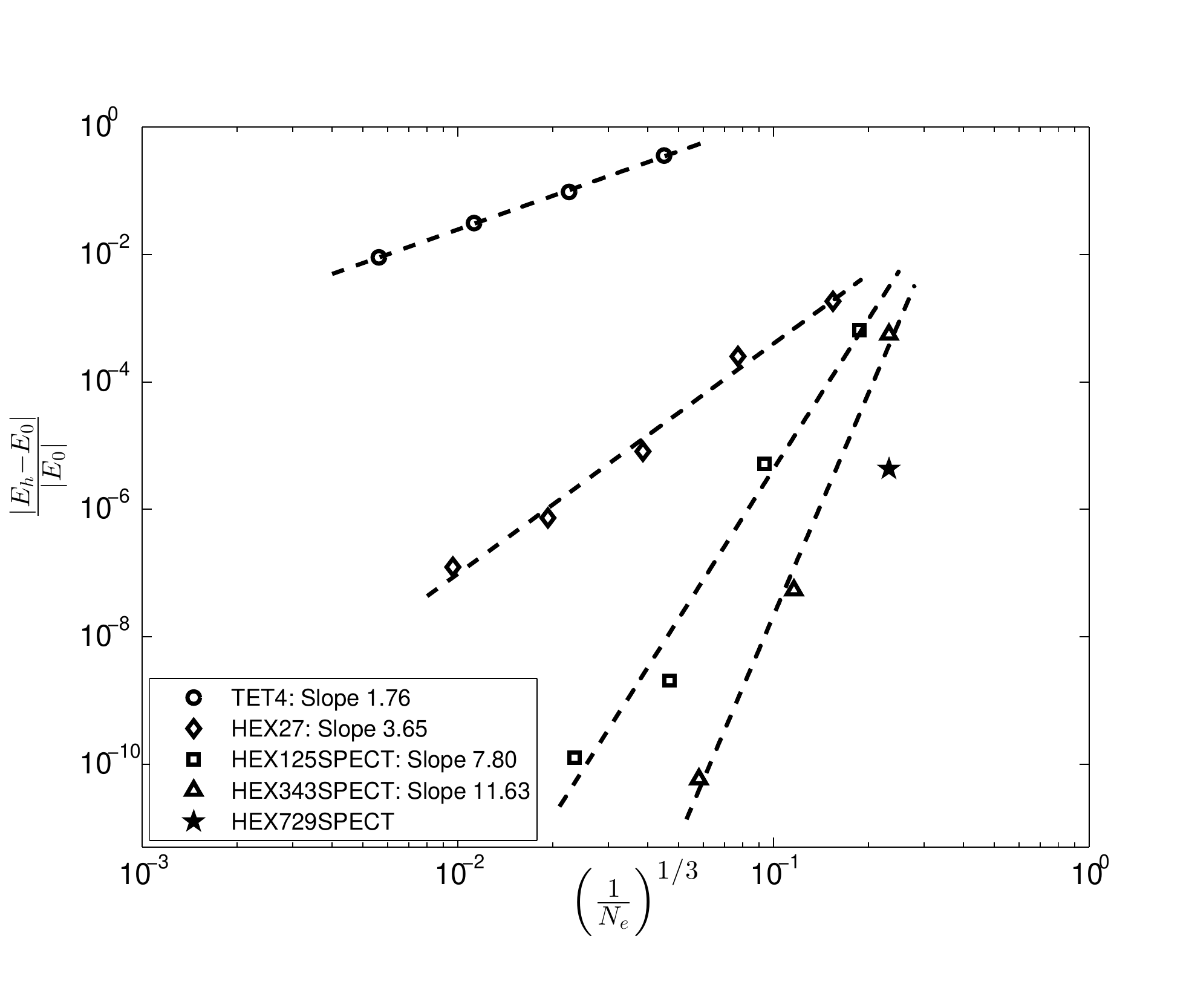}
\caption{\small{Convergence rates for the finite-element approximation of boron atom.}}
\label{fig:BoronConvgRate}
\end{minipage}
\hfill
\begin{minipage}[t]{.48\textwidth}
\centering
\includegraphics[width=\textwidth]{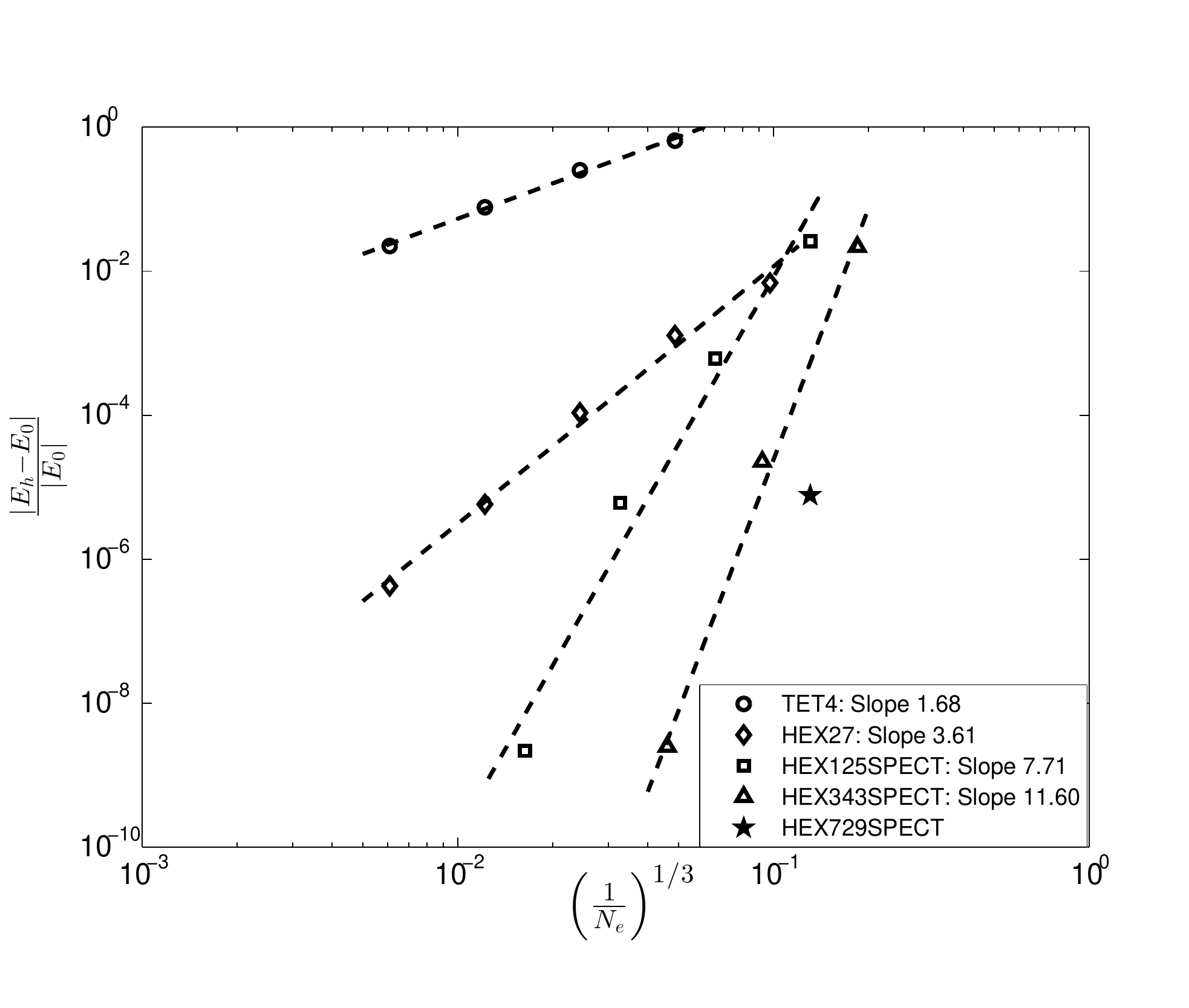}
\caption{\small{Convergence rates for the finite-element approximation of methane molecule.}}
\label{fig:methaneConvgRate}
\end{minipage}
\end{figure}

\begin{figure}[htbp]
\hfill
\begin{minipage}[t]{0.45 \textwidth}
\centering
\includegraphics[width = 0.7\textwidth]{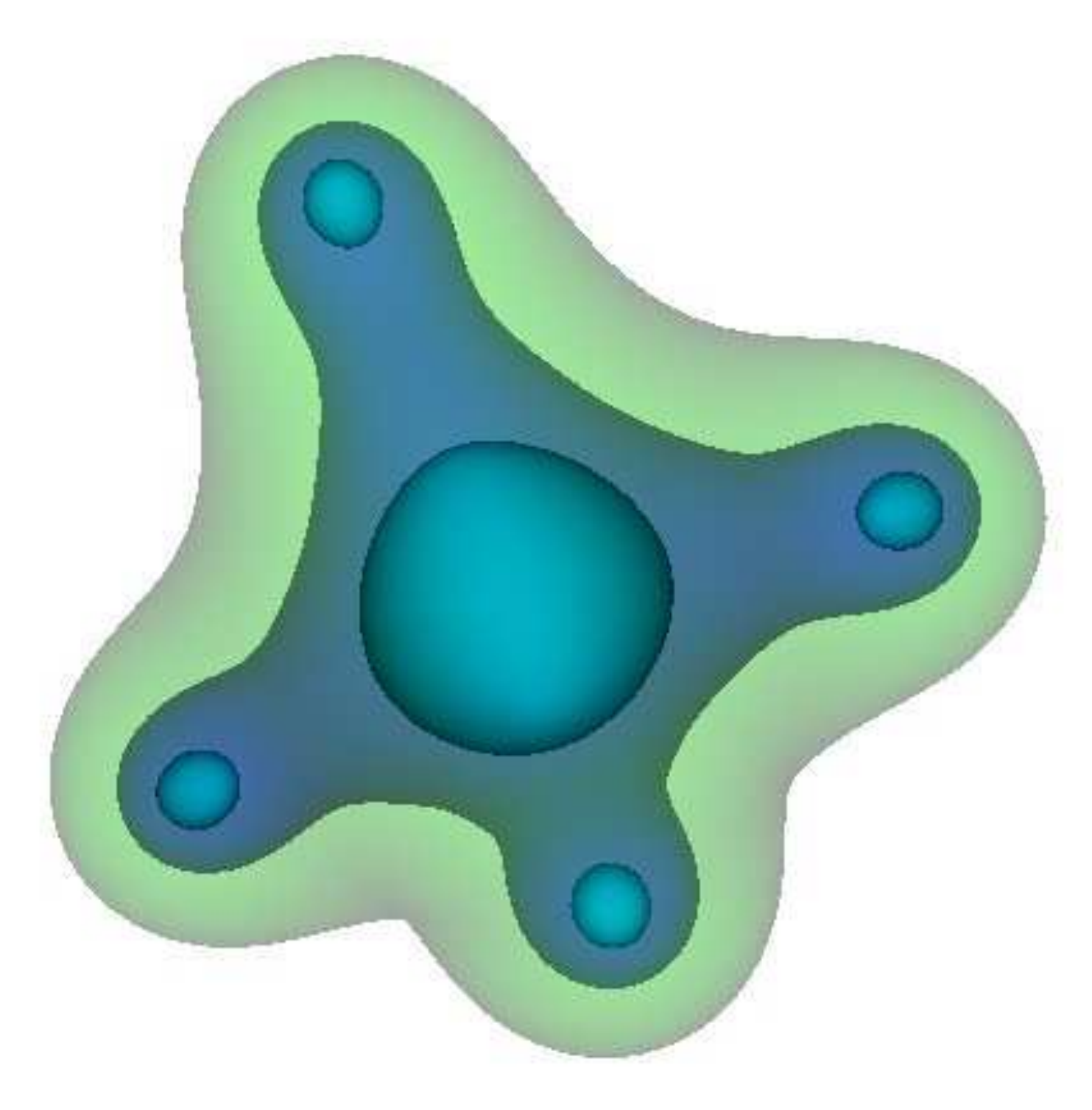}
\caption{\small{Electron density contours of methane molecule.}}
\label{fig:methanecontour}
\end{minipage}
\hfill
\begin{minipage}[t]{0.45 \textwidth}
\centering
\includegraphics[width = \textwidth]{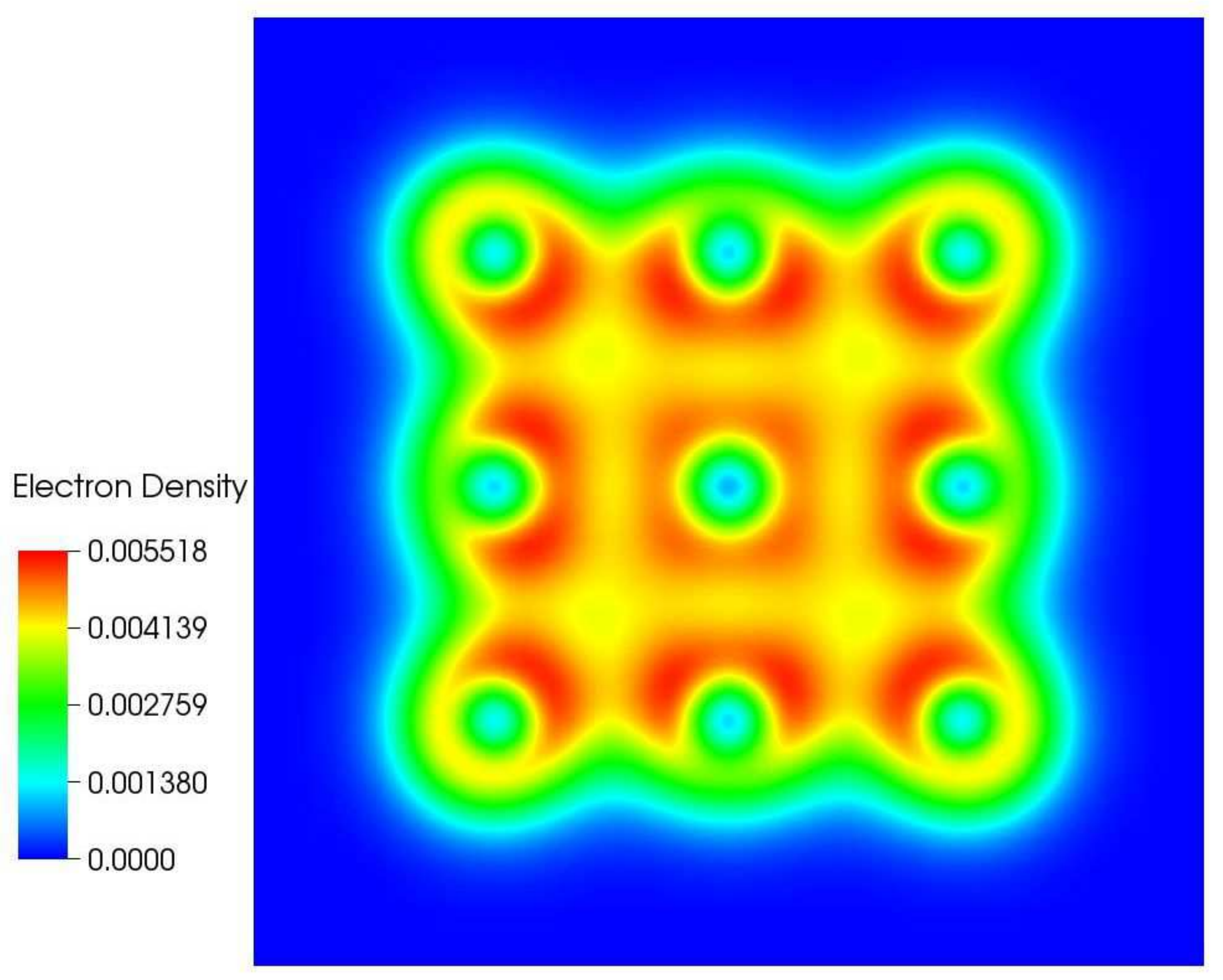}
\caption{\small{Electron density contours of barium $2\times 2\times 2$ BCC cluster.}}
\label{fig:bariumcontour}
\end{minipage}
\end{figure}

\subsubsection{Local Pseudopotential calculations}\label{sec:pseudopotential_calculations}
We now turn to  pseudopotential calculations in multi-electron systems. A pseudopotential constitutes the effective potential of the nucleus and core electrons experienced by the valence electrons. Pseudopotentials are constructed such that the wavefunctions of valence electrons outside the core and their corresponding eigenvalues are close to those computed using all-electron calculations. We note that, in the present work, we have restricted our investigation to local pseudopotential calculations as the present focus of this work is the demonstration of the computational efficiency afforded by adaptive higher-order finite-element discretizations. We note that the use of non-local pseudopotentials---for instance, the Troullier Martins pseudopotential in the Kleinman-Bylander form~\cite{nlp}---results in an additional sparse matrix in the discrete Hamiltonian whose sparsity is dependent on the extent of the non-local projectors. We expect that the consideration of non-local pseudopotentials will only have a marginal effect on the demonstrated performance of the algorithms and the scalability results, and a careful study of this aspect will be undertaken in our future investigations. In the present work, we use the local evanescent core pseudopotential~\cite{fiolhais} as a model pseudopotential to demonstrate our ideas. This pseudopotential has the following form
\begin{equation}
V^{I}_{ion} = -\frac{Z}{R_c}\left(\frac{1}{y}(1 - (1+\beta y)e^{-\alpha y}) - Ae^{-y}\right)\,\,,
\end{equation}
where $Z$ denotes the number of valence electrons and $y = |\br - \bR_{I}|/R_{c}$. The core decay length $R_c$ and $\alpha \geq 0$ are atom-dependent constants~\cite{fiolhais}. The constants $\beta$ and $A$ are evaluated by the following relations:
\begin{equation}
\beta = \frac{\alpha^{3} - 2\,\alpha}{4(\alpha^2 - 1)}\,,\;\;\;\;A = \frac{1}{2}\alpha^2 - \alpha \beta.
\end{equation}
\paragraph{Barium cluster:}
The first local pseudopotential calculation we present is a barium $2\times 2\times 2$ body-centered cubic (BCC) cluster with a lattice parameter of $9.5~a.u.$. A Chebyshev filter of order $16$ is employed to compute the occupied eigenspace, and a smearing factor $\sigma = 0.000634~Ha$ (T=200K) is used for the Fermi-Dirac smearing. The simulation domain used is a cubical domain of side $100$ $a.u.$ with Dirichlet boundary conditions employed on electronic wavefunctions and total electrostatic potential. The finite-element mesh for this molecule is constructed to be uniform in the cluster region where barium atoms are present, while coarse-graining away. The mesh coarsening rate in the vacuum is determined numerically by employing the asymptotic solution of the far-field electronic fields, estimated as a superposition of single atom far-field asymptotic fields, in equation~\eqref{optimmesh}. To this end, asymptotic behavior of the atomic wavefunctions in barium atom ($\psibar(r)$)  is chosen to be
\begin{equation}
\psibar(r) = \sqrt{\frac{\xi^3}{\pi}} \exp\Bigl[-\xi \;r\Bigr] \;\;\;\;\text{where}\;\;\;\;\xi = \sqrt{2\,\tilde{\epsilon}}\,,
\end{equation}
where $\tilde{\epsilon}$ (negative of the eigenvalue of the highest occupied eigenstate) is estimated from a coarse mesh calculation. The corresponding electrostatic potential is determined by the Poisson equation, with total charge density being equal to the sum of $2\bar{\psi}^{2}(r)$ and $-2\delta(r)$, and is given by
\begin{equation}
\bar{\phi}(r) = -2 \exp{(-2\xi \,r)}\left(\xi + \frac{1}{r}\right)\,.
\end{equation}
The numerical convergence study is conducted with both tetrahedral and hexahedral elements. Figure~\ref{fig:BariumClusterConvgRate} shows the rates of convergence for the various elements considered that are close to optimal rates of convergence and figure~\ref{fig:bariumcontour} show the relevant electron-density contours. The value of $E_0$ computed from equation~\eqref{convfit}, the reference ground-state energy per atom which is used to compute the relative errors in the energies, is found to be $-0.6386307998~Ha$. The energy per atom obtained with plane-wave basis using ABINIT with a cutoff energy $30~ Ha$ and cell-size $80~a.u.$ is $-0.638627743~Ha$. The main observation that distinguishes this study from the all-electron study is that all orders of interpolation provide much greater accuracies for the local pseudopotential calculations in comparison to all-electron calculations. Linear basis functions are able to approximate the ground-state energies up to relative errors of $10^{-3}$, whereas relative errors of $10^{-6}$ can be achieved with higher-order elements with polynomial degrees of four and above.

\begin{figure}[htbp]
\hfill
\begin{minipage}[t]{.48 \textwidth}
\centering \includegraphics[width = \textwidth]{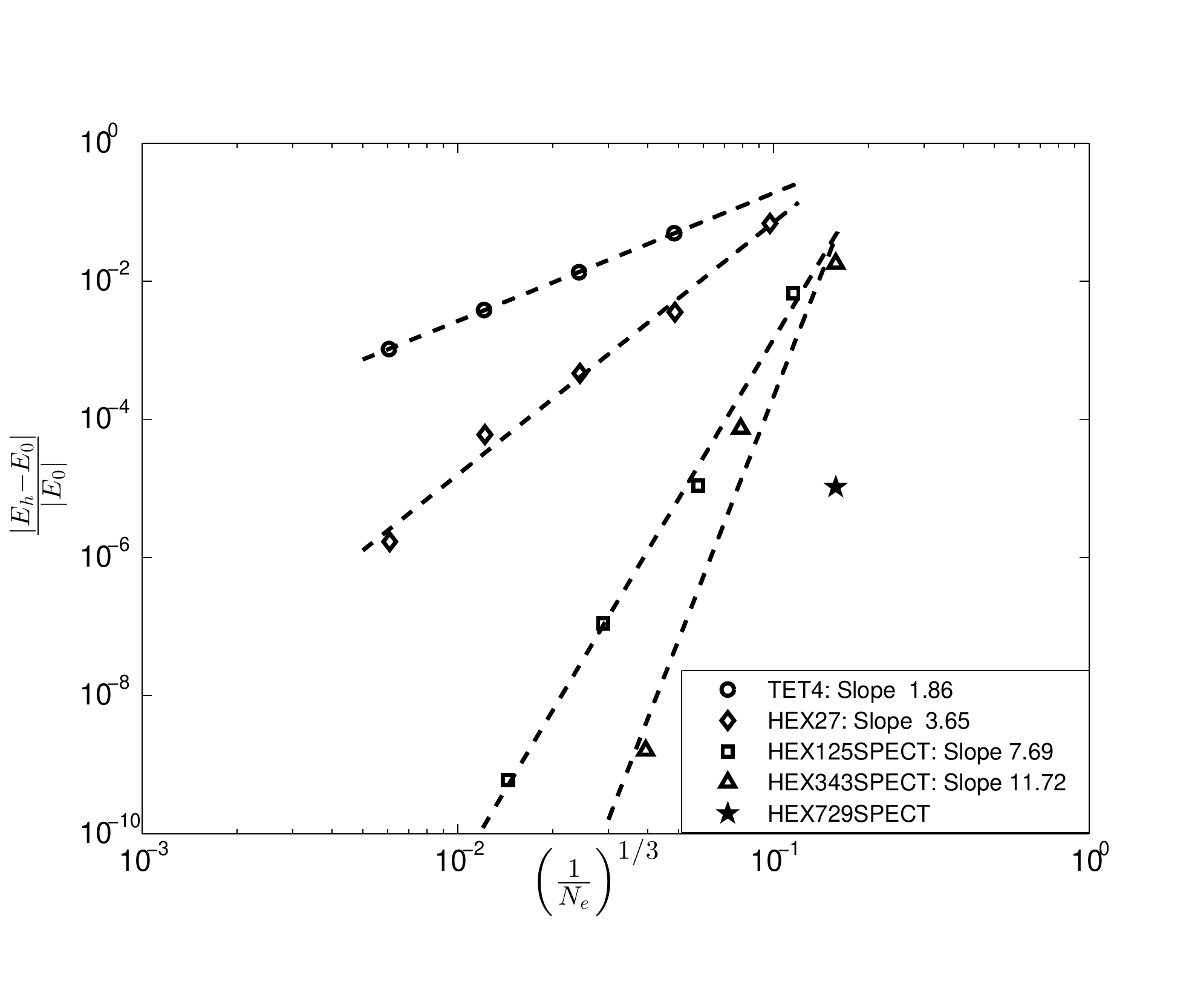}
\caption{\small{Convergence rates for the finite-element approximation of barium cluster.}}
\label{fig:BariumClusterConvgRate}
\end{minipage}
\hfill
\begin{minipage}[t]{.48 \textwidth}
\centering \includegraphics[width = \textwidth]{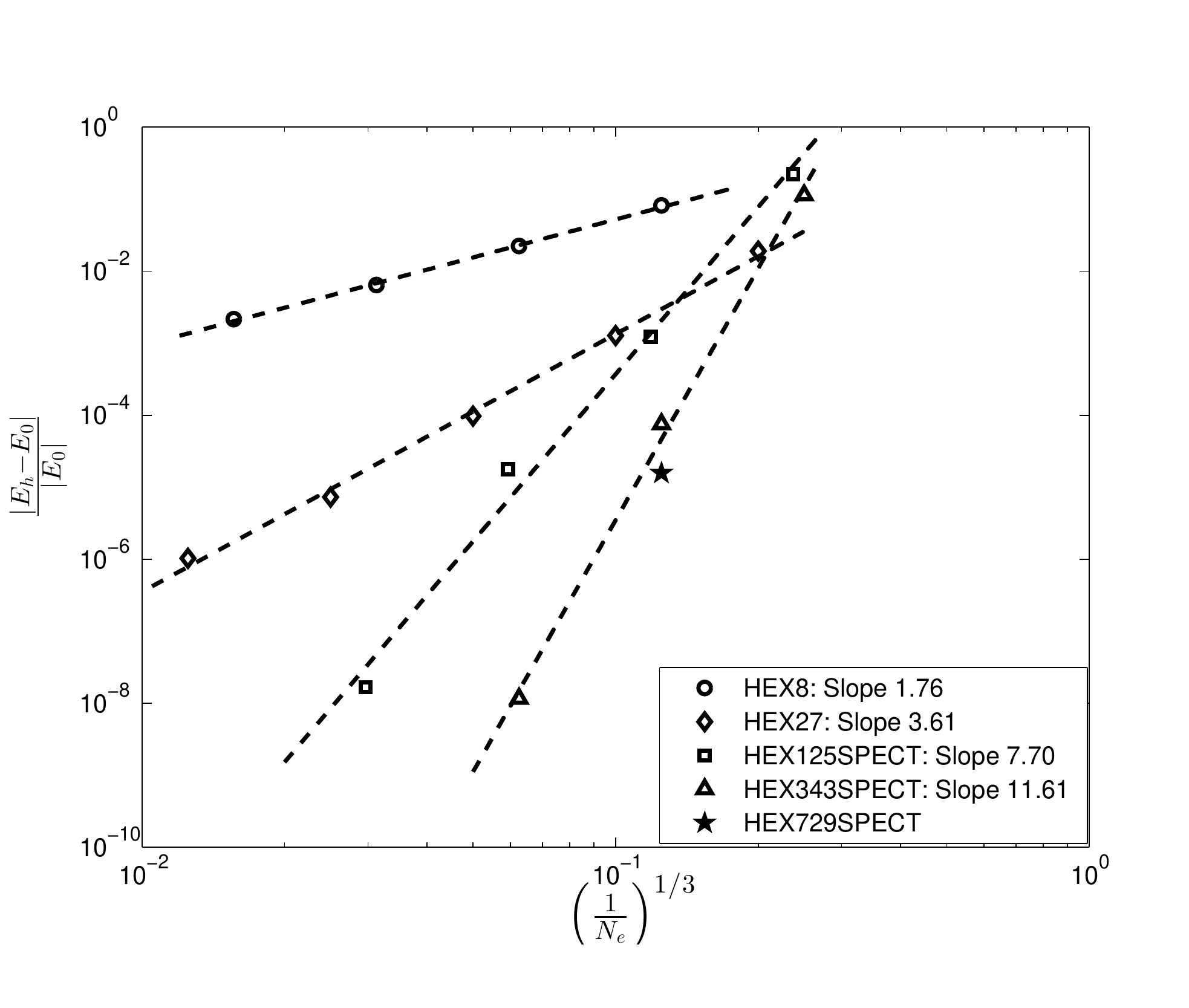}
\captionof{figure}{\small{Convergence rates for the finite-element approximation of bulk FCC calcium.}}
\label{fig:periodiccalc}
\end{minipage}
\end{figure}

\paragraph{Perfect crystal with periodic boundary conditions:}
The next example considered is that of a perfect calcium face-centered cubic (FCC) crystal with lattice constant $10.55~a.u.$. Bloch theorem~\cite{mermin} is used in the simulation with $10$ k-points (high symmetry) to sample the first Brillouin zone, which represents a quadrature rule of order 2~\cite{data}. The eigenspace is computed using the Krylov-Schur method, and a smearing parameter of $0.003168~Ha$ (T=1000K) is used in these simulations. Figure~\ref{fig:periodiccalc} shows the rates of convergence for the various higher-order finite-elements considered in the present work. The value of $E_0$ computed from equation~\eqref{convfit}, the reference bulk energy per atom, is computed to be $-0.729027041~Ha$. The bulk energy per atom obtained using ABINIT with a cutoff energy of $40~Ha$ is found to be $-0.72902775~Ha$. We note that the results are qualitatively similar to the local pseudopotential calculations carried on barium cluster.

\subsection{Computational cost}
We now examine the key aspect of computational efficiency afforded by the use of higher-order finite-element approximations in the Kohn-Sham DFT problem. As seen from the results in Section~\ref{RatesOfConv}, higher-order finite-element discretizations provide significantly higher accuracies with far fewer elements in comparison to linear finite-elements. However, the use of higher-order elements increases the per-element computational cost due to an increase in the number of nodes per element, which also results in an increase in the bandwidth of the Hamiltonian matrix. Further, higher-order elements require a higher-order accurate quadrature rule, which again increases the per-element computational cost. Thus, in order to unambiguously determine the computational efficiency afforded by higher-order finite-element discretizations, we measured the CPU-time taken for the simulations conducted on the aforementioned benchmark problems for a wide range of meshes providing different relative accuracies. All the simulations are conducted using meshes with the coarse-graining rates determined by the approach outlined in Section~\ref{sec:optimal_mesh}. All the numerical simulations reported in this work are conducted using a parallel implementation of the code based on MPI, and are executed on a parallel computing cluster with the following specifications: dual-socket six-core Intel Core I7 CPU nodes with 12 total processors (cores) per node, 48~GB memory per node, and 40~Gbps Infiniband networking between all nodes for fast MPI communication. The various benchmark calculations were executed using 1 to 12 cores, while the results for the larger problems discussed subsequently were executed on 48 to 96 cores. It was verified (see Section~\ref{sec:scalability}) that our implementation scales linearly on this parallel computing platform for the range of processors used, and hence the total CPU-times reported for the calculation are close to the wall-clock time on a single processor. The number of processors used to conduct ABINIT and GAUSSIAN simulations for the comparative studies, discussed subsequently, are carefully chosen to ensure scalability of these codes, and are typically less than 20 cores.

\subsubsection{Benchmark systems}\label{sec:comp_atom}
We first consider the benchmark systems comprising of boron atom, methane molecule, barium cluster and bulk calcium crystal. The mesh coarsening rates for these benchmark systems derived in Section~\ref{RatesOfConv} are employed in the present study. The number of elements are varied to obtain finite-element approximations with varying accuracies that target relative energy errors in the range of $10^{-1}-10^{-7}$. We employ the same numerical algorithms and algorithmic parameters---order of Chebyshev filter, finite-temperature smearing parameter---as discussed in Section~\ref{RatesOfConv} for the present study. The total CPU-time is measured for each of these simulations on the series of meshes constructed for varying finite-element interpolations and normalized with the longest time in the series of simulations for a given material system. The relative error in ground-state energy is then plotted against this normalized CPU-time. Figures~\ref{fig:SingleBoronTime}, \ref{fig:MethaneTime}, \ref{fig:BariumClusterTime} and \ref{fig:CalciumClusterTime} show these results for boron, methane molecule, barium cluster and bulk calcium crystal, respectively.

Our results show that the computational efficiency of higher-order interpolations improves as the desired accuracy of the computations increases, in particular for errors commensurate with chemical accuracy---order of 1 $meV$ per atom error for pseudopotential calculations and 1 $mHa$ per atom error for all-electron calculations. We note that a thousand-fold computational advantage is obtained with higher-order elements over linear TET4 element even for modest accuracies corresponding to relative errors of $10^{-2}$. For relative errors of $10^{-3}$, quadratic HEX27 element performance is similar to other finite-elements with quartic interpolation and beyond, and sometimes marginally better. However, all higher-order elements significantly outperform linear TET4 element. Considering relative errors of $10^{-5}$, quartic HEX125SPECT element performs better in comparison to quadratic HEX27 element almost by a factor of 10, while hexic HEX343SPECT element is computationally more efficient than HEX125SPECT element by a factor greater than three and this factor improves further for lower relative errors. The octic HEX729SPECT element performs only marginally better than the hexic element for relative errors lower than $10^{-5}$. Comparing the results across different materials systems, we observe that the performance of lower-order elements is inferior in the case of all-electron systems in comparison to  systems with smooth local pseudopotentials. For instance, at a relative error of $10^{-2}$, the solution time using TET4 is more than three orders of magnitude larger than HEX343SPECT for the case of methane molecule. However, the solution time is three orders of magnitude larger for TET4 over HEX343SPECT for the case barium cluster at a relative error of $10^{-3}$.

In summary, for chemical accuracies, the computational efficiency improves significantly with the order of the element up to sixth-order, with diminishing returns beyond. Further, the relative performance of higher-order elements with respect to linear TET4 element in the case of all-electron calculations is significantly better in comparison to local pseudopotential calculations. Lastly, qualitatively speaking, the sequence of graphs of relative error vs. normalized CPU-time for the various elements tend towards increasing accuracy and computational efficiency with increasing order of finite-element interpolation. However, we note that, for the systems studied, the point of diminishing returns in terms of computational efficiency of higher-order elements for relative errors commensurate with chemical accuracy is around sixth-order. As demonstrated in Appendix B, the primary reason for the diminishing returns is the increase in the cost of computing the Hamiltonian matrix which also increasingly dominates the total time with increasing order of the element.

\begin{figure}[!]
\hfill
\begin{minipage}[t]{.48 \textwidth}
\centering \includegraphics[width = \textwidth]{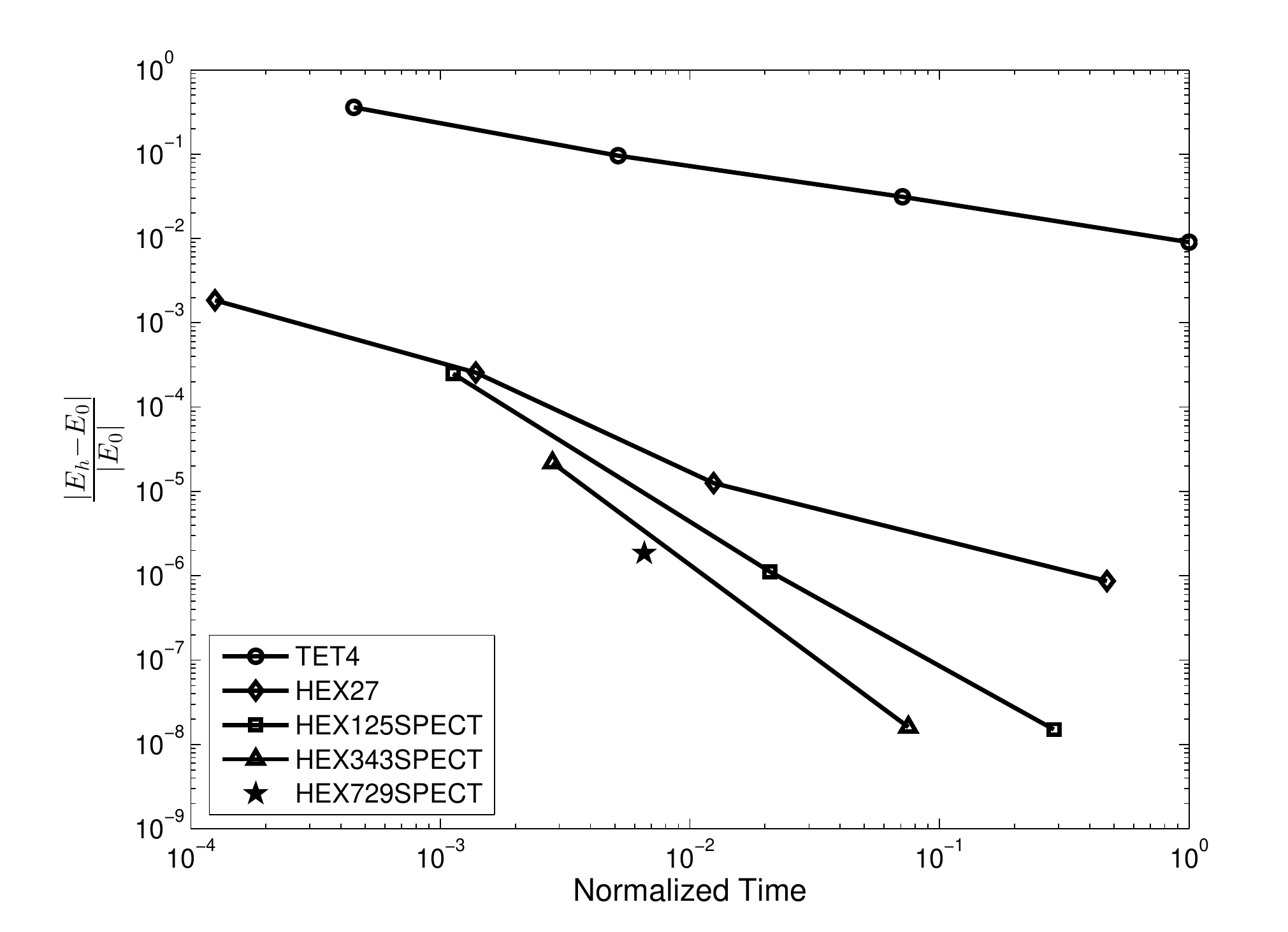}
\caption{\small{Computational efficiency of various orders of finite-element approximations.
Case study: boron atom.}}
\label{fig:SingleBoronTime}
\end{minipage}
\hfill
\begin{minipage}[t]{.48 \textwidth}
\centering \includegraphics[width = \textwidth]{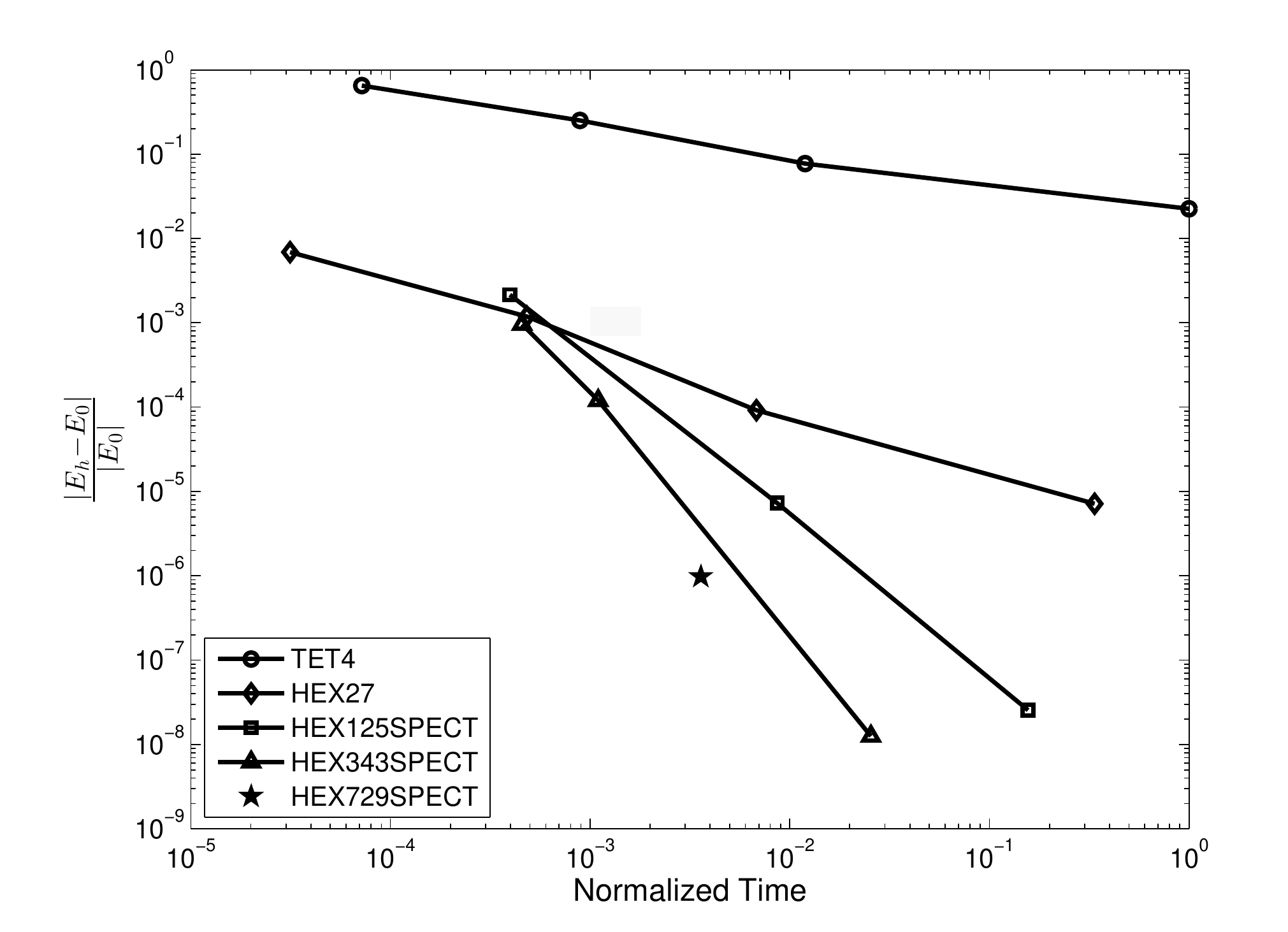}
\caption{\small{Computational efficiency of various orders of finite-element approximations. Case study: methane molecule.}}
\label{fig:MethaneTime}
\end{minipage}\\[0.3in]
\hfill
\begin{minipage}[b]{.48 \textwidth}
\centering \includegraphics[width = \textwidth]{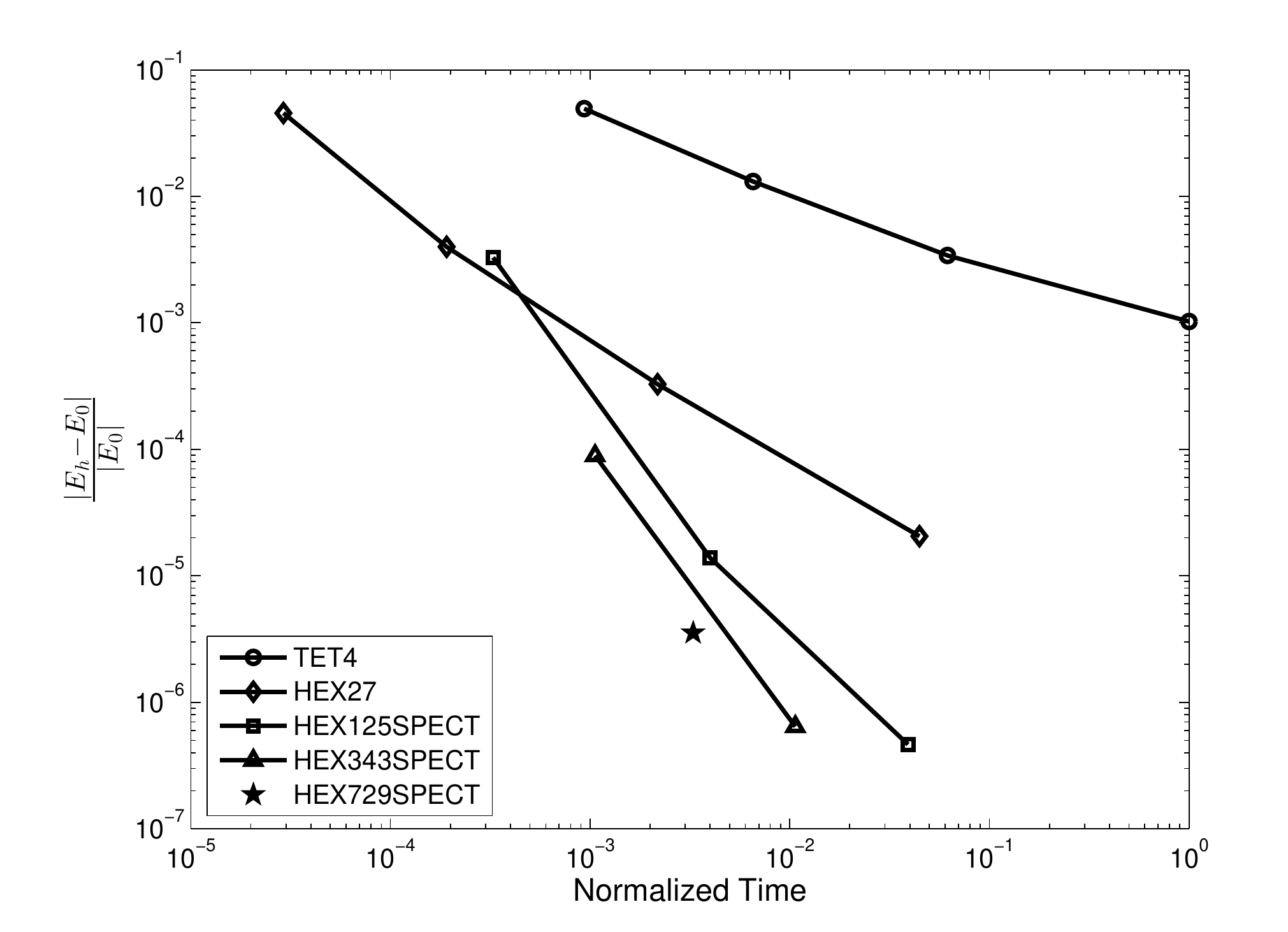}
\caption{\small{Computational efficiency of various orders of finite-element approximations. Case study: barium $2\times 2\times 2$ BCC cluster.}}
\label{fig:BariumClusterTime}
\end{minipage}
\hfill
\begin{minipage}[b]{.48 \textwidth}
\centering \includegraphics[width = \textwidth]{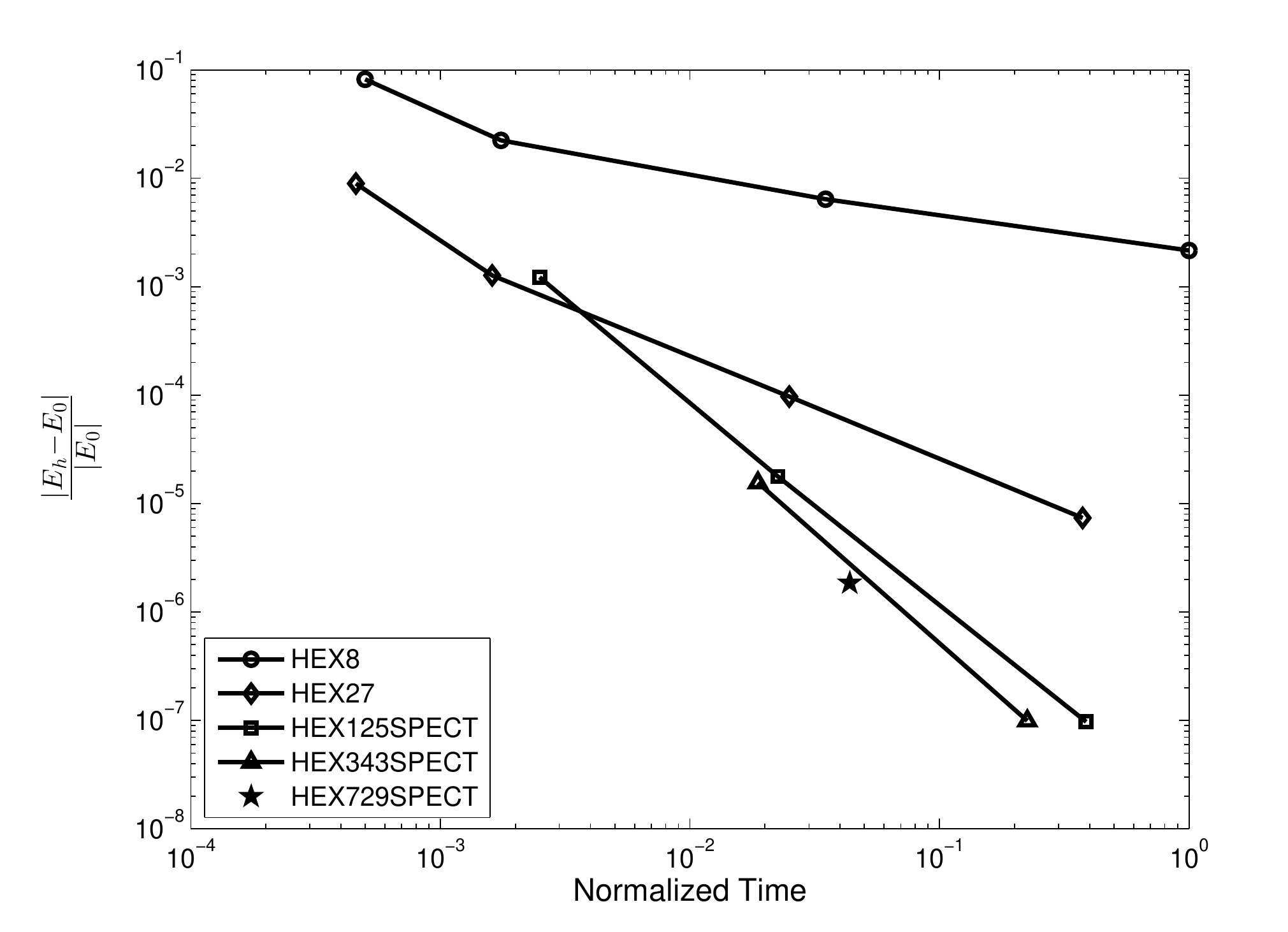}
\caption{\small{Computational efficiency of various orders of finite-element approximations. Case study: bulk calcium FCC crystal.}}
\label{fig:CalciumClusterTime}
\end{minipage}
\end{figure}

\subsubsection{Large materials systems}
In this section, we further investigate the computational efficiency afforded by higher-order finite-elements by considering larger
material systems involving both local pseudopotential and all-electron calculations. As a part of this investigation, we demonstrate the effectiveness of the higher-order finite-elements by comparing the solution times of calculations with local pseudopotentials against plane-wave basis set and solution times of all-electron calculations against a Gaussian basis set providing similar relative accuracy in the ground-state energy. The systems chosen as a part of this study are aluminium clusters containing $3\times 3\times 3$, $5\times 5\times 5$, $7\times 7\times 7$ FCC unit cells for the case of pseudopotential calculations. A graphene sheet containing 100 atoms and a coordination complex, tris (bipyridine) ruthenium, containing 61 atoms are chosen in the case of all-electron calculations.
\paragraph{Local pseudopotential calculations:}
The pseudopotential calculations on aluminum clusters are conducted using the evanescent core pseudopotential~\cite{fiolhais}. All the simulations in the case studies involving local pseudopotentials use superposition of single-atom electron densities as the initial guess for the electron density in the first SCF iteration. We used the Krylov-Schur iteration for solving the eigenvalue problem in the first SCF iteration and used Chebyshev filtered subspace iteration for the subsequent SCF iterations. The order of Chebyshev filters used for the $3\times 3\times 3$, $5\times 5\times 5$ and $7\times 7\times 7$ aluminum clusters are $12$, $30$ and $50$ respectively. All simulations are conducted using a finite temperature Fermi-Dirac smearing parameter of $0.0003168~Ha$ (T=100K). In order to conduct a one-to-one comparison, the plane-wave simulations are also conducted using the same pseudopotential and finite temperature Fermi-Dirac smearing used in the finite-element simulations.
\subparagraph{Aluminium $\mathbf{3\times 3\times 3}$ cluster:}\label{sec:Al3x3x3}
We first consider an aluminium cluster containing $3\times 3\times 3$ FCC unit cells with a lattice spacing of $7.45~a.u.$. The system comprises of 172 atoms with 516 electrons. The finite-element mesh for this calculation is chosen to be uniform in the cluster region containing aluminium atoms, while coarse-graining away. The mesh coarsening rate in the vacuum is determined numerically by employing the asymptotic solution of the far-field electronic fields, estimated as a superposition of single atom far-field asymptotic fields, in equation~\eqref{optimmesh}. To this end, the asymptotic behavior of atomic wavefunctions in an aluminium atom ($\psibar(r)$) is chosen to be
\begin{equation}\label{mesh_Al_psi}
\psibar(r) = \sqrt{\frac{\xi^3}{\pi}} \exp\Bigl[-\xi \;r\Bigr] \;\;\;\;\text{where}\;\;\;\;\xi = \sqrt{3\,\tilde{\epsilon}}\,,
\end{equation}
where $\tilde{\epsilon}$ (negative of the eigenvalue of the highest occupied eigenstate) is determined from a single aluminum atom coarse mesh calculation. The corresponding total electrostatic potential, governed by the Poisson equation with total charge density being equal to the sum of $3\bar{\psi}^{2}(r)$ and $-3\delta(r)$, is given by
\begin{equation}\label{mesh_Al_phi}
\bar{\phi}(r) = -3 \exp{(-2\xi \,r)}\left(\xi + \frac{1}{r}\right)\,.
\end{equation}

\begin{table}[htbp]
\caption{\small{Convergence with finite-element basis for a $3\times 3\times 3$ FCC aluminum cluster using HEX125SPECT element.}}
 \begin{center}
 \begin{tabular}{|c|c|c|}
   \hline
 Degrees of freedom & Energy per atom (eV) & Relative error  \\ \hline\hline
$184,145$ &  -54.1076597  & 3.4 $\times 10^{-2}$   \\ \hline
$1,453,089$  & -56.0076146 & 1.8 $\times 10^{-4}$  \\ \hline
$11,546,177$ & -56.01788889 & 1.3 $\times 10^{-6}$ \\\hline \hline
\end{tabular}
\end{center}\label{tab:conv3x3x3Cluster}
\end{table}

We obtain the converged value of the ground-state energy by following the procedure outlined in Section~\ref{RatesOfConv}. We use a sequence of increasingly refined HEX125SPECT finite-element meshes on a cubic simulation domain of side $400~a.u.$, and compute the ground-state energy $E_h$ for these meshes which are tabulated in Table~\ref{tab:conv3x3x3Cluster}. Using the extrapolation procedure discussed in Section~\ref{RatesOfConv} (equation~\eqref{convfit}), we compute the reference ground-state energy (energy per atom) to be $E_0=-56.0179603~eV$. The relative errors reported in Table~\ref{tab:conv3x3x3Cluster} are with respect to this reference energy, and this reference energy will be used to compute the relative errors for all subsequent finite-element and plane-wave basis simulations for this material system.

\begin{table}[htbp]
\caption{\small{Comparison of higher-order finite-element (FE) basis with plane-wave basis for a $3\times 3\times 3$ FCC aluminum cluster.}}
\begin{center}
\begin{tabular}{|p{5cm}|p{2.5cm}|p{2.8cm}|c|c|}
   \hline
  Type of basis set & Energy (eV) per atom & Abs. error (eV) per atom & Rel. error  & Time (CPU-hrs) \\ \hline\hline
Plane-wave basis (cut-off $30~Ha$; cell-size of $60~a.u.$; $847,348$ plane waves)&  -56.0181429& 0.00018 &3.3 $\times 10^{-6}$  & $646$ \\ \hline
FE basis (HEX343SPECT; $2,808,385$ nodes; domain size: $200~a.u.$) & -56.0177597& 0.0002& 3.6 $\times 10^{-6}$  & $371$\\
 \hline
\end{tabular}
\end{center}\label{tab:eff3x3x3cluster}
\end{table}

In order to assess the performance of higher-order finite-elements on this material system, we conduct the finite-element simulation with a mesh containing HEX343SPECT elements and compare the computational CPU-time against a plane-wave basis code ABINIT~\cite{ABINIT} solved to a similar relative accuracy in the ground-state energy with respect to reference value $E_0$ obtained above. The finite-element simulation has been performed on a cubic domain size of $200~a.u.$ with a mesh coarsening rate away from the cluster of atoms as determined using equations~\eqref{optimmesh},~\eqref{mesh_Al_psi},~\eqref{mesh_Al_phi}. The resulting mesh contains $12,800$ HEX343SPECT elements with $2,808,385$ nodes. The plane-wave basis simulation has been performed by using a cell-size of $60~a.u.$ and a cut-off energy of $30~Ha$ with one k-point to obtain the ground-state energy of similar relative accuracy(~0.0002eV/atom) as the finite-element simulation. The computational times for the finite-element basis and the plane-wave basis for the full self-consistent solution are tabulated in Table~\ref{tab:eff3x3x3cluster}. These results demonstrate that the performance of higher-order finite-element discretization is comparable, in fact better by a two-fold factor, to the plane-wave basis for this material system. Figure~\ref{fig:al3x3x3Contour} shows the electron density contours on the mid-plane of the $3\times 3\times 3$ aluminum cluster from the finite-element simulation.
\begin{figure}[h]
\hfill
\begin{minipage}[t]{0.48 \textwidth}
\centering
\includegraphics[width=\textwidth]{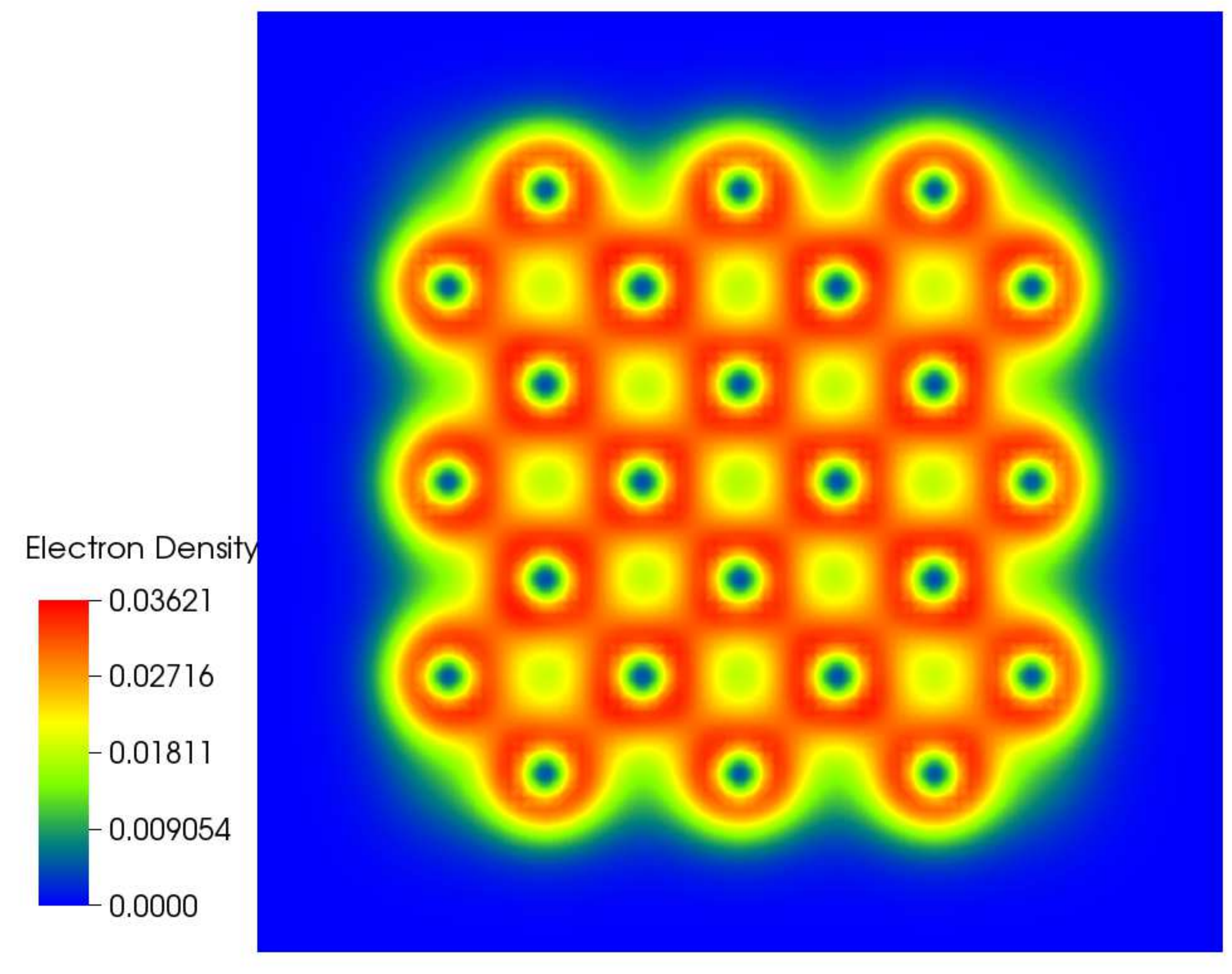}
\caption{\small{Electron density contours of $3\times 3\times 3$ FCC aluminium cluster.}}
\label{fig:al3x3x3Contour}
\end{minipage}
\hfill
\begin{minipage}[t]{0.48 \textwidth}
\centering
\includegraphics[width=\textwidth]{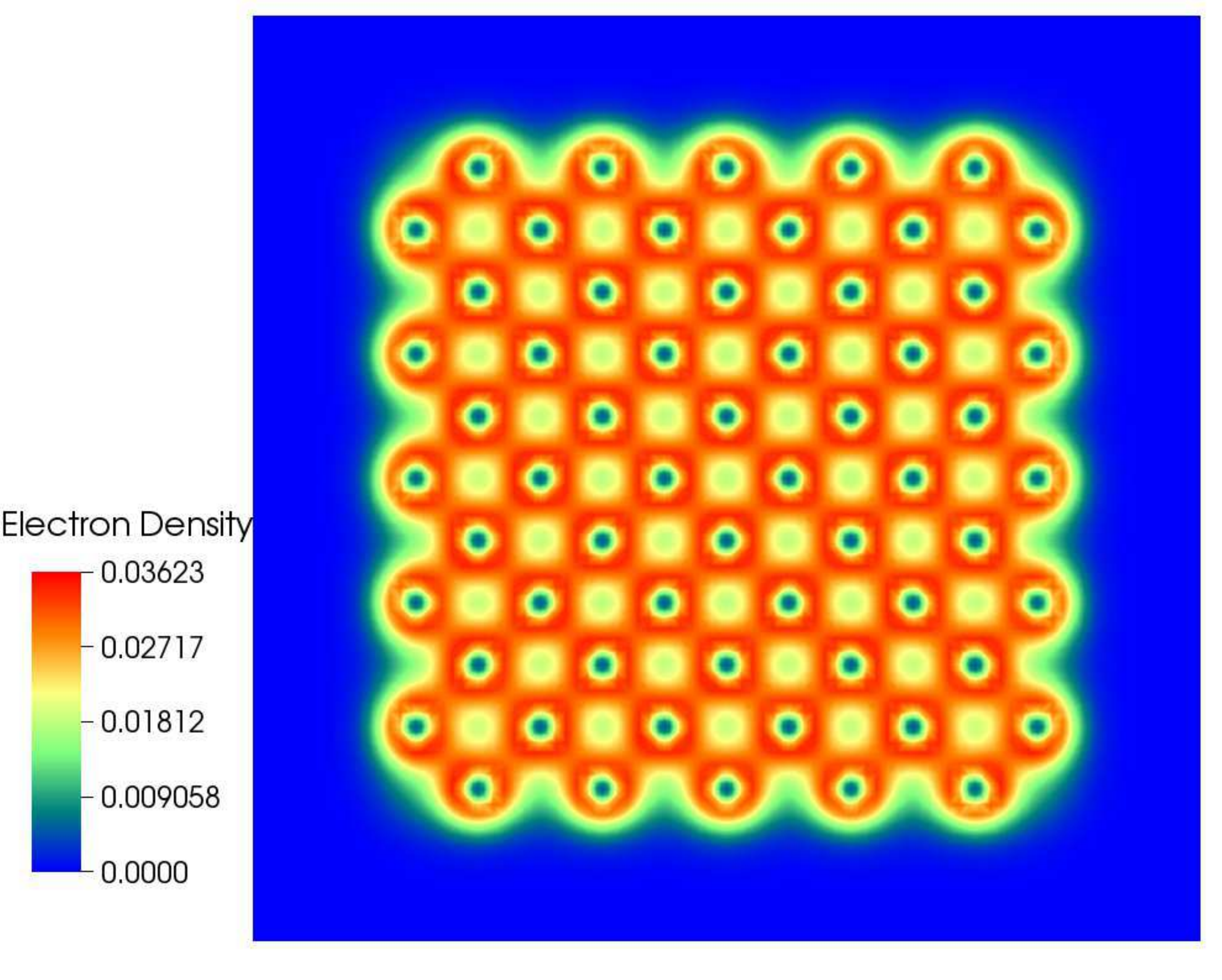}
\caption{\small{Electron density contours of $5\times 5\times 5$ FCC aluminum cluster.}}
\label{fig:al5x5x5Contour}
\end{minipage}
\end{figure}
\subparagraph{Aluminium $\mathbf{5\times 5\times 5}$ cluster:}
We next consider an aluminium cluster containing $5\times 5\times 5$ FCC unit cells with a lattice spacing of $7.45~a.u.$. This material system comprises of $666$ atoms with $1998$ electrons. The finite-element mesh is constructed along similar lines as the $3\times 3\times 3$ cluster, where a uniform mesh resolution is chosen in the cluster region containing aluminium atoms and coarse-graining away into the vacuum with a numerically determined coarsening rate as discussed earlier. As before, we first obtain the reference ground-state energy by using a sequence of increasingly refined HEX125SPECT finite-element meshes with a cubic simulation domain of side $800~a.u.$ and extrapolating the computed ground-state energies on these meshes (cf. Table~\ref{tab:conv5x5x5Cluster}). The reference ground-state energy (energy per atom), thus determined, is $E_0=-56.0495071~eV$.
\begin{table}[htbp]
\caption{\small{Convergence with finite-element basis for a $5\times 5\times 5$ FCC cluster using HEX125SPECT element.}}
\begin{center}
 \begin{tabular}{|c|c|c|}
   \hline
Degrees of freedom & Energy per atom(eV) &Relative error  \\ \hline\hline
$394,169$ & -54.8536312 & 2.1 $\times 10^{-2}$   \\ \hline
$3,124,593$  & -56.0425334 & 1.2 $\times 10^{-4}$  \\ \hline
$24,883,937$ & -56.0494500 & 1.01 $\times 10^{-6}$ \\\hline \hline
\end{tabular}
\end{center} \label{tab:conv5x5x5Cluster}
\end{table}

We now assess the performance of higher-order finite-elements on this material system in comparison to a plane-wave basis. The finite-element simulation in this case has been performed on a simulation domain size of $400~a.u.$ containing $36,064$ HEX343SPECT elements with $7,875,037$ nodes. The plane-wave basis simulation conducted using the ABINIT package has been performed on a cell-size of $80~a.u.$ and a cut off energy of $30~Ha$ with one k-point to sample the Brillouin zone to obtain the ground-state energy of similar accuracy with respect to the reference value $E_0$ obtained above. The solution time for the finite-element basis and the plane-wave basis are tabulated in Table~\ref{tab:eff5x5x5Cluster}, which shows that using higher-order finite-elements one can achieve similar computational efficiencies as afforded by a plane-wave basis, at least in the case of non-periodic calculations. Figure~\ref{fig:al5x5x5Contour} shows the electron density contours on the mid-plane of the $5\times 5\times 5$ FCC cluster from the finite-element simulation.

\begin{table}[htbp]
\caption{\small{Comparison of higher-order finite-element (FE) basis with plane-wave basis sets for a $5\times 5\times 5$ FCC aluminum cluster.}}
\begin{center}
 \begin{tabular}{|p{5cm}|p{2.5cm}|p{2.8cm}|c|c|}
   \hline
Type of basis set & Energy (eV) per atom & Abs. error (eV) per atom & Rel. error  & Time (CPU-hrs) \\ \hline\hline
Plane-wave basis (cut-off $30~Ha$; cell-size of 80 a.u; $2,009,661$ plane waves)& -56.0506841& 0.0012 & 2.1 $\times 10^{-5}$  & 7307  \\ \hline
FE basis ( HEX343SPECT; $7,875,037$ nodes; domain size: $400~a.u.$) & -56.04906430&0.00044&7.9 $\times 10^{-6}$ & 6619 \\
 \hline
\end{tabular}
\end{center}\label{tab:eff5x5x5Cluster}
\end{table}

\subparagraph{Aluminium $\mathbf{7\times 7\times 7}$ cluster:}
As a final example in our case study with local pseudopotential calculations, we study an aluminium cluster containing $7\times 7\times 7$ FCC unit cells with a lattice spacing of $7.45~a.u.$ This material system comprises of 1688 atoms with 5064 electrons. We only use the finite-element basis to simulate this system as the plane-wave basis calculation was beyond reach for this material system with the computational resources at our disposal. The finite-element simulation has been performed on a cubic simulation domain with a side of $800~a.u.$. The finite-element mesh was constructed as described in the simulation of other aluminum clusters, and comprised of $69,984$ HEX343SPECT elements with $15,257,197$ nodes. The computed energy per atom for this aluminum cluster is $-56.06826762~eV$, and figure~\ref{fig:al7x7x7Contour} shows the electron density contours on the mid-plane of the cluster.
\begin{figure}[htbp]
\begin{center}
\includegraphics[width=0.6\textwidth]{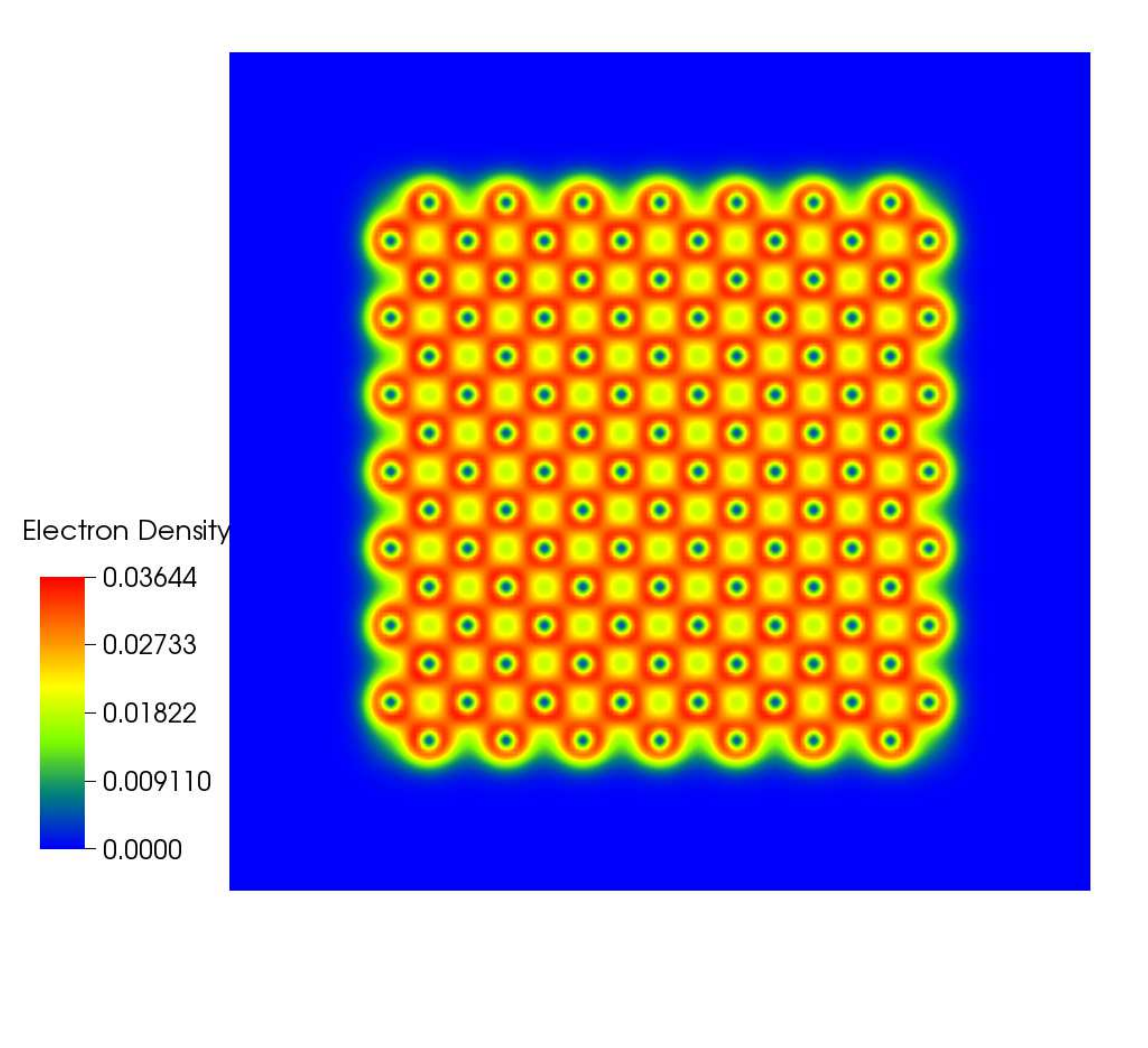}
\caption{\small{Electron density contours of $7\times 7\times 7$ FCC aluminium cluster.}}
\label{fig:al7x7x7Contour}
\end{center}
\end{figure}
\paragraph{All-electron calculations:}
We now demonstrate the performance of higher-order finite-element discretization in the case of all-electron calculations, by considering a graphene sheet and a transition metal complex, namely, tris (bipyridine) ruthenium as our benchmark problems. In these calculations which employ HEX125SPECT finite-element, the initial guess for electron-density in the first SCF iteration is computed by interpolating the self-consistent solution obtained from a lower-order HEX27 finite-element mesh. The computational times reported for these calculations include the time taken to generate the initial guess.
\subparagraph{Graphene sheet:}
We begin with a graphene sheet containing 100 atoms (600 electrons) with a C-C bond length of $2.683412~a.u.$. We first obtain a converged value of the ground-state energy by conducting simulations using the GAUSSIAN package~\cite{GAUSSIAN} using the polarization consistent DFT basis sets~(pc-n), which have been demonstrated to provide a systematic convergence in Kohn-Sham DFT calculations~\cite{jensen}. Since these basis sets are not directly available in the GAUSSIAN package, we introduce them as an external basis set for conducting these simulations. The ground-state energy value obtained for triple-zeta pc-3 basis set is taken as the reference value ($E_0$) in this study, which is computed to be $E_0=-37.7619141~Ha$ per atom. We note that we did not use the extrapolation procedure outlined in Section~\ref{RatesOfConv}, as it was computationally beyond reach with our resources.

We assess the performance of higher-order finite-elements on this material system by comparing the computational CPU-time against the pc-2 basis set, which provides similar relative accuracy in the ground-state energy with respect to the $E_0$ determined above. The finite-element mesh for this problem is chosen to be uniform in the region containing carbon atoms with local refinement around each atom while coarse-graining away into vacuum. The mesh coarsening rate in the vacuum is determined numerically by employing the asymptotic solution of the far-field electronic fields, estimated as a superposition of single atom far-field asymptotic fields, in equation~\eqref{optimmesh}. To this end, the asymptotic behavior of the atomic wavefunctions in carbon atom ($\psibar(r)$) is chosen to be as in equation~\eqref{carbon}. Since the GAUSSIAN package does not account for partial occupancy of energy levels, we suppress the Fermi-Dirac smearing in the finite-element simulations for the present case in order to conduct a one-to-one comparison. A Chebyshev filter of order 500 is used in this simulation. The simulation domain used is a cubical domain of side $300$ $a.u.$ with Dirichlet boundary conditions employed on electronic wavefunctions and total electrostatic potential. Table~\ref{tab:grap100} shows the relevant results of the simulation with figure~\ref{fig:grap100} showing the electron density contours of the graphene sheet. We remark that the finite-element simulation with HEX125SPECT elements is around ten-fold slower than the GAUSSIAN simulation with pc basis set.
\begin{table}[htbp]
\caption{\small{100 atom graphene sheet (600 electrons).}}
 \begin{center}
 \begin{tabular}{|p{5cm}|p{2.5cm}|p{2.8cm}|c|c|}
   \hline
 Type of basis set &  Energy (Ha) per atom & Abs. error (Ha) per atom & Rel. error & Time (CPU-hrs) \\ \hline\hline
pc2 (Gaussian; $3,000$ basis functions)  & -37.757954 & 0.00396 & 1.06 $\times 10^{-4}$  & 666 \\ \hline
FE basis (HEX125SPECT; $8,004,003$ nodes) &  -37.757382 &0.00452&1.2 $\times 10^{-4}$  & 7461\\
 \hline
\end{tabular}
\end{center}\label{tab:grap100}
\end{table}
\begin{figure}[htbp]
\begin{center}
\includegraphics[width=0.75\textwidth]{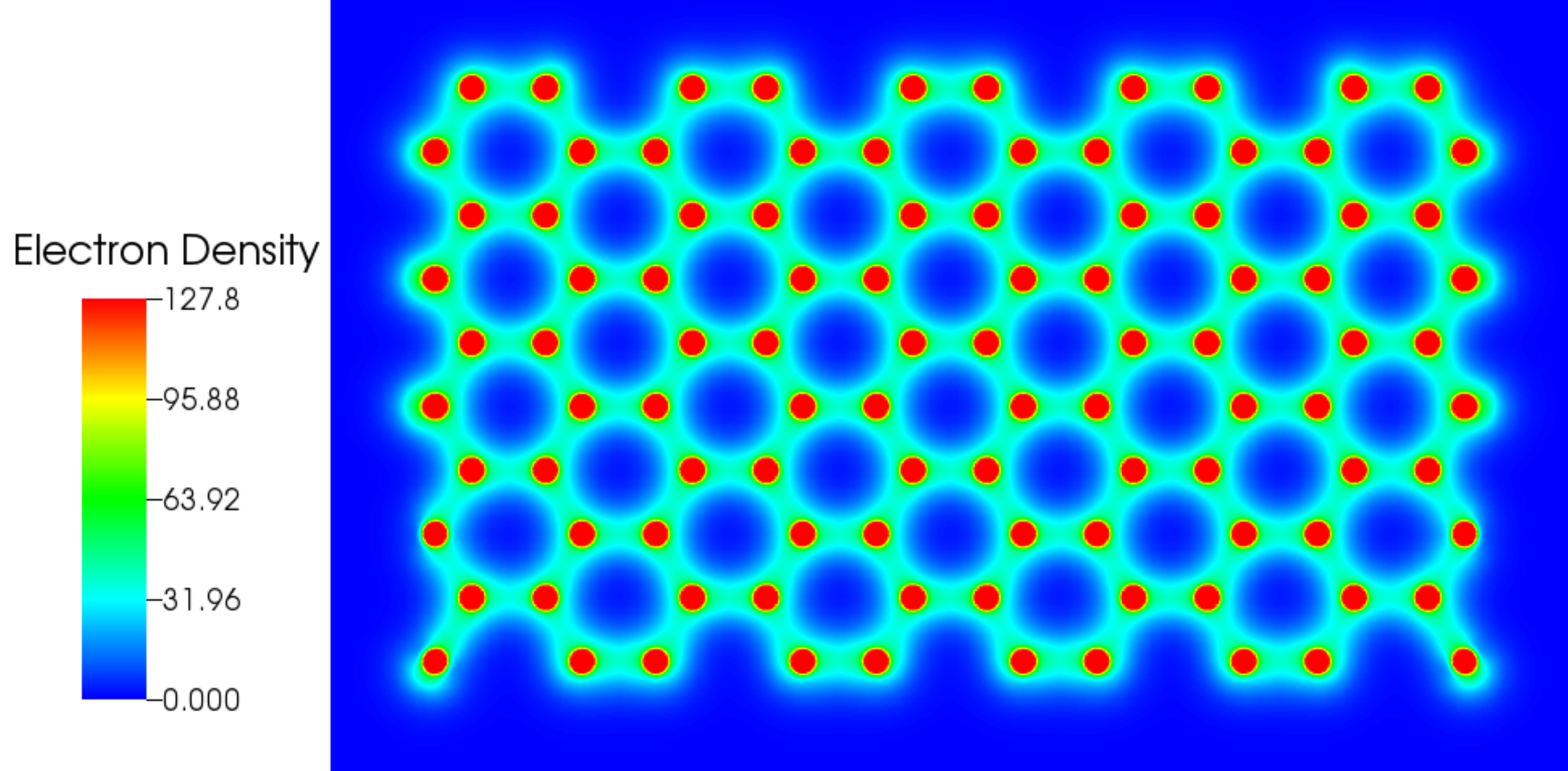}
\caption{\small{Electron density contours of a graphene sheet containing 100 atoms.}}
\label{fig:grap100}
\end{center}
\end{figure}

\subparagraph{Tris (bipyridine) ruthenium:}
We now demonstrate the performance of our numerical algorithms on a compound involving a moderately heavy metal. We choose Tris (bipyridine) ruthenium complex (TBR) as our benchmark problem, which belongs to the class of transition metal complexes that has attracted significant attention because of its distinctive optical properties~\cite{TBR}. Though the prototype complex TBR is extensively studied as a dication, we consider the charge neutral complex in this case study. The geometric structure of this compound was determined using the geometry optimization option in the GAUSSIAN package. The resulting compound consists of a central ruthenium atom bonded to nitrogen atoms (ligands) of the three bipyridine rings as shown in Figure~\ref{fig:TBRGeom}. The compound contains a total of 61 atoms (290 electrons) comprising of 30 carbon atoms, 24 hydrogen atoms, 6 nitrogen atoms and 1 ruthenium atom.

We conducted the simulations using GAUSSIAN package in the following manner. Polarized consistent DFT basis sets provide a series of improved basis sets for carbon, hydrogen and nitrogen atoms, but are not suited for the ruthenium atom. Hence, we first conducted the GAUSSIAN simulation using the most refined polarized consistent basis set namely pc-4 basis functions for carbon, hydrogen and nitrogen, and used the polarized valence double zeta basis function designed for DFT (DZDFTO) for ruthenium. However, the self-consistent iteration did not converge for this choice of basis sets. Hence, we conducted the GAUSSIAN simulation with a coarser polarized consistent quadruple zeta basis (pc-3) functions for carbon, hydrogen and nitrogen atoms, and used the DZDFTO basis for ruthenium. The self-consistent iteration did converge for this choice, and the ground-state energy and solution time for this case are tabulated in Table~\ref{tab:TBR}.

We now assess the performance of higher-order finite-elements on this material system.  The finite-element mesh for this problem is chosen to be uniform in the region where the molecule is present with local refinement around each atom while coarse-graining away into the vacuum. The mesh coarsening rate in the vacuum is determined numerically by employing the asymptotic solution of far-field asymptotic fields, estimated as a superposition of single atom far-field asymptotic fields in equation~\ref{optimmesh}. The finite-element simulation is conducted using HEX125SPECT elements with a Chebyshev filter of order 500 and a fermi-dirac smearing parameter of $0.00158~Ha$~(T=500K). The simulation domain used is a cubical domain of side $200$ $a.u.$ with Dirichlet boundary conditions employed on electronic wavefunctions and total electrostatic potential. Table~\ref{tab:TBR} demonstrates the relevant results for the finite-element simulation with figure~\ref{fig:TBR} showing the electron-density contours. We remark that the ground-state energy per atom obtained from finite-element simulation differs by $0.00146~Ha$ in comparison with aforementioned GAUSSIAN simulation. As observed in the case of graphene sheet, the finite-element simulation is ten-fold slower than the GAUSSIAN simulation.

 \begin{table}[htbp]
\caption{\small{Tris(bipyridine)ruthenium (290 electrons).}}
 \begin{center}
 \begin{tabular}{|p{5cm}|c|c|}
   \hline
  Type of basis set &  Energy (Ha) per atom &  Time (CPU-hrs) \\ \hline\hline
pc3 (Gaussian; $3,156$ basis functions)  & -96.923328 & 311   \\ \hline
FE basis (HEX125SPECT; $10,054,041$ nodes) &  -96.924636 & 3927\\
 \hline
\end{tabular}
\end{center}\label{tab:TBR}
\end{table}

\begin{figure}[htbp]
\hfill
\begin{minipage}[t]{.48 \textwidth}
\centering
\includegraphics[width = \textwidth]{TBRII_geometry.eps}
\caption{\small{Schematic of Tris(bipyridine) ruthenium complex.}}
\label{fig:TBRGeom}
\end{minipage}
\hfill
\begin{minipage}[t]{.48 \textwidth}
\centering
\includegraphics[width =\textwidth]{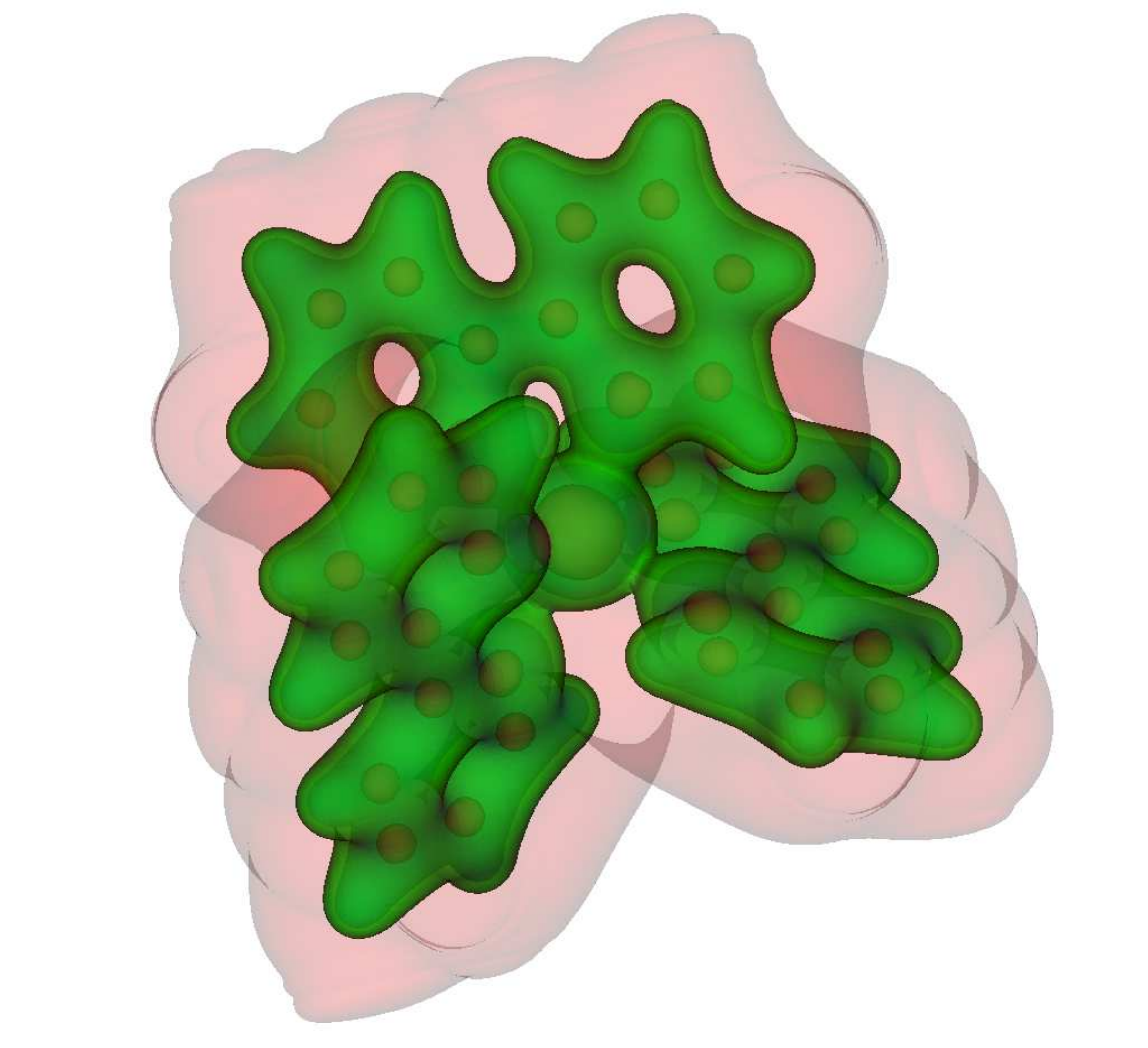}
\caption{\small{Electron density contours of Tris (bipyridine) ruthenium complex.}}
\label{fig:TBR}
\end{minipage}
\end{figure}

We note that the Gaussian basis sets employed in these benchmark studies are highly optimized for specific material systems, which is reflected in the far fewer basis functions required for the above calculations. We believe this is the main reason for the superior performance of Gaussian basis. We also note that the computational time using finite-element basis functions can possibly be reduced significantly by enriching the finite-element shape functions with single atom wavefunctions using the partitions-of-unity approach~\cite{PUFEM1,PUFEM2}. The degree of freedom advantage of the partitions-of-unity approach for Kohn-Sham calculations has been first demonstrated in~\cite{Suku}, and presents a promising future direction for all-electron Kohn-Sham DFT calculations.

\subsection{Scalability of finite-element basis:}\label{sec:scalability}
The parallel scalability of our numerical implementation is demonstrated in Figure~\ref{fig:scalability}. We study the strong scaling behavior by measuring the relative speedup with increasing number of processors on a fixed problem of constant size, which is chosen to be the aluminum $3\times 3\times 3$ cluster discretized with HEX125SPECT elements containing 3.91 million degrees of freedom. The speedup is measured relative to the computational CPU-time taken on 2 processors, as a single processor run was beyond reach due to memory limitations. It is evident from the figure, that the scaling is almost linear. The relative speedup corresponding to 96-fold increase in the number of processors is 87.82, which translates into an efficiency of 91.4\%.
\begin{figure}[htbp]
\begin{center}
\includegraphics[width=0.5\textwidth]{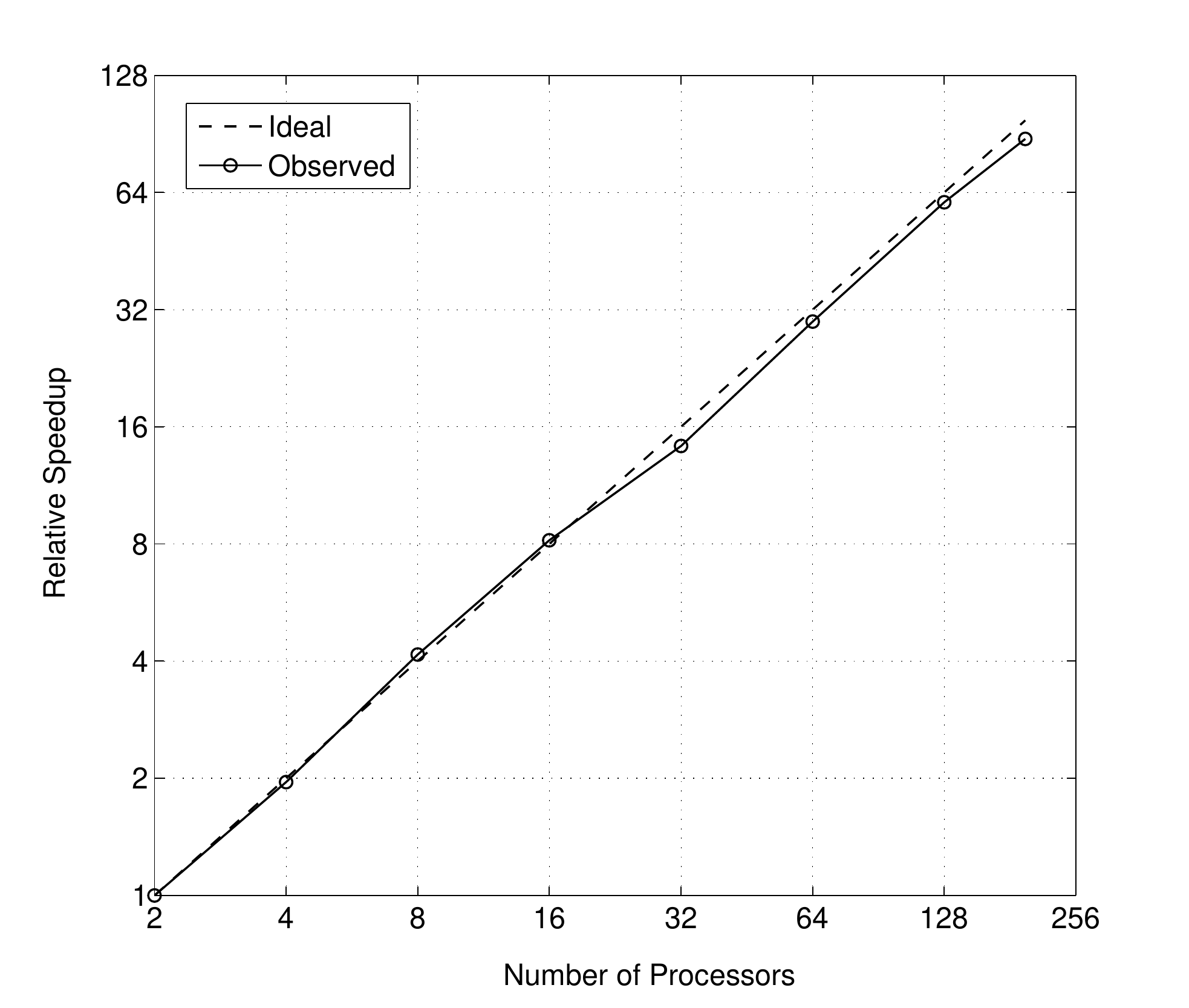}
\captionof{figure}{Relative speedup as a function of the number of processors.}
\label{fig:scalability}
\end{center}
\end{figure} 
\section{Conclusions}\label{concl}
In the present study, we have analyzed numerically the higher-order adaptive finite-element discretization of the Kohn-Sham DFT problem. The present work is focussed towards demonstrating the significant computational efficiency in electronic structure calculations that is afforded by using an adaptive higher-order spectral finite-element discretization in conjunction with appropriate solution strategies. We use the self-consistent field formulation of the Kohn-Sham DFT problem as our starting point. In order to aid our investigation, we first developed estimates for the discretization error in the ground-state energy in terms of the ground-state electronic fields (wavefunctions and electrostatic potential) and characteristic mesh-size. These error estimates and the {\it a priori} knowledge of the asymptotic solutions of far-field electronic fields were used to construct mesh coarsening rates for the various benchmark problems considered in this work. Since the finite-element discretization of the Kohn-Sham problem results in a generalized eigenvalue problem, which is computationally expensive to solve, we presented an approach to trivially transform this into a standard eigenvalue problem by using spectral finite-elements in conjunction with the Gauss-Lobatto quadrature rules that results in a diagonal overlap matrix. We subsequently examined two different strategies to solve the Kohn-Sham problem: (i) explicit computation of eigenvectors at every self-consistent field iteration; (ii) a Chebyshev filtering approach that directly computes the occupied eigenspace. Our investigations suggest that the use of spectral finite-elements and Gauss-Lobatto rules in conjunction with Chebyshev acceleration techniques to compute the eigenspace gives a $10-20$ fold computational advantage, even for modest materials system sizes, in comparison to traditional methods of solving the standard eigenvalue problem where the eigenvectors are computed explicitly. Further, the proposed approach has been observed to provide a staggering $100-200$ fold computational advantage over the solution of a generalized eigenvalue problem that does not take advantage of the spectral finite-element discretization and Gauss-Lobatto quadrature rules.

Using the derived error estimates and the {\it a priori} knowledge of the asymptotic solutions of far-field electronic fields, we constructed close to optimal finite-element meshes for the various benchmark problems, which include all-electron calculations on systems comprising of boron atom and methane molecule, and local pseudopotential calculations on barium cluster and bulk calcium crystal. We employed the Chebyshev filtering approach on the transformed standard eigenvalue problem in our numerical investigations to study the computational efficiency of higher-order finite-element discretizations. To this end, we first investigated the performance of higher-order elements by studying the convergence rates of linear tetrahedral element and hexahedral spectral-elements up to sixth-order. In all the benchmark problems considered, we observed close to optimal rates of convergence for the finite-element approximation in the ground-state energy. Importantly, we note that optimal rates of convergence were obtained for all orders of finite-element approximations, considered in this work, even for all-electron Kohn-Sham DFT calculations with Coulomb-singular potentials, the mathematical analysis of which, to the best of our knowledge, is an open question to date and has not been numerically demonstrated elsewhere.

We further investigated the computational efficiency afforded by the use of higher-order finite-elements up to eighth-order spectral-elements. To this end, we used the mesh coarsening rates determined from the proposed mesh adaption scheme and studied the CPU time required to solve the benchmark problems. Our results demonstrate that significant computational savings can be realized by using higher-order elements. For instance, a staggering $1000-$fold savings in terms of CPU-time are realized by using sixth-order hexahedral spectral-element in comparison to linear tetrahedral element. We also note that the point of diminishing returns in terms of computational efficiency was determined to be around sixth-order for the benchmark systems we examined. The degree of freedom advantage of higher-order finite-elements is nullified by the increasing per basis-function costs beyond this point. To further assess the performance of higher-order finite-elements, we extended our investigations to study large materials systems and compared the computational CPU-time with commercially available plane-wave and Gaussian basis codes. We first conducted simulations on aluminium clusters with local pseudopotential containing 172 atoms and 666 atoms using sixth-order spectral-element in our implementation, as well as, the plane-wave basis in ABINIT package solved to a similar relative accuracy in the ground-state energy. These studies showed that the computational CPU-time required for the finite-element simulations is lesser in comparison to plane-wave basis sets underscoring the fact that higher-order finite-elements can compete with plane-waves, at least in non-periodic settings, when employed in conjunction with efficient solution strategies. Furthermore, we were able to compute the electronic structure of an aluminium cluster containing $1,688$ atoms by employing the sixth-order spectral-element, which was not possible using ABINIT due to large memory requirements. Next, we examined the computational efficiency in the case of all-electron calculations on a graphene sheet containing 100 atoms and tris (bipyridine) ruthenium complex containing 61 atoms. The all-electron calculations were conducted using Gaussian DFT basis sets and the fourth-order spectral-element basis, and we observed that the computational time for the finite-element basis was $10-$fold greater than the Gaussian basis.

The prospect of using higher-order spectral finite-elements as basis functions, in conjunction with the proposed solution strategies, for Kohn-Sham DFT electronic structure calculations is indeed very promising. While finite-elements have the advantages of handling complex geometries and boundary conditions and exhibit good scalability on massively parallel computing platforms, their use has been limited in electronic structure calculations as their computational efficiency compared unfavorably to plane-wave and Gaussian basis functions. The present study shows that the use of higher-order discretizations can alleviate this problem, and presents a useful direction for electronic structure calculations using finite-element discretization. Further, the computational cost in the case of all-electron calculations can be further reduced by enriching the finite-element shape functions with single-atom wavefunctions, and this is currently being studied. Last, but not the least, the implications of using higher-order spectral finite-element approximations in the development of a linear scaling Kohn-Sham DFT formulation is a worthwhile subject for future investigation. 
\section*{Acknowledgements}
We thank Dr. John Pask at Lawrence Livermore National Laboratory for the many useful discussions on this work. We gratefully acknowledge the support of National Science Foundation under Grant No. 1053145 and the Army Research Office under Grant No. W911NF-09-0292. V.~G. also gratefully acknowledges the support of Air Force Office of Scientific  Research under Grant No. FA9550-09-1-0240 and FA9550-13-1-0113 under the auspices of which part of the computational framework was developed. V.~G. also acknowledges the Alexander von Humboldt Foundation through a research fellowship, and is grateful to the hospitality of the Max-Planck Institute for Mathematics in Sciences and the Institute of Applied Mechanics at University of Stuttgart while completing this work. This work used the Extreme Science and Engineering Discovery Environment (XSEDE), which is supported by National Science Foundation grant number OCI-1053575.

\appendix
\section{Discrete formulation of electrostatic interactions in all-electron calculations}\label{self_energy}
The electrostatic interaction energy in the discrete formulation is given by
\begin{align}\label{eqn:self_energy1}
E^h_{\text{electrostatic}} &= \left[-\frac{1}{8\pi}\intomega |\del \phi_{h}(\bx)|^2 \dx + \intomega (\rho_{h}(\bx) + b(\bx))\phi_{h}(\bx)\dx\right] - E^h_{\text{self}} \quad \mbox{where}\\
E^h_{\text{self}} &= \frac{1}{2}\sum_{I=1}^{M}\intomega Z_{I} \delta(\bx-\bR_{I}){V_{h_{I}}}(\bx) \dx\,\,,
\end{align}
where $\phi_h$ denotes the total electrostatic potential field, corresponding to the electron-density $\rho_h$ and nuclear charge distribution $b(\bx)$, computed in the finite-element basis.  The nuclear potential corresponding to the $I^{th}$ nuclear charge, i.e $Z_{I}\delta(\bx-\bR_{I})$, computed in the finite-element basis is denoted by $V_{h_{I}}$. The nuclear charges located on the nodes of the finite-element triangulation are treated as point charges and the discreteness of the finite-element triangulation provides a regularization of the potential fields. However, the self-energy of the nuclei in this case is mesh-dependent and diverges upon mesh refinement. Thus, care must be taken to evaluate the total electrostatic potential $\phi_h$ and the nuclear potentials $V_{h_{I}}$, $I=1,\ldots M$ on the same finite-element mesh. The electrostatic interaction energy in equation~\eqref{eqn:self_energy1} can be simplified to
\begin{equation}\label{eqn:self_energy2}
E^h_{\text{electrostatic}} =  \frac{1}{2} \intomega \rho(\bx) \phi_{h}(\bx)\dx + \underbrace{\frac{1}{2}\intomega b(\bx)\phi_{h}(\bx) \dx}_{(a)}  - \underbrace{\frac{1}{2}\sum_{I=1}^{M}\intomega \delta(\bx-\bR_{I}){V_{h_{I}}}(\bx) \dx}_{(b)}\,.
\end{equation}
In the above expression, the first term on the righthand side contains the contribution of  electron-electron interaction energy and half contribution of the electron-nuclear interaction energy. The term ($a$) contains the other half of the electron-nuclear interaction energy, nuclear-nuclear repulsion energy, and the self energy of the nuclei. The term ($b$) represents the self energy of the nuclei. By evaluating all the electrostatic potentials on the same finite-element mesh, the divergent self energy contribution in term ($a$) equals the sum of separately evaluated divergent self-energy terms in ($b$) owing to the linearity of the Poisson problem. The boundary conditions used for the computation of the discrete potential fields are homogeneous Dirichlet boundary conditions for total electrostatic potential ($\phi_h$) and Dirichlet boundary conditions with the prescribed Coulomb potential for nuclear potentials ($V_{h_{I}}$), applied on a large enough domain where the boundary conditions become realistic. The numerical results we present below show that the diverging components of self energy in terms ($a$) and ($b$) indeed cancel. To this end, we present the case study of the electrostatic interaction energy computed for a methane molecule with the geometry as described in section 5.1.1. The electron-density $\rho(\bx)$  is chosen to be the superposition of the distributions computed from equation~\eqref{carbon} with $\xi$ equal to $0.83235$ and equation~\eqref{hydrogen}, and normalized to the number of electrons in the system. We choose  a sequence of refined meshes obtained by uniform subdivision of initial coarse mesh with HEX27 and HEX125SPECTRAL elements. The results in tables~(\ref{tab:self_energy1}) and ~(\ref{tab:self_energy2}) show that while terms ($a$) and ($b$) diverge upon mesh refinement, the electrostatic energy nevertheless converges suggesting the cancelation of divergent self energy terms. Figure~(\ref{fig:ElectrostaticsConvergence}) shows the convergence rates for the electrostatic energy which are close to optimal. The value of $E_0$ in the above plot is obtained using the extrapolation procedure as discussed in section 5.1 and is found to be $-23.79671760794$.

\begin{table}[h]
\caption{Convergence of $E^h_{\text{electrostatic}}$ for ``HEX27'' element}
\centering
\begin{tabular}{|c|c|c|c|}
  \hline
  Degrees of Freedom & Term ($a$)  & Term ($b$) & $E^h_{electrostatic}$ \\ \hline\hline
  13059 &1637.011830893&1665.4003185717& -22.77175242597 \\ \hline
  96633 & 3641.626361382&3657.972341778&-23.7628285436  \\ \hline
  765,041 &7299.84650294 &7316.0488206578&-23.7936738766  \\ \hline
  6,090,465 &14534.01973132&14550.219757955&-23.7964615239\\ \hline
  48,608,705 &29248.01834776&29264.218261189&-23.7966991925\\ \hline \hline
\end{tabular}
\label{tab:self_energy1}
\end{table}

\begin{table}[h]
\caption{Convergence of $E^h_{\text{electrostatic}}$ for ``HEX125SPECTRAL'' element}
\centering
\begin{tabular}{|c|c|c|c|}
  \hline
  Degrees of Freedom & Term ($a$)  & Term ($b$) & $E^h_{electrostatic}$ \\ \hline\hline
  64841 & 1995.473107 & 2011.773736  & -23.5282153593 \\ \hline
  510,993 & 4003.728396 & 4019.928861 & -23.7965792134 \\ \hline
  4,058,657 & 8023.635544 & 8039.835447 & -23.79670460619 \\ \hline
  32,355,393 & 16063.4709881 & 16079.670897 & -23.79671759075 \\ \hline \hline
\end{tabular}
\label{tab:self_energy2}
\end{table}

\begin{figure}[htbp]
\begin{center}
\includegraphics[width=0.5\textwidth]{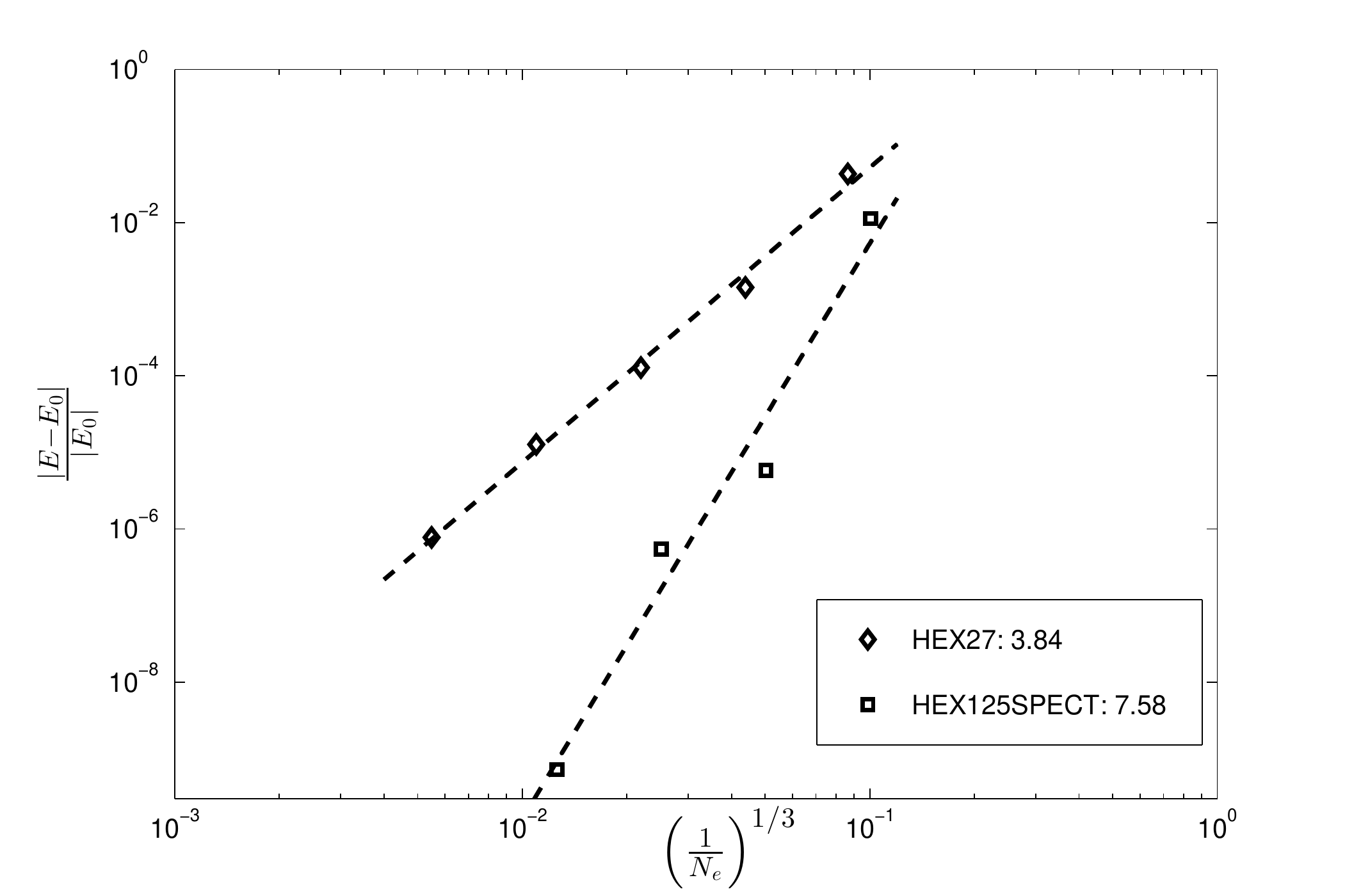}
\captionof{figure}{Convergence rates of electrostatic energy for the finite-element approximations. Case study: Methane molecule.}
\label{fig:ElectrostaticsConvergence}
\end{center}
\end{figure}

\section{Trade-offs in higher-order elements: Source of diminishing returns with increasing order}
In section 5.2.1, we observed that for relative errors commensurate with chemical accuracies, the computational efficiency improves significantly with the order of element up to sixth order, but with diminishing returns beyond this for the benchmark problems considered.  Here we identify the source of diminishing returns on $2\times2\times2$ barium cluster. To this end, we choose three finite-element meshes containing HEX125SPECT, HEX343SPECT and HEX729SPECT elements which give a relative discretization error in the ground-state energy of the order of $10^{-5}$. Table ~(\ref{tab:comp_times}) shows the computational cost (measured in terms of CPU-mins) incurred in building the Hamiltonian matrix and the matrix-vector multiplications involved in a single SCF iteration.
\begin{table}[h]
\caption{Computational cost per SCF iteration. Case study: Barium cluster }
\centering
\begin{tabular}{|l|l|p{4cm}|p{3cm}|p{2.5cm}|}
  \hline
Type of element & Degrees of freedom & Hamiltonian matrix construction ($t_1$ mins)  & Matrix vector multiplication ($t_2$ mins)  & Total time ($t_1+t_2$ mins)\\ \hline\hline
  HEX125SPECT & 667,873 & 18.83 &  15.03 & 33.86\\ \hline
  HEX343SPECT & 143,989 & 21.91 &  3.79  &25.70 \\ \hline
  HEX729SPECT & 41,825 & 25.99 & 1.64 &27.63 \\ \hline\hline
\end{tabular}
\label{tab:comp_times}
\end{table}
We observe from the table~(\ref{tab:comp_times}) that while there is significant reduction in the number of basis functions with increasing polynomial degree to achieve the same relative accuracy, there is no computational savings obtained by using octic element over hexic element due to the increase in the computational cost involved in building the Hamiltonian matrix which increasingly dominates the total time with increasing order of the element. The cost of computing the Hamiltonian matrix depends on the number of basis functions per element and the order of the quadrature rule, both of which increase with increasing order and cannot be mitigated. However, we remark that, for large enough systems (in terms of number of electrons) the orthogonalization of the Chebyshev filtered vectors will become the dominant cost in a SCF iteration, at which point the order of the polynomial beyond which diminishing returns will be observed can move to a polynomial order beyond the sixth order. But, in the present study, for the range of systems considered, this point has not been reached where the orthogonalization step is the dominant cost in a SCF iteration.

\section{Accuracy of Gauss-Lobatto-Legendre quadrature}
As presented in section 4, employing the Gauss-Lobatto-Legendre (GLL) quadrature---a reduced-order quadrature rule---for the overlap matrix corresponding to spectral elements results in a standard eigenvalue problem which can be very effectively solved using the Chebyshev filtering technique. Here, we present the results from benchmark problems to establish the accuracy of this reduced-order quadrature employed in the computation of the overlap matrix. To begin with, we consider the Hydrogen atom which represents the simplest example in the all-electron Kohn-Sham DFT problem. We consider a sequence of finite-element meshes on a spherical domain of radius~ $20~a.u.$ employing HEX125SPECTRAL elements that are uniform subdivisions of the coarsest mesh. The ground-state energies obtained by employing the GLL quadrature rule for the overlap matrix are presented in table~\ref{tab:hyd}, which demonstrates the convergence of the ground-state energies. Further, figure~\ref{fig:hydrogenConvgRate} demonstrates the (near) optimal rate of convergence of the ground-state energies computed employing equation~(\ref{convfit}). The obtained value of $E_0$ is $-0.50000000000926~Ha$ and differs from the theoretical value in the 12th significant digit. This difference can be attributed to the finite size of the domain and the finite precision tolerances used in the solution of the eigenvalue problem.

We subsequently used two benchmark problems---methane molecule (all-electron calculation) and $2\times2\times2$ barium cluster (local pseudopotential calculation)---to compare the ground-state energies obtained by employing GLL quadrature rules for the overlap matrix with those obtained by employing Gauss quadrature rules. For each of these benchmark systems, we considered a coarse and a relatively fine mesh discretization for different orders of discretization. These results are tabulated in table~\ref{tab:GLL}. We note that the absolute difference in ground-state energies per atom between GLL and Gauss quadrature rules, for both the systems and for the different meshes considered, is about an order of magnitude smaller than the discretization error (reference values in sections 5.1.1.2 and 5.1.2.1). These results demonstrate the accuracy and sufficiency of GLL quadrate rules for the computation of the overlap matrix.

We further note a recent numerical analysis~\cite{hughes2010} which investigates the error in the eigenspectrum of second-order linear differential operators due to discretization and reduced-order quadrature. While this analysis was not the main objective of this work, it comprises of results that presents a qualitative understanding of the sufficiency of reduced-order quadrature rules for the Kohn-Sham DFT problem. The results in figure A2 in ~\cite{hughes2010} demonstrate that reduced-order quadratures introduce errors in the higher-end of the spectrum, where $C^0$ finite-elements are anyway no longer accurate even with exact integration and result in spurious optical modes, but are accurate for the lower-end of the eigenspectrum. The ground-state properties in the Kohn-Sham DFT are solely governed by the lower-end of the eigenspectrum, which provides a qualitative explanation for the observed accuracy of the GLL quadrature.

\begin{table}[h]
\caption{Computed ground-state energies of Hydrogen atom by employing GLL quadrature rules for overlap matrix.}
\centering
\begin{tabular}{|c|c|c|c|c|}
  \hline
 Degrees of freedom (DoF) & Ground state  energy ($Ha$)  \\ \hline\hline
  17,713 &  -0.499894312878\\ \hline
  140,257  &  -0.499999964823\\ \hline
  1,117,634& -0.4999999999076\\ \hline
  8,926,245 & -0.50000000000912\\ \hline \hline
  \end{tabular}
\label{tab:hyd}
\end{table}

\begin{figure}[htbp]
\begin{center}
\includegraphics[width=0.5\textwidth]{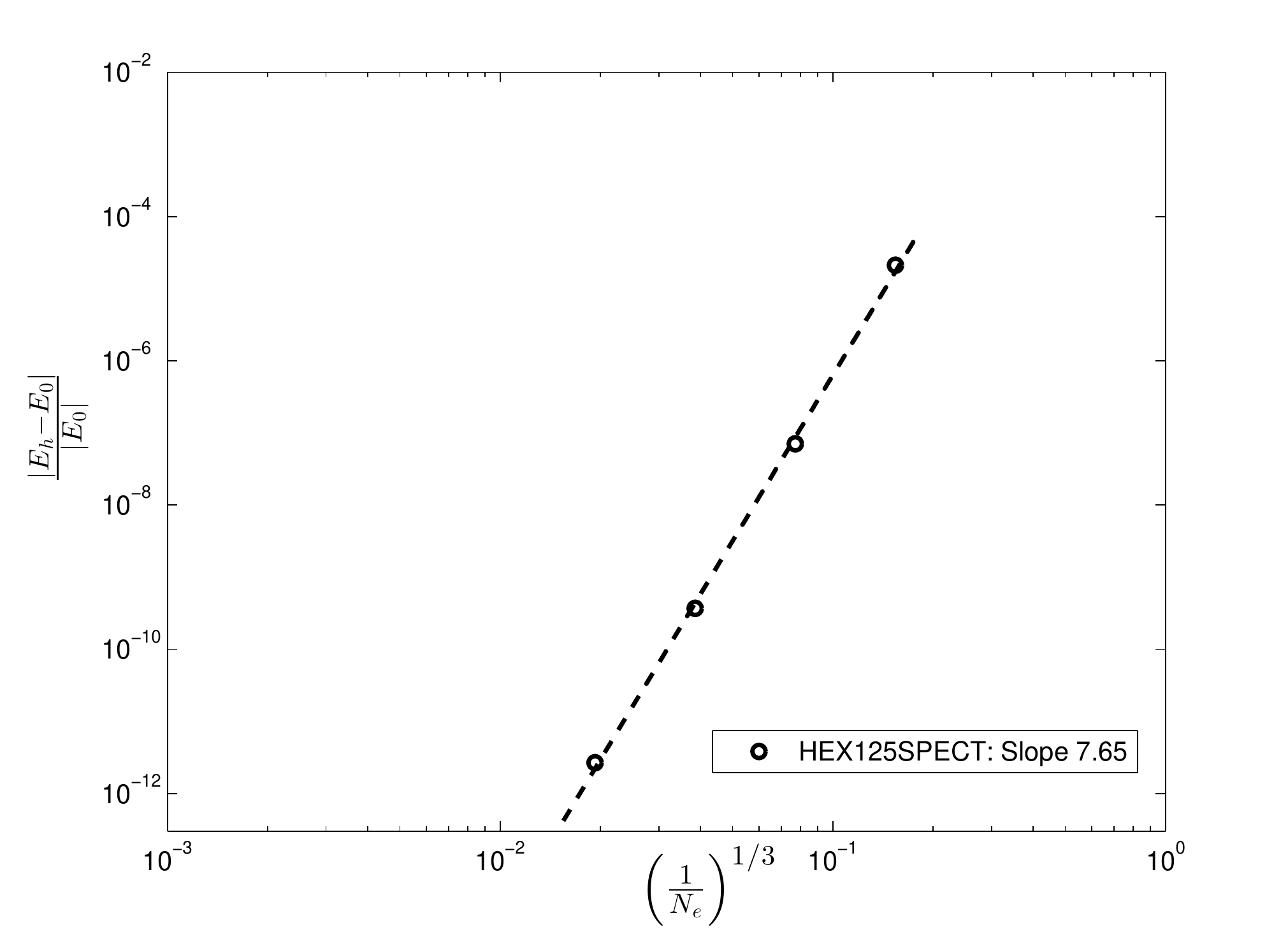}
\captionof{figure}{Convergence of the finite-element approximation for Hydrogen atom using GLL quadrature rule for overlap matrix.}
\label{fig:hydrogenConvgRate}
\end{center}
\end{figure}

\begin{table}[htbp]
\caption{Comparison between Gauss-Lobatto-Legendre (GLL) quadrature rule and Gauss quadrature (GQ) rule }
\centering
\begin{tabular}{|c|c|c|c|c|}
  \hline
Type of System & Element Type & DoF &  Energy/atom (GLL)  &  Energy/atom (GQ) \\ \hline\hline
  Methane & HEX27 & 18,509 & -7.9989600253$Ha$ & -8.0030290831 $Ha$ \\ \hline
  Methane & HEX27 & 317,065  & -8.0215895393$Ha$ & -8.0220114249 $Ha$ \\ \hline
  Methane & HEX125SPECT & 43,289 &-8.0065360952 $Ha$&-8.0044564654$Ha$  \\ \hline
  Methane & HEX125SPECT  & 289,401  & -8.0239636665 $Ha$ & -8.0239659379 $Ha$\\ \hline
  Barium 2x2x2 cluster  & HEX27 & 175,101 & -0.64013198302 $Ha$ & -0.6403673302 $Ha$ \\ \hline
  Barium 2x2x2 cluster  & HEX27 & 2,379,801 & -0.63858359722 $Ha$& -0.6385901453 $Ha$ \\ \hline
  Barium 2x2x2 cluster  & HEX343SPECT & 57,121 & -0.6373331092$Ha$ & -0.6374840072$Ha$  \\ \hline
  Barium 2x2x2 cluster  & HEX343SPECT & 449,473  & -0.6386270069 $Ha$ & -0.6386263119 $Ha$\\ \hline\hline
  \end{tabular}
\label{tab:GLL}
\end{table}

\newpage

\end{document}